\DeclareMathOperator*{\argmin}{arg\,min}
\DeclareMathOperator*{\rightarrowas}{\stackrel{\textup{a.s.}}{\longrightarrow}}
\DeclareMathOperator*{\rightarrowd}{\stackrel{\textup{d}}{\longrightarrow}}
\DeclareMathOperator*{\rightarrowp}{\stackrel{\textup{p}}{\longrightarrow}}
\newcounter{assumptionsA}
\newlist{assumptionsA}{enumerate}{2}
\setlist[assumptionsA]{
	label={\textup{\textbf{A\arabic*}}},
	ref={\textup{\textbf{A\arabic*}}},
	before=\setcounter{assumptionsAi}{\value{assumptionsA}},
	after=\setcounter{assumptionsA}{\value{assumptionsAi}}
	}
\newcounter{assumptionsB}
\newlist{assumptionsB}{enumerate}{2}
\setlist[assumptionsB]{
	label={\textup{\textbf{B\arabic*}}},
	ref={\textup{\textbf{B\arabic*}}},
	before=\setcounter{assumptionsBi}{\value{assumptionsB}},
	after=\setcounter{assumptionsB}{\value{assumptionsBi}}
	}
\newcounter{assumptionsC}
\newlist{assumptionsC}{enumerate}{2}
\setlist[assumptionsC]{
	label={\textup{\textbf{C\arabic*}}},
	ref={\textup{\textbf{C\arabic*}}},
	before=\setcounter{assumptionsCi}{\value{assumptionsC}},
	after=\setcounter{assumptionsC}{\value{assumptionsCi}}
	}
\newcounter{counter}
\newtheorem{definition}[counter]{Definition}
\newtheorem{theorem}[counter]{Theorem}
\newtheorem{lemma}[counter]{Lemma}
\newtheorem{corollary}[counter]{Corollary}
\newtheorem{proposition}[counter]{Proposition}
\DeclareAcronym{mmd}{short = MMD, long = maximum mean discrepancy}
\DeclareAcronym{qmc}{short = QMC, long = quasi Monte Carlo}
\DeclareAcronym{pdf}{short = PDF, long = probability density function}
\DeclareAcronym{ksd}{short = KSD, long = kernel Stein discrepancy}
\DeclareAcronym{mcmc}{short = MCMC, long = Markov chain Monte Carlo}
\DeclareAcronym{smc}{short = SMC, long = sequential Monte Carlo}
\begin{document}

\title{Minimum Kernel Discrepancy Estimators}
\author{Chris. J. Oates \\
\small Newcastle University, UK} 
\maketitle

\begin{abstract}
For two decades, reproducing kernels and their associated discrepancies have facilitated elegant theoretical analyses in the setting of quasi Monte Carlo.
These same tools are now receiving interest in statistics and related fields, as criteria that can be used to select an appropriate statistical model for a given dataset.
The focus of this article is on \emph{minimum kernel discrepancy estimators}, whose use in statistical applications is reviewed, and a general theoretical framework for establishing their asymptotic properties is presented.
\end{abstract}

\section{Introduction}

The abstract problem tackled in quasi Monte Carlo is to construct an empirical distribution $P_n$ that provides an accurate approximation to a given probability distribution $P$ of interest.
To make this problem well-defined a \emph{discrepancy} is required, to precisely quantify the quality of the approximation $P_n$ to $P$.
The mathematical tractability of the discrepancy is an important consideration, both for algorithmic design and for analysis of the error of the approximations $P_n$ that are produced.
In a seminal paper, \citet{hickernell1998generalized} advocated \emph{kernel discrepancies} in this context, explaining how reproducing kernels can be used to construct explicit discrepancies with a range of favourable mathematical properties.
Subsequent researchers have exploited kernel discrepancies to analyse a variety of modern and powerful quasi Monte Carlo methods \citep[see e.g.][]{dick2010digital}, and more recently also Markov chain Monte Carlo methods \citep[see e.g.][]{gorham2017measuring}.

The abstract problem tackled in statistics is to construct a probabilistic model $P$ for a population quantity of interest, given a finite sample from this population with empirical distribution $P_n$.
Thus, on the face of it, the problems tacked in quasi Monte Carlo and in statistics are diametrically opposed.
Nevertheless, an equally valid approach to the statistical problem is to first select a discrepancy and proceed to select a probabilistic model $P_\theta$ from a parametric set such that the discrepancy between $P_n$ and $P_\theta$ is minimised.
It is well-known that maximum likelihood estimation asymptotically minimises the Kullback--Leibler divergence between $P$ and $P_\theta$ \citep{Akaike1973}, and as such the maximum likelihood estimator can be acutely sensitive to the extent to which the model is misspecified, referring to the scenario in which $P$ and $P_\theta$ do not coincide for any value $\theta$ in the parameter set.
Several more robust (i.e. less sensitive, in the sense just described) alternatives to maximum likelihood have been proposed, such as M-estimation \citep{huber1964robust} and the more general minimum distance estimation \citep{Donoho1988}, which remains an active research field \citep{Basu2011,Pardo2018}.
More interestingly, as far as this article is concerned, \emph{minimum kernel discrepancy estimation} has recently been proposed and studied.
The use of kernel discrepancies in the statistical context can be traced back at least to \citet{Song2008}, before receiving considerable attention in the machine learning community, where minimum kernel discrepancy estimation has been used to train machine learning models \citep{dziugaite2015training,Li2015,Sutherland2017}, to calibrate computer simulations \citep{park2016k2,mitrovic2016dr,dellaporta2022robust}, and to facilitate goodness-of-fit testing \citep{key2021composite}.
It seems, to this author, that there could be a useful dialogue between the quasi Monte Carlo and the statistical communities, centring on whether the theoretical insight into kernel discrepancies gleaned in the former can be used to explain the remarkable success in applications achieved by the latter.
This article serves as an introduction to minimum kernel discrepancy estimation, and along the way some opportunities for dialogue between the communities will be highlighted.

In \Cref{sec: mkdes} both kernel discrepancy and of minimum kernel discrepancy estimators are defined.
The use of minimum kernel discrepancy estimators in statistics is described in \Cref{sec: applications}.
A concise set of theoretical results for minimum kernel discrepancy estimation are presented in \Cref{sec: theory}.
Some open questions and possible research directions are suggested in \Cref{sec: summary}.

\section{Kernels, Discrepancies, and Estimators}
\label{sec: mkdes}

\Cref{subsec: setup} introduces the set-up and notation that will be used.
The role of kernel discrepancies in quasi Monte Carlo is reviewed in \Cref{subsec: kernels in QMC}.
Then, minimum kernel discrepancy estimators are defined in \Cref{subsec: mkde}.

\subsection{Set-Up and Notation}
\label{subsec: setup}

Our setting is a measurable space $\mathcal{X}$.
A \emph{kernel} is a function $k : \mathcal{X} \times \mathcal{X} \rightarrow \mathbb{R}$, that is \emph{measurable}, \emph{symmetric} and \emph{positive semi-definite}, meaning respectively that
\begin{itemize}
\item $k(\cdot,x)$ is measurable for all $x \in \mathcal{X}$
\item $k(x,y) = k(y,x)$ for all $x,y \in \mathcal{X}$
\item $\sum_{i=1}^n \sum_{j=1}^n w_i w_j k(x_i,x_j) \geq 0$ for all $w_1,\dots,w_n \in \mathbb{R}$, $x_1,\dots,x_n \in \mathcal{X}$, $n \in \mathbb{N}$ .
\end{itemize}
The \emph{reproducing kernel Hilbert space} associated to a kernel $k$ is a Hilbert space $\mathcal{H}(k)$ of real-valued functions on $\mathcal{X}$, such that
\begin{itemize}
\item $k(\cdot,x) \in \mathcal{H}(k)$ for all $x \in \mathcal{X}$
\item $\langle h , k(\cdot,x) \rangle_{\mathcal{H}(k)} = h(x)$ for all $x \in \mathcal{X}$, $h \in \mathcal{H}(k)$ .
\end{itemize}
The latter requirement is known as the \emph{reproducing property}.
It can be shown that $\mathcal{H}(k)$ exists, that $\mathcal{H}(k)$ is uniquely determined, and that the elements of $\mathcal{H}(k)$ are measurable functions on $\mathcal{X}$ \citep[see Chapter 4 of][]{steinwart2008support}.
Our interest in reproducing kernel Hilbert spaces derives from their suitability for constructing a discrepancy between probability distributions on $\mathcal{X}$, as explained next.

Let $\mathcal{P}_k(\mathcal{X})$ be the set of probability distributions $P$ on $\mathcal{X}$ for which $\mathrm{P} : \mathcal{H}(k) \rightarrow \mathbb{R}$, $\mathrm{P}(h) = \int h \; \mathrm{d}P$, is a bounded linear functional.
Sufficient conditions for $P \in \mathcal{P}_k(\mathcal{X})$ are discussed in \Cref{app: kernel embed}.
If $P \in \mathcal{P}_k(\mathcal{X})$ then there exists a Riesz representer $\mu_k(P) \in \mathcal{H}(k)$, called the \emph{kernel mean element}, such that $\mathrm{P} (h) = \langle h , \mu_k(P) \rangle_{\mathcal{H}(k)}$ for all $h \in \mathcal{H}(k)$.
The \emph{kernel discrepancy} on $\mathcal{P}_k(\mathcal{X})$ is defined as
\begin{align*}
D_k : \mathcal{P}_k(\mathcal{X}) \times \mathcal{P}_k(\mathcal{X}) & \rightarrow [0,\infty) \\
(P,Q) & \mapsto \| \mu_k(P) - \mu_k(Q) \|_{\mathcal{H}(k)} ,
\end{align*}
being the distance between the Riesz representers $\mu_k(P)$ and $\mu_k(Q)$ in $\mathcal{H}(k)$.
It is immediate that $D_k$ is a pseudo-metric on $\mathcal{P}_k(\mathcal{X})$, meaning that $D_k$ satisfies all the requirements of a metric on $\mathcal{P}_k(\mathcal{X})$ except for the \emph{distinguishability} requirement, since in general we may have $D_k(P,Q) = 0$ even if $P$ and $Q$ are distinct.
The pseudo-metric $D_k$ appears in several fields; in machine learning it is called the \emph{maximum mean discrepancy} \citep{gretton2012kernel,muandet2016kernel}, in probability it is called an \emph{integral probability (pseudo-)metric} \citep{muller1997integral}, in statistics it is the discrepancy associated to the \emph{kernel scoring rule} \citep{gneiting2007strictly,dawid2007geometry}, and in quasi Monte Carlo it is called the \emph{worst case integration error} \citep{dick2010digital}.
The popularity of kernel discrepancy is due in large part to a well-known explicit formula that enables it to be computed:

\begin{proposition} \label{prop: explicit MMD}
For $P,Q \in \mathcal{P}_k(\mathcal{X})$,
\begin{align*}
D_k(P,Q) = \sqrt{ \iint k_Q(x,y) \mathrm{d}P(x) \mathrm{d}P(y) } ,
\end{align*}
where
\begin{align}
k_Q(x,y) := k(x,y) - \int k(x,y) \mathrm{d}Q(x) - \int k(x,y) \mathrm{d}Q(y) + \iint k(x,y) \mathrm{d}Q(x) \mathrm{d}Q(y) . \label{eq: ktheta def}
\end{align}
\end{proposition}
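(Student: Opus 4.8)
The plan is to push everything back to inner products in $\mathcal{H}(k)$ using the two characterising properties of the kernel mean element: the Riesz identity $\langle h, \mu_k(R) \rangle_{\mathcal{H}(k)} = \mathrm{R}h = \int h \,\mathrm{d}R$ for every $h \in \mathcal{H}(k)$ and every $R \in \mathcal{P}_k(\mathcal{X})$, and its specialisation $\mu_k(R)(x) = \int k(x,y)\,\mathrm{d}R(y)$, obtained by taking $h = k(\cdot,x)$ and invoking the reproducing property together with the symmetry of $k$. Since $D_k(P,Q)^2 = \| \mu_k(P) - \mu_k(Q) \|_{\mathcal{H}(k)}^2$ by definition, it suffices to show that $\iint k_Q(x,y)\,\mathrm{d}P(x)\,\mathrm{d}P(y)$ equals this squared norm and then take square roots; the non-negativity required to do so is automatic once the double integral has been identified with a squared norm.

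Next I would rewrite the integrand. Applying the specialisation above, the three correction terms in~\textup{(\ref{eq: ktheta def})} are $\int k(x,y)\,\mathrm{d}Q(x) = \mu_k(Q)(y)$, $\int k(x,y)\,\mathrm{d}Q(y) = \mu_k(Q)(x)$, and $\iint k(x,y)\,\mathrm{d}Q(x)\,\mathrm{d}Q(y) = \mathrm{Q}\mu_k(Q) = \| \mu_k(Q) \|_{\mathcal{H}(k)}^2$, so that
\begin{align*}
k_Q(x,y) = k(x,y) - \mu_k(Q)(x) - \mu_k(Q)(y) + \| \mu_k(Q) \|_{\mathcal{H}(k)}^2 = \langle k(\cdot,x) - \mu_k(Q),\, k(\cdot,y) - \mu_k(Q) \rangle_{\mathcal{H}(k)} ,
\end{align*}
the second equality being a direct expansion of the inner product via the reproducing property. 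This exhibits $k_Q$ as a (``$Q$-centred'') kernel and makes transparent why the double integral against $P$ will be non-negative.

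I would then perform the double integration as an iterated integral, which keeps every intermediate object an honest element of $\mathcal{H}(k)$ and hence avoids any appeal to Fubini--Tonelli on $k$ itself; the only integrability input is then $P, Q \in \mathcal{P}_k(\mathcal{X})$, i.e.\ boundedness of $\mathrm{P}$ and $\mathrm{Q}$ on $\mathcal{H}(k)$. For fixed $x$, the map $y \mapsto k_Q(x,y)$ is the element $k(\cdot,x) - \mu_k(Q) \in \mathcal{H}(k)$ plus the constant $\| \mu_k(Q) \|_{\mathcal{H}(k)}^2 - \mu_k(Q)(x)$; applying $\mathrm{P}$ gives $\int k_Q(x,y)\,\mathrm{d}P(y) = \mu_k(P)(x) - \langle \mu_k(P), \mu_k(Q) \rangle_{\mathcal{H}(k)} + \| \mu_k(Q) \|_{\mathcal{H}(k)}^2 - \mu_k(Q)(x)$, which as a function of $x$ is again $\mu_k(P) - \mu_k(Q) \in \mathcal{H}(k)$ plus a constant. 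Applying $\mathrm{P}$ once more and collecting terms yields $\| \mu_k(P) \|_{\mathcal{H}(k)}^2 - 2\langle \mu_k(P), \mu_k(Q) \rangle_{\mathcal{H}(k)} + \| \mu_k(Q) \|_{\mathcal{H}(k)}^2 = \| \mu_k(P) - \mu_k(Q) \|_{\mathcal{H}(k)}^2 = D_k(P,Q)^2$, and a square root completes the proof.

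The only real subtlety — the ``hard part'', such as it is — is the bookkeeping around interchanging integration with the $\mathcal{H}(k)$ inner product. I would handle it exactly as above: never integrate $k(x,y)$ in two variables at once, but integrate one variable at a time after recognising the partial function as an element of $\mathcal{H}(k)$ plus a constant, so that each step is merely an evaluation of the bounded linear functional $\mathrm{P}$. Alternatively, granting the Bochner-integral representation $\mu_k(R) = \int k(\cdot,x)\,\mathrm{d}R(x)$, the identity drops out in one line by pulling both integrals through the continuous bilinear inner product in the centred form of $k_Q$ displayed above.
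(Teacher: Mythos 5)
Your proposal is correct and follows essentially the same route as the paper: both identify $\mu_k(R)(x)=\int k(x,y)\,\mathrm{d}R(y)$ via the reproducing property and match the expansion of $\|\mu_k(P)-\mu_k(Q)\|_{\mathcal{H}(k)}^2$ with the integral of $k_Q$ against $P\times P$. Your only departure is organisational — centring $k_Q$ as $\langle k(\cdot,x)-\mu_k(Q),\,k(\cdot,y)-\mu_k(Q)\rangle_{\mathcal{H}(k)}$ and integrating one variable at a time so that each step is an evaluation of the bounded functional $\mathrm{P}$, where the paper instead invokes the Bochner-integral representation of $\mu_k$ directly — which is a slightly more careful bookkeeping of the same argument, not a different proof.
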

\begin{proof}
The reproducing property, together with the definition of $\mu_k(P)$ as the Riesz representer of $\mathrm{P}$, leads to $\mu_k(P)(y) = \langle \mu_k(P) , k(\cdot,y) \rangle_{\mathcal{H}(k)} = \mathrm{P}(k(\cdot,y)) = \int k(x,y) \mathrm{d}P(x)$.
Thus the kernel mean element is the weak (or \emph{Pettis}) integral $\mu_k(P)(\cdot) = \int k(x,\cdot) \mathrm{d}P(x)$, for all $P \in \mathcal{P}_k(\mathcal{X})$.
In particular, $\langle \mu_k(P) , \mu_k(Q) \rangle_{\mathcal{H}(k)} = \int \mu_k(P)(y) \mathrm{d}Q(y) = \iint k(x,y) \mathrm{d}P(x) \mathrm{d}Q(y)$ for all $P,Q \in \mathcal{P}_k(\mathcal{X})$.
Therefore, recalling that $k$ is symmetric,
\begin{align*}
D_k(P,Q)^2 & = \langle \mu_k(P) , \mu_k(P) \rangle_{\mathcal{H}(k)} - 2 \langle \mu_k(P) , \mu_k(Q) \rangle_{\mathcal{H}(k)} + \langle \mu_k(Q) , \mu_k(Q) \rangle_{\mathcal{H}(k)} \\
& = \iint k(x,y) \mathrm{d}P(x) \mathrm{d}P(x) - 2 \iint k(x,y) \mathrm{d}P(x) \mathrm{d}Q(y) + \iint k(x,y) \mathrm{d}Q(x) \mathrm{d}Q(y) \\
& = \iint k_Q(x,y) \mathrm{d}P(x) \mathrm{d}P(y) ,
\end{align*}
which completes the proof.
\end{proof}

\noindent It can be verified that the map $k_Q : \mathcal{X} \times \mathcal{X} \rightarrow \mathbb{R}$ in \eqref{eq: ktheta def} is a $Q$-dependent kernel, satisfying $\int k_Q(x,y) \mathrm{d}P(x) = 0$ for all $y \in \mathcal{X}$ whenever $Q = P$.
This presentation, which assigns different roles to $P$ and $Q$, is useful in our setting, where the first argument $P$ will be fixed and the second argument $Q$ will be varied.

\subsection{Kernel Discrepancies in Quasi Monte Carlo}
\label{subsec: kernels in QMC}

Classical analysis of quasi Monte Carlo focused on the design of point sets or sequences for which small values of a \emph{figure of merit}, quantifying in a sense the uniformity of the points, is achieved.
For simplicity the domain $\mathcal{X}$ is typically taken to be $[0,1]^d$.
Depending on the nature of the algorithm being considered, and the assumed regularity of the integrands involved, a different figure of merit was typically used.
For example, the van der Corput--Halton and Sobol' sequences were analysed using the \emph{star discrepancy} figure of merit \citep{hlawka1961funktionen}, while for lattice rules a figure of merit called $P_\alpha$ was used \citep{sloan1987lattice}.
A turning point came in \citet{hickernell1998generalized}, who observed that each of the above figures of merit can be recovered as a specific instance of a kernel discrepancy $D_k$ (i.e. corresponding to a particular choice of kernel $k$).
Further, elementary properties of reproducing kernels give rise to a natural cubature error bound, involving simply the product of $D_k$ and the norm of the integrand in $\mathcal{H}(k)$.
Since then, kernel discrepancies have been adopted for the analysis of a variety of modern and powerful quasi Monte Carlo methods.
Indeed, many of the regularities that are exploited by modern quasi-Monte Carlo algorithms can be encoded directly at the level of the kernel.

Two examples in particular clearly illustrate the elegance of the kernel framework.
First, the \emph{dominating mixed smoothness} assumption, concerning the existence of mixed partial derivatives of order up to $s$ in each coordinate\footnote{i.e. additional levels of differentiability in directions that are axis-aligned}, corresponds to a tensor product kernel
\begin{align*}
k(x,y) = \prod_{i=1}^d k_{\text{1D}}(x_i,y_i)
\end{align*}
where $k_{\text{1D}}$ is a kernel reproducing $s$-times differentiable functions on $[0,1]$ \citep[see e.g.][Section 14.6]{dick2010digital}.
Such assumptions have been used to prove arbitrarily fast convergence rates for quasi Monte Carlo based on \emph{higher-order digital nets}, as the order of differentiability $s$ is increased \citep[see e.g.][Theorem 15.21]{dick2010digital}.
Tensor product kernels have the further advantage that computation of the kernel mean element $\mu_k(P)$ reduces to computation of $d$ univariate integrals whenever the components of a $P$-distributed random variable are independent, which is the case for the canonical choice of the uniform distribution on $[0,1]^d$.
Second, the \emph{effective low dimension} assumption, which states that most of the variation in the output of a function occurs when only a small number of the inputs are varied, can be expressed using a weighted kernel
\begin{align}
k(x,y) = \sum_{\mathfrak{u} \subseteq \{1,\dots,d\}} \gamma_{\mathfrak{u}} k_{\mathfrak{u}}(x_{\mathfrak{u}},y_{\mathfrak{u}}) , \label{eq: weighted kernel}
\end{align}
where $x_{\mathfrak{u}} = (x_i : i \in \mathfrak{u})$ and $k_{\mathfrak{u}}$ is a kernel on $[0,1]^{|\mathfrak{u}|}$. 
The weights $\gamma_{\mathfrak{u}} \geq 0$ determine the extent to which variation in the coordinates indexed by $\mathfrak{u}$ is permitted.
The additive form of \eqref{eq: weighted kernel} enables the associated kernel discrepancy to be computed; indeed, $D_k^2 = \sum_{\mathfrak{u}} \gamma_{\mathfrak{u}} D_{k_{\mathfrak{u}}}^2$ follows from \Cref{prop: explicit MMD}.
The specification of suitable weights for the integration task at hand is an important factor in ensuring low cubature error \citep{kuo2003component}.
In an extreme case the \emph{effective dimension}, defined as $\sum_{\mathfrak{u}} \gamma_{\mathfrak{u}}$, may remain finite as $d \rightarrow \infty$ and one can obtain error bounds for quasi Monte Carlo that are dimension-independent \citep{sloan1998quasi,dick2013high}.

The author, in writing this article, is interested in whether the considerable theoretical and practical expertise in kernel discrepancies that has developed in the quasi Monte Carlo community can be brought to bear also on theoretical and practical aspects of minimum kernel discrepancy estimators, which are introduced next.

\subsection{Minimum Kernel Discrepancy Estimators}
\label{subsec: mkde}

Consider data that are assumed to arise as a sequence of independent and identically distributed random variables $(x_n)_{n \in \mathbb{N}}$, with $x_n \sim P$ and $P \in \mathcal{P}_k(\mathcal{X})$.
Let $P_n = \frac{1}{n} \sum_{i=1}^n \delta_{x_i}$ denote the empirical distribution of the first $n$ terms in the dataset.
A \emph{statistical model} is a collection $\{P_\theta\}_{\theta \in \Theta} \subset \mathcal{P}_k(\mathcal{X})$, where the \emph{parameter} $\theta$ takes values in an index set $\Theta$.
\emph{Minimum kernel discrepancy estimators}, if they exist, are defined as
\begin{align*}
\theta_n \in \argmin_{\theta \in \Theta} D_k(P_n , P_\theta) ,
\end{align*}
as illustrated in \Cref{fig:  cartoon}.
The properties of such estimators depend on the choice of kernel $k$ and the statistical model $\{P_\theta\}_{\theta \in \Theta}$, but are independent of how the statistical model is parametrised (in $\theta$).
Under regularity conditions, established in \Cref{sec: theory}, the estimators $\theta_n$ converge (in an appropriate sense) to a population minimiser
\begin{align*}
\theta_\star & \in \argmin_{\theta \in \Theta} D_k(P,P_\theta) ,
\end{align*}
if these exist.
Since in all cases we take $Q = P_\theta$ in \eqref{eq: ktheta def}, we henceforth adopt the shorthand $k_\theta$ for $k_Q$.
Through selection of the kernel $k$, one can trade statistical efficiency with robustness to model misspecification, which is not possible within the classical framework of maximum likelihood; see \Cref{subsec: related}.
However, important aspects of kernel selection remains unsolved, for example it is unclear how to retain statistical efficiency when $\mathcal{X}$ is high-dimensional.
Such open challenges are highlighted in \Cref{sec: summary}.

\begin{figure}[t!]
\centering

\begin{tikzpicture}
\node[inner sep=0pt] (H) at (0,0)      {\includegraphics[angle=90,width = 0.5\textwidth,clip,trim = 2cm 2cm 2cm 0cm]{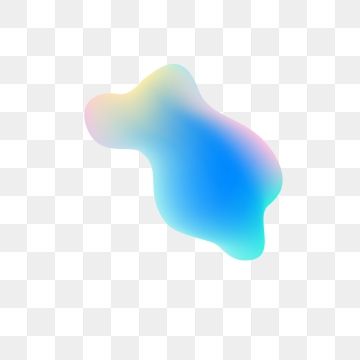}};

\node[] (model) at (1,1) {$\{\mu_k(P_\theta)\}_{\theta \in \Theta}$};

\node[draw,circle,inner sep=2pt,fill] (muPn) at (-3,-1.6) {};
\node[] (muPnlab) at (-3.3,-1.1) {$\mu_k(P_n)$};
\node[draw,circle,inner sep=2pt,fill,label=70:{$\mu_k(P_{\theta_n})$}] (muPtn) at (-2,-1) {};
\node[draw,circle,inner sep=2pt,fill,label=270:{$\mu_k(P)$}] (muP) at (-3.2,-2.5) {};
\node[draw,circle,inner sep=2pt,fill,label=40:{$\mu_k(P_{\theta_\star})$}] (muPt) at (-1.6,-2.2) {};

\draw[|-|,thick] (muPn) -- (muPtn);
\draw[|-|,thick] (muP) -- (muPt);

\node[inner sep=0pt] (d3) at (-7,2.5) {\includegraphics[width = 0.12\textwidth]{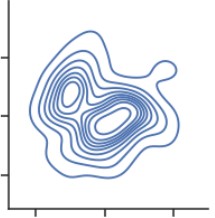}};
\node[inner sep=0pt] (d4) at (-7,0) {\includegraphics[width = 0.12\textwidth]{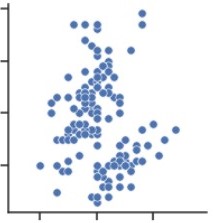}};
\node[inner sep=0pt] (d1) at (-7,-2.5) {\includegraphics[width = 0.12\textwidth]{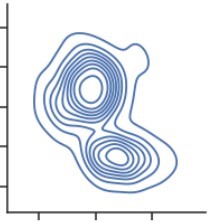}};

\node[inner sep=0pt] at (-8.5,2.5) {$P_{\theta_n}$};
\node[inner sep=0pt] at (-8.5,0) {$P_n$};
\node[inner sep=0pt] at (-8.5,-2.5) {$P$};

\draw[dotted] (d3) to[out=0,in=140] (muPtn);
\draw[dotted] (d4) to[out=0,in=220] (muPn);
\draw[dotted] (d1) to[out=0,in=180] (muP);

\node[inner sep=0pt] at (3.6,-3) {$\mathcal{H}(k)$};

\end{tikzpicture}
\caption{\textit{Minimum kernel discrepancy estimation.} 
The distance, as measured in $\mathcal{H}(k)$, between the kernel mean element $\mu_k(P_\theta)$ associated to the statistical model $P_\theta$ and the kernel mean element $\mu_k(P_n)$ associated to the dataset is minimised. 
Under conditions established in \Cref{sec: theory}, a minimum kernel discrepancy estimator $\theta_n$ converges to a value $\theta_\star$ that minimises the distance to the kernel mean element $\mu_k(P)$ associated to the true data-generating distribution $P$.
 }
 \label{fig:  cartoon}
\end{figure}

\section{Applications}
\label{sec: applications}

This section presents three distinct applications for minimum kernel discrepancy estimation: the generalised method of moments (\Cref{subsec: summary matching}), generative adversarial networks (\Cref{subsec: minimum MMD}), and energy-based models (\Cref{subsec: minimum KSD}).

\subsection{Generalised Method of Moments}
\label{subsec: summary matching}

Let $\phi : \mathcal{X} \rightarrow \mathbb{R}^p$ have components that are integrable with respect to each $\{P_\theta\}_{\theta \in \Theta}$, and let $\phi(P) := \int \phi \mathrm{d}P_\theta$ denote a vector of \emph{summary statistics}, meaning that this vector summarises some of the salient aspects of $P_\theta$.
The \emph{generalised method of moments} estimator, if it exists, is defined as
\begin{align*}
\theta_n \in \argmin_{\theta \in \Theta} \| \phi(P_n) - \phi(P_\theta) \| ,
\end{align*}
being the parameter vector of a statistical model whose summary statistics agree most closely with the empirical summary statistics $\phi(P_n)$ calculated using the dataset \citep{hansen1982large}.
This procedure can be cast as minimum kernel discrepancy estimation using a kernel $k(x,y) = \langle \phi(x) , \phi(y) \rangle$.
For this kernel, $\int k(x,y) \mathrm{d}P_\theta(y) = \langle \phi(x) , \phi(P_\theta) \rangle$ and $\iint k(x,y) \mathrm{d}P_\theta(x) \mathrm{d}P_\theta(y) = \langle \phi(P_\theta) , \phi(P_\theta) \rangle$.
Then $k_\theta(x,y) = \langle \phi(x) , \phi(y) \rangle - \langle \phi(x) , \phi(P_\theta) \rangle - \langle \phi(P_\theta) , \phi(y) \rangle + \langle \phi(P_\theta) , \phi(P_\theta) \rangle$, so that $D_k(P_n,P_\theta)^2 = \langle \phi(P_n) , \phi(P_n) \rangle - \langle \phi(P_n) , \phi(P_\theta) \rangle - \langle \phi(P_\theta) , \phi(P_n) \rangle + \langle \phi(P_\theta) , \phi(P_\theta) \rangle = \|\phi(P_n) - \phi(P_\theta)\|^2$.
In contrast to minimum kernel discrepancy estimation in general, the generalised method of moments is mathematically transparent in the special case where we assume the statistical model $P_\theta$ can be parametrised such that $\theta = \phi(P_\theta)$ and $\Theta = \mathbb{R}^p$.
Indeed, in this case $D_k(P_n,P_\theta) = \|\theta_n - \theta\|$, so the generalised method of moments estimator is 
$$
\theta_n = \frac{1}{n} \sum_{i=1}^n \phi(x_i), 
$$
which converges to $\theta_\star = \phi(P)$ and satisfies the central limit $\sqrt{n}(\theta_n - \theta_\star) \rightarrowd N(0, \mathbb{C}_{X \sim P}[\phi(X)] )$ whenever the associated covariance matrix is well-defined.
This will provide a convenient opportunity to verify, in a concrete setting, the general theoretical results presented in \Cref{subsec: asym norm}.

Though originating as a statistically efficient and computationally favourable alternative to maximum likelihood, it has more recently been observed that, through careful selection of the summary statistics $\phi$, one can usefully trade-off the efficiency and the robustness of the estimator.
The related areas of \emph{approximate Bayesian computation} \citep{beaumont2019approximate} and \emph{Bayesian synthetic likelihood} \citep[e.g.][]{frazier2021robust} contain extensive practical guidance on how summary statistics can be selected.
Indeed, the estimator $\theta_n$ can be recognised as a \emph{maximum a posteriori} estimator in an approxmate Bayesian computation framework when a flat \emph{a priori} distribution is employed.

\subsection{Generative Adversarial Networks}
\label{subsec: minimum MMD}

A \emph{generative model} consists of a measurable space $\mathcal{Y}$, equipped with a probability measure $\mathbb{P}$, and statistical model of the form
\begin{align}
P_\theta = G_\#^\theta \; \mathbb{P} \label{eq: gen model}
\end{align}
where $G^\theta : \mathcal{Y} \rightarrow \mathcal{X}$ is measurable for each $\theta \in \Theta$.
Here $G_\#^\theta \; \mathbb{P}$ denotes the \emph{pushforward} of $\mathbb{P}$ through $G^\theta$, meaning that samples $X \sim P_\theta$ are generated by first sampling $Y \sim \mathbb{P}$ and then setting $X = G^\theta(Y)$.
In the specific case where $G^\theta$ is a (deep) neural network, we call $P_\theta$ a (deep) \emph{neural network generative model}.
It is clear that many rich and interesting statistical models can be constructed in this manner, using arbitrarily large neural networks with a corresponding parameter vector $\theta$ that is arbitrarily high-dimensional.
However, conventional statistical wisdom suggests that such models may not be useful in practice, due to the prohibitively large number of parameters that will need to be estimated.
Maximum likelihood estimation has shown to struggle when real data are used since, despite its flexibility, the statistical model will necessarily be misspecified \citep{theis2016note}.
Alternative strategies have nevertheless been developed to estimate the parameters $\theta$ of such models and -- remarkably -- these have demonstrated a high level of success provided one has access to a sufficiently rich training dataset.
Here we focus on a line of research called \emph{generative adversarial networks} (GANs) \citep{goodfellow2020generative}, and specifically the so-called \emph{integral probability metric} GANs.
Let $P_n$ denote the empirical distribution of the training dataset.
Then a GAN estimator for $\theta$, if one exists, is defined as
\begin{align*}
\theta_n \in \argmin_{\theta \in \Theta} \; \max_{f \in \mathcal{F}} \; \left| \int f \mathrm{d}P_n - \int f \mathrm{d} P_\theta \right| ,
\end{align*}
where $\mathcal{F}$ is a set of test functions such that $\mathcal{F} \subset L^1(P_\theta)$ for each $\theta \in \Theta$.
The choice of $\mathcal{F}$ determines the properties of the GAN estimator.
The GAN nomenclature originates from the case where the elements of $\mathcal{F}$ are also neural networks, so that the \emph{witness function} or \emph{critic} $f$ for which the maximum is realised can be interpreted as an \emph{adversarial} neural network.
Alternative choices of $\mathcal{F}$ lead to so-called  \emph{generative moment matching networks} \citep{dziugaite2015training,Li2015}, \emph{MMD} GANs \citep{li2017mmd}, \emph{Wasserstein} GANs \citep{arjovsky2017wasserstein}, \emph{Mc} GANs \citep{mroueh2017mcgan}, \emph{Fisher} GANs \citep{mroueh2017fisher}, and \emph{Sobolev} GANs \citep{mroueh2018sobolev}.
Here we focus on generative moment matching networks, which take $\mathcal{F}$ to be the unit ball in a reproducing kernel Hilbert space $\mathcal{H}(k)$ of real-valued functions on $\mathcal{X}$; see \Cref{fig: gan}.
It is a standard fact that this procedure can be cast as minimum kernel discrepancy estimation:

\begin{proposition}
\label{prop: GAN is MMD}
The GAN estimator $\theta_n$ is a minimum kernel discrepancy estimator when $\mathcal{F}$ is the set of $f \in \mathcal{H}(k)$ such that $\|f\|_{\mathcal{H}(k)} \leq 1$.
\end{proposition}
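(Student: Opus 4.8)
The plan is to reduce the claim to the classical variational characterisation of the kernel discrepancy, namely that for every $\theta \in \Theta$,
\[
\max_{\substack{f \in \mathcal{H}(k) \\ \|f\|_{\mathcal{H}(k)} \leq 1}} \left| \int f \, \mathrm{d}P_n - \int f \, \mathrm{d}P_\theta \right| = D_k(P_n, P_\theta) .
\]
Granting this identity, the objective function $\theta \mapsto \max_{f \in \mathcal{F}} | \int f \, \mathrm{d}P_n - \int f \, \mathrm{d}P_\theta |$ that the GAN estimator minimises coincides, as a function on $\Theta$, with the objective function $\theta \mapsto D_k(P_n, P_\theta)$ that a minimum kernel discrepancy estimator minimises; hence the two sets of minimisers are identical, and in particular any GAN estimator $\theta_n$ is a minimum kernel discrepancy estimator.

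It then remains to prove the displayed identity, and first I would dispatch the measure-theoretic preliminaries that make both sides well-defined. Since $P_n = \frac{1}{n}\sum_{i=1}^n \delta_{x_i}$ is a finite average of point masses, the reproducing property and Cauchy--Schwarz give $|\mathrm{P}_n(h)| \leq \frac{1}{n}\sum_{i=1}^n |\langle h, k(\cdot, x_i)\rangle_{\mathcal{H}(k)}| \leq \big( \frac{1}{n}\sum_{i=1}^n \sqrt{k(x_i,x_i)} \big) \|h\|_{\mathcal{H}(k)}$, so $P_n \in \mathcal{P}_k(\mathcal{X})$; moreover $P_\theta \in \mathcal{P}_k(\mathcal{X})$ by assumption, so $\mathcal{F} \subset \mathcal{H}(k) \subset L^1(P_\theta)$ and the GAN estimator is indeed well-posed for this choice of $\mathcal{F}$. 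Consequently, exactly as in the proof of \Cref{prop: explicit MMD}, $\int f \, \mathrm{d}P_n = \langle f, \mu_k(P_n)\rangle_{\mathcal{H}(k)}$ and $\int f \, \mathrm{d}P_\theta = \langle f, \mu_k(P_\theta)\rangle_{\mathcal{H}(k)}$ for all $f \in \mathcal{H}(k)$.

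Writing $g := \mu_k(P_n) - \mu_k(P_\theta) \in \mathcal{H}(k)$, linearity gives $\int f \, \mathrm{d}P_n - \int f \, \mathrm{d}P_\theta = \langle f, g\rangle_{\mathcal{H}(k)}$, so Cauchy--Schwarz bounds the left-hand side of the identity by $\sup_{\|f\|_{\mathcal{H}(k)} \leq 1} |\langle f, g\rangle_{\mathcal{H}(k)}| \leq \|g\|_{\mathcal{H}(k)} = D_k(P_n, P_\theta)$. For the reverse inequality, if $g = 0$ both sides vanish; otherwise $f^\star := g / \|g\|_{\mathcal{H}(k)}$ lies in the unit ball of $\mathcal{H}(k)$ and satisfies $\langle f^\star, g\rangle_{\mathcal{H}(k)} = \|g\|_{\mathcal{H}(k)} = D_k(P_n, P_\theta)$, so the supremum is in fact attained and equals $D_k(P_n, P_\theta)$. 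This establishes the identity and hence the proposition. There is no substantive obstacle --- the argument is the standard Riesz-duality characterisation of the maximum mean discrepancy --- and the only points that require a little care are the verification that $P_n \in \mathcal{P}_k(\mathcal{X})$ and $\mathcal{F} \subset L^1(P_\theta)$ so that both objectives are defined, together with the observation that the two objective functions agree on all of $\Theta$, which is what makes the argmin sets literally coincide rather than merely share a common minimum value.
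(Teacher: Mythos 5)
Your proof is correct and follows essentially the same route as the paper's: identify $\int f\,\mathrm{d}P_n - \int f\,\mathrm{d}P_\theta$ with $\langle f, \mu_k(P_n)-\mu_k(P_\theta)\rangle_{\mathcal{H}(k)}$, bound via Cauchy--Schwarz, and exhibit the normalised difference of mean embeddings as the witness attaining equality. Your additional checks that $P_n \in \mathcal{P}_k(\mathcal{X})$ and $\mathcal{F} \subset L^1(P_\theta)$ are sensible housekeeping that the paper leaves implicit, but they do not constitute a different argument.
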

\begin{proof}
For $f \in \mathcal{F}$, using Cauchy--Schwarz,
\begin{align*}
\left| \int f \mathrm{d}P_n - \int f \mathrm{d} P_\theta \right| & = \left| \langle f , \mu_k(P_n) - \mu_k(P_\theta) \rangle \right|  \leq \| f \|_{\mathcal{H}(k)} \|\mu_k(P_n) - \mu_k(P_\theta) \|_{\mathcal{H}(k)} 
\end{align*}
which shows that
\begin{align*}
0 \leq \max_{f \in \mathcal{F}} \; \int f \mathrm{d}P_n - \int f \mathrm{d} P_\theta \leq D_k(P_n,P_\theta) .
\end{align*}
If $D_k(P_n,P_\theta) = 0$ then we must have equality, so suppose $D_k(P_n,P_\theta) \neq 0$.
Then one can verify that $f = [\mu_k(P_n) - \mu_k(P_\theta)] / D_k(P_n,P_\theta)$ is the witness function for which equality is realised.
\end{proof}

This class of GANs is appealing in the sense that the optimisation over $\mathcal{F}$ can be explicitly solved, and the solution is the kernel discrepancy $D_k(P_n,P_\theta)$.
Although the kernel mean element $\mu_k(P_\theta)$ is not closed-form in general, it can be consistently approximated using an empirical approximation to $P_\theta$; this is discussed further in \Cref{sec: summary}.
The performance of these GANs depends on the choice of kernel $k$, and methods have been proposed to flexibly \emph{learn} a suitable $k$ based on the training dataset \citep{li2017mmd}.
These methods have demonstrated remarkable performance on machine learning benchmarks, out-performing other forms of GAN \citep{binkowski2018demystifying}.

\begin{figure}[t!]
\centering

\begin{tikzpicture}
\node[inner sep=0pt] (cont) at (0,0)      {\includegraphics[angle=90,width = 0.3\textwidth]{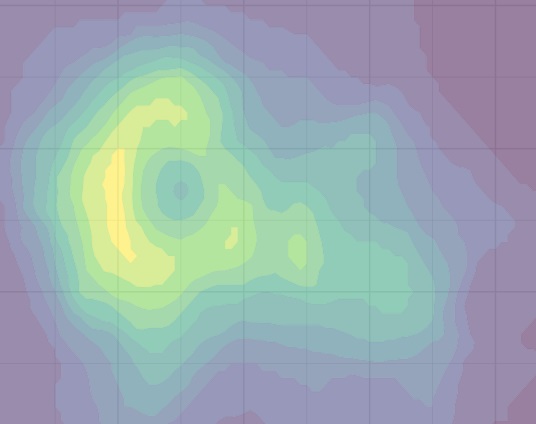}};

\node[] (model) at (-2,2.8) {$\Theta$};

\node[inner sep=0pt] (bad) at (7,1.5) {\includegraphics[width = 0.2\textwidth]{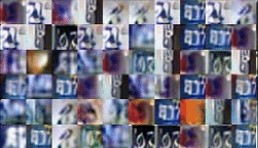}};
\node[inner sep=0pt] (good) at (7,-1.5) {\includegraphics[width = 0.2\textwidth]{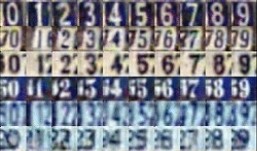}};

\node[draw,circle,inner sep=2pt,fill,label=225:{$\theta$}] (t) at (0.9,1.5) {};
\node[draw,circle,inner sep=2pt,fill] (tn) at (0,-1.8) {};

\draw[dotted] (tn) to[out=15,in=180] (good);
\draw[dotted] (t) to[out=45,in=180] (bad);

\draw[rotate around={10:(0,-1.8)},thick] (0,-1.8) ellipse (0.7cm and 0.2cm);
\draw[dashed,rotate around={10:(0,-1.8)}] (0,-1.8) ellipse (1.4cm and 0.4cm);

\node[circle,fill=white,inner sep=1pt,opacity=.4,text opacity=1] (tnlab) at (0.5,-2.2) {$\theta_n$};

\node[] (C) at (-0.8,-1.3) {$C_n^{0.95}$};

\node[] (Pt) at (4.8,2.2) {$P_\theta$};
\node[] (Ptn) at (4.8,-0.8) {$P_{\theta_n}$};

\end{tikzpicture}
\caption{\textit{Minimum kernel discrepancy estimation for generative models.} The parameters $\theta$ of sophisticated generative models may be estimated by minimising the kernel discrepancy between the generative model $P_\theta$ and the empirical distribution of the dataset.
Confidence sets $C_n^\gamma$ for $\theta_\star$ can be constructed to ensure the notional coverage of $100\gamma\%$ is asymptotically correct; see \Cref{sec: theory}. }
\label{fig: gan}
\end{figure}

\subsection{Energy-Based Models}
\label{subsec: minimum KSD}

An \emph{energy-based model} on $\mathcal{X} = \mathbb{R}^d$ is a statistical model $P_\theta$ with \ac{pdf} 
\begin{align*}
p_\theta(x) = \frac{\exp(- E_\theta(x))}{Z(\theta)}
\end{align*}
where $E_\theta : \mathcal{X} \rightarrow \mathbb{R}$ is a flexibly parametrised \emph{potential} such that $x \mapsto \exp(-E_\theta(x))$ is integrable and $Z(\theta) := \int \exp(-E_\theta(x)) \mathrm{d}x > 0$.
Energy-based models arise both in statistics, where they are referred to as an instance of \emph{intractable likelihood} \citep{lyne2015russian}, and in machine learning, where they provide an alternative to GANs that can be more convenient when an explicit \ac{pdf} is required \citep{lecun2006tutorial}.
Similarly to GANs, maximum likelihood estimation has been shown to struggle in this setting, where, despite its flexibility, the statistical model will necessarily be misspecified.
However, there is an additional barrier to application of minimum kernel discrepancy estimation to energy-based models; the challenge of approximating the kernel mean element $\mu_k(P_\theta)$.
Indeed, due to the intractable normalisation constant $Z(\theta)$, sampling from an energy-based model is not straight-forward and an algorithm such as \ac{mcmc} is required to approximately sample from $P_\theta$ for each value of $\theta \in \Theta$.
This can entail a prohibitive computational cost when attempting optimisation over $\Theta$ \citep{song2021train}.

To address this computational challenge, we can employ a class of kernels $k$ introduced in \citet{oates2017control}.
These kernels $k$ solve the computational problem by ensuring that $\mu_k(P_\theta)$ is equal to 0 in $\mathcal{H}(k)$, and thus $\mu_k(P_\theta)$ can be trivially computed.
To achieve this, the kernel $k$ must be allowed to depend on $\theta$, and in what follows we write $k(x,y ; \theta)$ to make this dependence explicit.
Before describing the kernels with this property, we first introduce some notation and assumptions.
Let $\nabla f = (\partial_{x_1}f,\dots,\partial_{x_d}f)^\top$ denote the gradient of a differentiable function $f : \mathbb{R}^d \rightarrow \mathbb{R}$ and let $\nabla \cdot f = \partial_{x_1}f_1 + \dots + \partial_{x_d}f_d$ denote the divergence of a differentiable function $f : \mathbb{R}^d \rightarrow \mathbb{R}^d$.
The presentation in the remainder of this section follows \citet{barp2022targeted}:

\begin{proposition} \label{lem: Stein op int by parts}
Let $g : \mathbb{R}^d \rightarrow \mathbb{R}^d$ satisfy $g, \mathcal{A}_\theta g \in L^1(P_\theta)$ where
$$
(\mathcal{A}_\theta g)(x) := (\nabla \cdot g)(x) - \langle g(x) , (\nabla E_\theta)(x) \rangle .
$$
Then $\int \mathcal{A}_\theta g \; \mathrm{d}P_\theta = 0$.
\end{proposition}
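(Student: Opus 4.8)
The plan is to recognise that the Langevin--Stein operator $\mathcal{A}_\theta$, once weighted by the model density $p_\theta$, is nothing but a divergence; the claimed identity then becomes an instance of the divergence theorem on $\mathbb{R}^d$, with the two $L^1(P_\theta)$ hypotheses playing the role usually played by an explicit growth or decay condition on $g$. First I would record the elementary identity $\nabla p_\theta = -(\nabla E_\theta)\,p_\theta$, obtained by differentiating $p_\theta(x) = \exp(-E_\theta(x))/Z(\theta)$ (this uses continuous differentiability of $E_\theta$, which is implicit in the statement since $\nabla E_\theta$ appears). Applying the product rule for the divergence of the vector field $x \mapsto p_\theta(x)\,g(x)$ then gives
\[
\nabla \cdot (p_\theta g) \;=\; p_\theta\,(\nabla \cdot g) + \langle \nabla p_\theta , g \rangle \;=\; p_\theta\,(\nabla \cdot g) - p_\theta\, \langle g , \nabla E_\theta \rangle \;=\; p_\theta \cdot (\mathcal{A}_\theta g) .
\]
Hence $\int \mathcal{A}_\theta g \,\mathrm{d}P_\theta = \int_{\mathbb{R}^d} \nabla \cdot (p_\theta g)(x)\,\mathrm{d}x$, and the hypotheses translate into $h := p_\theta g \in L^1(\mathbb{R}^d;\mathbb{R}^d)$ (from $g \in L^1(P_\theta)$) and $\nabla \cdot h \in L^1(\mathbb{R}^d)$ (from $\mathcal{A}_\theta g \in L^1(P_\theta)$). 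So it remains to prove the purely analytic fact: a $C^1$ vector field $h$ on $\mathbb{R}^d$ with $h \in L^1$ and $\nabla \cdot h \in L^1$ satisfies $\int_{\mathbb{R}^d} \nabla \cdot h = 0$.

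For that, I would apply the classical divergence theorem on the ball $B_R$ of radius $R$, giving $\int_{B_R} \nabla \cdot h = \int_{\partial B_R} \langle h , n \rangle\,\mathrm{d}S$. As $R \to \infty$ the left-hand side converges to $\int_{\mathbb{R}^d} \nabla \cdot h$ by dominated convergence, since $\lvert \mathbbm{1}_{B_R}\,\nabla\cdot h\rvert \leq \lvert \nabla \cdot h\rvert \in L^1$. For the right-hand side, the polar (co-area) decomposition gives $\int_0^\infty \big( \int_{\partial B_R} \lvert h\rvert\,\mathrm{d}S \big)\,\mathrm{d}R = \lVert h \rVert_{L^1(\mathbb{R}^d)} < \infty$, which forces $\liminf_{R\to\infty} \int_{\partial B_R} \lvert h\rvert\,\mathrm{d}S = 0$ (otherwise the radial integral would diverge). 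Choosing a sequence $R_k \to \infty$ realising this $\liminf$, the boundary flux vanishes along $(R_k)$, and the identity $\int_{\mathbb{R}^d} \nabla \cdot h = 0$ follows.

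The main obstacle is essentially bookkeeping about regularity and the deliberate absence of a decay condition: the statement asks only for $g, \mathcal{A}_\theta g \in L^1(P_\theta)$, with differentiability of $g$ (needed for $\nabla\cdot g$ to be meaningful) and of $E_\theta$ left implicit, and imposes nothing on the pointwise behaviour of $p_\theta g$ at infinity. The whole point is to avoid such a decay hypothesis, and the one place where care is genuinely required is showing the boundary flux vanishes \emph{along a subsequence}, which the joint integrability of $h$ and $\nabla \cdot h$ supplies exactly. If one prefers to sidestep the co-area argument, an equivalent route is to introduce a smooth radial cutoff $\chi_R$ equal to $1$ on $B_R$ and supported in $B_{2R}$ with $\lvert \nabla \chi_R\rvert \lesssim 1/R$, integrate the compactly supported field $\nabla \cdot(\chi_R h) = \chi_R\,\nabla \cdot h + \langle \nabla \chi_R , h \rangle$ to zero, and let $R \to \infty$, using $\nabla\cdot h \in L^1$ for the first term and $h \in L^1$ together with $\lvert\nabla\chi_R\rvert \lesssim 1/R$ for the second; a preliminary mollification of $g$ handles the case where $g$ is only weakly differentiable.
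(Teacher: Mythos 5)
Your proof is correct, and the reduction at its heart --- rewriting $p_\theta\,(\mathcal{A}_\theta g) = \nabla\cdot(p_\theta g)$ and then establishing the purely analytic fact that a $C^1$ vector field $h$ with $h,\nabla\cdot h\in L^1(\mathbb{R}^d)$ satisfies $\int\nabla\cdot h=0$ --- is exactly the paper's. Where you differ is in how that analytic fact is proved: you apply the divergence theorem on balls $B_R$ and kill the boundary flux along a subsequence $R_k$ extracted from the coarea identity $\int_0^\infty\bigl(\int_{\partial B_R}|h|\,\mathrm{d}S\bigr)\mathrm{d}R=\|h\|_{L^1}<\infty$, whereas the paper multiplies by a compactly supported cutoff $\varphi_n$ with $\|\nabla\varphi_n\|<n^{-1}$, integrates $\nabla\cdot(\varphi_n h)$ to zero, and lets the gradient term vanish because $h\in L^1$ --- which is precisely the ``equivalent route'' you sketch in your closing paragraph. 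The two arguments are interchangeable here: the cutoff version is marginally cleaner since it avoids surface-measure bookkeeping and the passage to a subsequence, while your ball version makes more vivid exactly how the $L^1$ hypothesis on $h$ is consumed (it forces the flux through large spheres to vanish along some sequence of radii). Your remark about mollifying to cover merely weakly differentiable $g$ goes beyond what the paper needs, but is harmless.
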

\begin{proof}
Let $\varphi_n : \mathbb{R}^d \rightarrow \mathbb{R}$ be a compactly supported function with $\varphi_n(x) = 1$ for $\|x\| \leq n$ and $\|\nabla \varphi_n\| < n^{-1}$, for each $n \in \mathbb{N}$.
From the divergence theorem,
\begin{align*}
0 = \int (\nabla \cdot (\varphi_n p_\theta g))(x) \; \mathrm{d}x 
= \int \langle \nabla \varphi_n(x) , (p_\theta g)(x) \rangle \; \mathrm{d}x + \int \varphi_n(x) (\nabla \cdot (p_\theta g))(x) \; \mathrm{d}x .
\end{align*}
Since $\varphi_n \rightarrow 1$ pointwise and $\nabla \cdot (p_\theta g) \in L^1(\mathbb{R}^d)$, from the dominated convergence theorem 
\begin{align*}
\int \varphi_n(x) (\nabla \cdot (p_\theta g))(x) \; \mathrm{d}x \rightarrow \int (\nabla \cdot (p_\theta g))(x) \; \mathrm{d}x .
\end{align*}
On the other hand, since $p_\theta g \in L^1(\mathbb{R}^d)$,
\begin{align*}
\left| \int \langle \nabla \varphi_n(x) , (p_\theta g)(x) \rangle \; \mathrm{d}x \right| \leq \|\nabla \varphi_n\| \int \|(p_\theta g)(x)\| \; \mathrm{d}x \rightarrow 0 .
\end{align*}
Thus we have shown that
\begin{align*}
\int \mathcal{A}_\theta g \; \mathrm{d}P_\theta 
 = \int \frac{1}{p_\theta} \nabla \cdot (p_\theta g) \; \mathrm{d}P_\theta
 = \int (\nabla \cdot (p_\theta g))(x) \; \mathrm{d}x = 0,
\end{align*}
as claimed.
\end{proof}

\begin{proposition} \label{lem: Stein kernel}
Let $c : \mathbb{R}^d \times \mathbb{R}^d \rightarrow \mathbb{R}$ be a kernel with $c(\cdot,y), \nabla_y c(\cdot,y), \mathcal{A}_\theta c(\cdot,y) \mathrm{1}_d, \mathcal{A}_\theta \nabla_y c(\cdot,y) \in L^1(P_\theta)$ for each $y \in \mathbb{R}^d$.
Then
\begin{align*}
k(x,y ; \theta) 
& = \nabla_x \cdot \nabla_y c(x,y) - \langle \nabla_x E_\theta(x) , \nabla_y c(x,y) \rangle \\
& \qquad - \langle \nabla_y E_\theta(y) , \nabla_x c(x,y) \rangle + \langle \nabla_x E_\theta(x) , \nabla_y E_\theta(y) \rangle c(x,y) 
\end{align*}
is a $\theta$-dependent kernel with $\mu_k(P_\theta) = 0$ for all $\theta \in \Theta$.
\end{proposition}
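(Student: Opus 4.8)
I would split the proof into three parts: that $k(\cdot,\cdot;\theta)$ is a kernel, that $P_\theta\in\mathcal P_{k(\cdot,\cdot;\theta)}(\mathbb R^d)$ so that $\mu_k(P_\theta)$ is defined, and that $\mu_k(P_\theta)=0$. The organising observation, which I would establish first, is the algebraic identity
\[ k(\cdot,y;\theta)=\mathcal A_\theta g_y,\qquad g_y(x):=\nabla_y c(x,y)-c(x,y)\,\nabla_y E_\theta(y), \]
holding for each fixed $y\in\mathbb R^d$, where on the left $k(\cdot,y;\theta)$ is regarded as a scalar function of $x$ and on the right $g_y$ is the vector field $x\mapsto g_y(x)$. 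This is checked by expanding $(\nabla_x\cdot g_y)(x)-\langle g_y(x),\nabla_x E_\theta(x)\rangle$ with the product rule and matching terms with the displayed formula for $k$.

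For $\mu_k(P_\theta)=0$ I would fix $y$ and apply \Cref{lem: Stein op int by parts} to $g_y$. The hypotheses $c(\cdot,y),\nabla_y c(\cdot,y)\in L^1(P_\theta)$ give $g_y\in L^1(P_\theta)$, and, by linearity of $\mathcal A_\theta$, the hypotheses $\mathcal A_\theta c(\cdot,y)\mathrm{1}_d$ and $\mathcal A_\theta\nabla_y c(\cdot,y)\in L^1(P_\theta)$ give $\mathcal A_\theta g_y\in L^1(P_\theta)$; hence $\int k(x,y;\theta)\,\mathrm dP_\theta(x)=\int(\mathcal A_\theta g_y)(x)\,\mathrm dP_\theta(x)=0$. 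Thus the function $y\mapsto\int k(x,y;\theta)\,\mathrm dP_\theta(x)$ vanishes identically; granted $P_\theta\in\mathcal P_{k(\cdot,\cdot;\theta)}(\mathbb R^d)$ (guaranteed by the same integrability hypotheses together with the sufficient conditions in \Cref{app: kernel embed}), the Riesz-representer computation from the proof of \Cref{prop: explicit MMD} identifies this function with $\mu_k(P_\theta)$, so $\mu_k(P_\theta)$ is the zero element of $\mathcal H(k)$.

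It remains to verify that $k(\cdot,\cdot;\theta)$ is a kernel. Measurability is immediate from the explicit formula under the differentiability assumptions on $c$ and $E_\theta$ implicit in the statement. For symmetry and positive semi-definiteness I would exhibit $k(\cdot,\cdot;\theta)$ as a Gram kernel: by the derivative reproducing property one has $\partial_{x_j}c(\cdot,x)\in\mathcal H(c)$ with $\langle h,\partial_{x_j}c(\cdot,x)\rangle_{\mathcal H(c)}=\partial_j h(x)$, so one may define the $\mathcal H(c)^d$-valued feature map $\Psi_\theta(x)$ whose $j$-th coordinate is $\partial_{x_j}c(\cdot,x)-\partial_j E_\theta(x)\,c(\cdot,x)$; note $\Psi_\theta(y)=g_y$. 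Expanding $\langle\Psi_\theta(x),\Psi_\theta(y)\rangle_{\mathcal H(c)^d}$ and simplifying each term via the (derivative) reproducing property — using $\langle\partial_{x_j}c(\cdot,x),\partial_{y_j}c(\cdot,y)\rangle_{\mathcal H(c)}=\partial_{x_j}\partial_{y_j}c(x,y)$, $\langle c(\cdot,x),\partial_{y_j}c(\cdot,y)\rangle_{\mathcal H(c)}=\partial_{y_j}c(x,y)$, and $\langle c(\cdot,x),c(\cdot,y)\rangle_{\mathcal H(c)}=c(x,y)$ — recovers exactly the displayed formula for $k(x,y;\theta)$, so $k(x,y;\theta)=\langle\Psi_\theta(x),\Psi_\theta(y)\rangle_{\mathcal H(c)^d}$, which is symmetric and positive semi-definite.

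The main obstacle is the differentiability bookkeeping in this last step: one must justify that $\partial_{x_j}c(\cdot,x)\in\mathcal H(c)$ and that the $\mathcal H(c)$ inner product may be differentiated through — that is, that $\Psi_\theta$ genuinely takes values in $\mathcal H(c)^d$ — which requires $c$ to be continuously differentiable (indeed of class $C^{1,1}$ or $C^2$, so that $\nabla_x\cdot\nabla_y c$ is meaningful) and $\nabla E_\theta$ to be locally bounded, both part of the standing smoothness expected of an energy-based model. Everything else reduces to \Cref{lem: Stein op int by parts}, linearity, and the computation already carried out for \Cref{prop: explicit MMD}.
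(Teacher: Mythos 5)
Your argument for $\mu_k(P_\theta)=0$ is exactly the paper's: the same Stein-operator factorisation $k(\cdot,y;\theta)=\mathcal{A}_\theta g_y$ with $g_y(x)=\nabla_y c(x,y)-c(x,y)\nabla_y E_\theta(y)$ (the paper writes $+\,c(x,y)\nabla_y\log p_\theta(y)$, which is the same thing since $\nabla\log p_\theta=-\nabla E_\theta$), followed by an appeal to \Cref{lem: Stein op int by parts}; your accounting of which hypotheses give $g_y\in L^1(P_\theta)$ and which give $\mathcal{A}_\theta g_y\in L^1(P_\theta)$ is if anything slightly more explicit than the paper's ``under our assumptions''. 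Where you genuinely diverge is the claim that $k(\cdot,\cdot;\theta)$ is a kernel: the paper simply cites Theorem 2.6 of Barp et al.\ (2022), whereas you supply the underlying feature-map argument, exhibiting $k(x,y;\theta)=\langle\Psi_\theta(x),\Psi_\theta(y)\rangle_{\mathcal{H}(c)^d}$ with $\Psi_\theta(x)_j=\partial_{x_j}c(\cdot,x)-\partial_j E_\theta(x)\,c(\cdot,x)$, and the expansion you sketch does recover the displayed formula term by term. This buys self-containedness (and the pleasant identification $\Psi_\theta(y)=g_y$, which makes the vanishing of the mean element transparent as $\mu_k(P_\theta)=\int\Psi_\theta\,\mathrm{d}P_\theta$ paired against $\Psi_\theta(y)$), at the price of the extra hypothesis you correctly flag: the derivative reproducing property $\partial_{x_j}c(\cdot,x)\in\mathcal{H}(c)$ requires $c$ to have continuous mixed second partials, a smoothness condition that is implicit in the statement (the formula already invokes $\nabla_x\cdot\nabla_y c$) but is not among the listed $L^1$ hypotheses; the cited external theorem handles this bookkeeping in greater generality. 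Both routes are sound, and yours is the standard proof that sits behind the citation.
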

\begin{proof}
For the proof that $k(\cdot,\cdot;\theta)$ is indeed a kernel we refer to \citet[][Theorem 2.6]{barp2022targeted}.
For the final part, note that
\begin{align*}
k(x,y ; \theta) = \mathcal{A}_\theta g(x), \qquad g(x) = \nabla_y c(x,y) + c(x,y) \nabla_y \log p_\theta(y)
\end{align*}
where, under our assumptions, $g , \mathcal{A}_\theta g \in L^1(P_\theta)$.
Then we may apply \Cref{lem: Stein op int by parts} to deduce that $\mu_k(P_\theta)(y) = \int k(x,y ; \theta) \mathrm{d}P_\theta(x) = \int \mathcal{A}_\theta g(x) \mathrm{d}P_\theta(x) = 0$ for all $y \in \mathcal{X}$, and thus $\mu_k(P_\theta) = 0$, as claimed.
\end{proof}

\noindent This construction enables minimum kernel discrepancy estimation to be applied to energy-based models, with positive initial result reported in a range of statistical applications \citep{barp2019minimum,Matsubara2022}.
Indeed, $k(\cdot,\cdot;\theta)$ can be evaluated without the normalisation constant $Z(\theta)$, and any kernel for which the conclusion of \Cref{lem: Stein kernel} holds satisfies $k_\theta(\cdot,\cdot) = k(\cdot,\cdot ; \theta)$.
The discrepancy associated to $k(\cdot,\cdot;\theta)$ is called a \emph{kernel Stein discrepancy} \citep{liu2016kernelized,chwialkowski2016kernel,gorham2017measuring}; these kernelise the \emph{score-matching divergence} of \cite{Hyvaerinen2005} and have recently been applied to a variety of other tasks in statistics and machine learning \citep{anastasiou2021stein}.

\smallskip

This completes our short tour of how minimum kernel discrepancy methods can be used.
Next we turn our attention to the asymptotic properties of these estimators, aiming insofar as possible for a general framework.

\section{Asymptotic Theory}
\label{sec: theory}

The aim of this section is to present general asymptotic theory that is applicable to each of the applications we have just discussed.
If the kernel is $\theta$-independent, the minimum kernel discrepancy estimator can be recognised as both an M-estimator and a minimum scoring rule estimator, so established asymptotic theory can be applied \citep{van2000asymptotic,Dawid2016}.
However, this identification does not hold when the kernel is $\theta$-dependent, and bespoke arguments are needed.
To this end, conditions for strong consistency of the minimum kernel discrepancy estimator are established for a $\theta$-independent kernel in \Cref{subsec: strong} and extended to the case of a $\theta$-dependent kernel in \Cref{subsec: dependent case}.
Conditions for asymptotic normality in the case of a possibly $\theta$-dependent kernel are established in \Cref{subsec: asym norm}, and the strength of these conditions is examined in \Cref{subsec: discuss assumptions}.
The presentation is intended to be self-contained; we relate our results to existing literature in \Cref{subsec: related}.

To simplify the presentation, it will be assumed throughout that $P \in \mathcal{P}_k(\mathcal{X})$ and $\{P_\theta\}_{\theta \in \Theta} \subset \mathcal{P}_k(\mathcal{X})$, so that the kernel discrepancy is well-defined.
It will also be assumed that the statistical model is parameterised in such a way that $\theta_\star$ and $\theta_n$ actually exist, in the latter case for $n$ sufficiently large (but in neither case is uniqueness assumed).
Convergence in distribution, convergence in probability, and almost sure convergence will be respectively denoted $\rightarrowd$, $\rightarrowp$ and $\rightarrowas$.

\subsection{Strong Consistency}
\label{subsec: strong}

In this section we consider a topological space $\Theta$; no additional structure on $\Theta$ is required until \Cref{subsec: dependent case}.
Our first task is to find weak sufficient conditions under which a.s. every sequence $(\theta_n)_{n \in \mathbb{N}}$ of minimum kernel discrepancy estimators satisfies
\begin{align}
\{\text{accumulation points of} \; (\theta_n)_{n \in \mathbb{N}} \} \subseteq \argmin_{\theta \in \Theta} D_k(P,P_\theta) . \label{eq: result}
\end{align}
That is, there is a probability 1 set on which all accumulation points of all sequences $\theta_n$ of minimum kernel discrepancy estimators are minimisers $\theta_\star$ of the kernel discrepancy between the distribution $P$, from which the data are sampled, and the statistical model $P_\theta$.
To begin with we consider the case where the kernel $k$ does not depend on the parameter $\theta$ and standard arguments can be used; the case of a $\theta$-dependent kernel is treated in \Cref{subsec: dependent case}.

\begin{lemma}[Strong consistency of the kernel mean element] \label{lem: slln}
Assume that
\begin{assumptionsA}
\item $\mathcal{H}(k)$ is separable. \label{ass: Hk separable}
\item $\int \sqrt{k(x,x)} \; \mathrm{d}P(x) < \infty$.  \label{ass: k is l1}
\end{assumptionsA}
Then $\mu_k(P_n) \rightarrowas \mu_k(P)$.
\end{lemma}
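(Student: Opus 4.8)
The plan is to recognise $\mu_k(P_n)$ as an empirical average of independent, identically distributed Hilbert-space-valued random elements and to invoke a strong law of large numbers in $\mathcal{H}(k)$. From the formula $\mu_k(P)(\cdot) = \int k(x,\cdot)\,\mathrm{d}P(x)$ derived in the proof of \Cref{prop: explicit MMD}, together with $\mu_k(\delta_x) = k(\cdot,x)$ and linearity of $P \mapsto \mu_k(P)$ on convex combinations, one gets $\mu_k(P_n) = \frac{1}{n}\sum_{i=1}^n k(\cdot,x_i)$. Thus it suffices to show that $\frac{1}{n}\sum_{i=1}^n Y_i \rightarrowas \mathbb{E}[Y_1]$, where $Y_i := k(\cdot,x_i)$ are i.i.d.\ random elements of $\mathcal{H}(k)$, and to identify the limit $\mathbb{E}[Y_1]$, understood as a Bochner integral, with $\mu_k(P)$.

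First I would check that $Y_1 = k(\cdot,x_1)$ is a genuine $\mathcal{H}(k)$-valued random element, i.e.\ Bochner (strongly) measurable. Since $k(\cdot,x) \in \mathcal{H}(k)$ for every $x$, and since for fixed $h \in \mathcal{H}(k)$ the map $x \mapsto \langle h , k(\cdot,x) \rangle_{\mathcal{H}(k)} = h(x)$ is measurable (the elements of $\mathcal{H}(k)$ are measurable functions on $\mathcal{X}$), the map $x \mapsto k(\cdot,x)$ is weakly measurable; by \Cref{ass: Hk separable} and the Pettis measurability theorem it is therefore strongly measurable. Next I would verify the integrability hypothesis: by the reproducing property, $\|k(\cdot,x)\|_{\mathcal{H}(k)} = \sqrt{\langle k(\cdot,x) , k(\cdot,x) \rangle_{\mathcal{H}(k)}} = \sqrt{k(x,x)}$, so $\mathbb{E}\|Y_1\|_{\mathcal{H}(k)} = \int \sqrt{k(x,x)}\,\mathrm{d}P(x) < \infty$ by \Cref{ass: k is l1}. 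In particular $Y_1$ is Bochner integrable and $\mathbb{E}[Y_1] = \int k(\cdot,x)\,\mathrm{d}P(x) = \mu_k(P)$.

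With these two facts in hand, the conclusion follows from the strong law of large numbers for i.i.d.\ random elements of a separable Banach space (Mourier's theorem): if $Y_1,Y_2,\dots$ are i.i.d.\ in a separable Banach space with $\mathbb{E}\|Y_1\| < \infty$, then $\frac{1}{n}\sum_{i=1}^n Y_i$ converges almost surely to $\mathbb{E}[Y_1]$. Applying this in the separable Hilbert space $\mathcal{H}(k)$ gives $\mu_k(P_n) = \frac{1}{n}\sum_{i=1}^n k(\cdot,x_i) \rightarrowas \mu_k(P)$.

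I expect the only genuinely delicate point to be the measurability step---ensuring $x \mapsto k(\cdot,x)$ is strongly measurable---which is precisely where separability of $\mathcal{H}(k)$ (\Cref{ass: Hk separable}) enters; everything else is a direct application of the reproducing property and a standard Banach-space law of large numbers. If one wishes to avoid quoting Mourier's theorem, an alternative is to expand $\|\mu_k(P_n) - \mu_k(P)\|_{\mathcal{H}(k)}^2$ in an orthonormal basis of $\mathcal{H}(k)$ (available by separability) and control it via the scalar strong law applied coordinatewise together with a truncation argument to handle the tail of the basis expansion uniformly, but invoking the Banach-space strong law is cleaner.
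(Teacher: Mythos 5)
Your proposal is correct and follows essentially the same route as the paper: write $\mu_k(P_n) = \frac{1}{n}\sum_{i=1}^n k(\cdot,x_i)$ as an average of i.i.d.\ elements of $\mathcal{H}(k)$ with $\mathbb{E}\|k(\cdot,x_i)\|_{\mathcal{H}(k)} = \int \sqrt{k(x,x)}\,\mathrm{d}P < \infty$ by \ref{ass: k is l1}, and apply the strong law of large numbers in the separable Banach space $\mathcal{H}(k)$ (the paper's \Cref{thm: SLLN}). Your explicit verification of strong measurability via the Pettis theorem is a welcome extra detail that the paper leaves implicit.
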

\begin{proof}
The random variables $\xi_i = k(\cdot,x_i)$ are independent, identically distributed, and satisfy $\mathbb{E}[\|k(\cdot,x_i)\|_{\mathcal{H}(k)}] = \int \sqrt{k(x,x)} \; \mathrm{d}P < \infty$ (from \ref{ass: k is l1}), so we may appeal to the strong law of large numbers (\Cref{thm: SLLN}) in the separable (from \ref{ass: Hk separable}) Banach space $\mathcal{B} = \mathcal{H}(k)$, to get
\begin{align*}
\mu_k(P_n) = \frac{1}{n} \sum_{i=1}^n k(\cdot,x_i) \rightarrowas \int k(\cdot,x) \; \mathrm{d}P(x) = \mu_k(P) ,
\end{align*}
which establishes the result.
\end{proof}

\noindent Note that \ref{ass: k is l1} implies the weak integral $\mu_k(P) = \int k(\cdot,x) \; \mathrm{d}P(x)$ is in fact a strong (or \emph{Bochner}) integral.

\begin{lemma}[Uniform convergence in discrepancy] \label{lem: uniform}
Assume \ref{ass: Hk separable}, \ref{ass: k is l1}.
Then
\begin{align}
\sup_{\theta \in \Theta} |D_k(P_n, P_\theta) - D_k(P,P_\theta)| & \rightarrowas 0 . \label{eq: uniform}
\end{align}
\end{lemma}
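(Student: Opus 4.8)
The plan is to reduce this uniform statement to the pointwise almost-sure convergence $\mu_k(P_n) \rightarrowas \mu_k(P)$ already obtained in \Cref{lem: slln}. The key observation is that both quantities appearing inside the supremum are distances, in $\mathcal{H}(k)$, to a \emph{common} point $\mu_k(P_\theta)$, so the difference can be controlled by the reverse triangle inequality independently of where that common point lies.

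Concretely, I would first rewrite the discrepancies using the kernel mean elements, namely $D_k(P_n, P_\theta) = \|\mu_k(P_n) - \mu_k(P_\theta)\|_{\mathcal{H}(k)}$ and $D_k(P, P_\theta) = \|\mu_k(P) - \mu_k(P_\theta)\|_{\mathcal{H}(k)}$, which is just the definition of $D_k$. The reverse triangle inequality in the Hilbert space $\mathcal{H}(k)$ then gives, for every $\theta \in \Theta$,
\[
|D_k(P_n, P_\theta) - D_k(P,P_\theta)| = \bigl| \, \|\mu_k(P_n) - \mu_k(P_\theta)\|_{\mathcal{H}(k)} - \|\mu_k(P) - \mu_k(P_\theta)\|_{\mathcal{H}(k)} \, \bigr| \leq \|\mu_k(P_n) - \mu_k(P)\|_{\mathcal{H}(k)} .
\]
Since the right-hand side does not depend on $\theta$, taking the supremum over $\theta \in \Theta$ on the left leaves the bound unchanged, so $\sup_{\theta \in \Theta} |D_k(P_n, P_\theta) - D_k(P,P_\theta)| \leq \|\mu_k(P_n) - \mu_k(P)\|_{\mathcal{H}(k)}$. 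Finally, \Cref{lem: slln} (applicable under \ref{ass: Hk separable} and \ref{ass: k is l1}) yields $\|\mu_k(P_n) - \mu_k(P)\|_{\mathcal{H}(k)} \rightarrowas 0$, and hence the supremum converges to $0$ almost surely, as required.

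The main obstacle here is essentially absent: all the analytic work (the strong law of large numbers in a separable Banach space, the integrability condition \ref{ass: k is l1}) has already been discharged in \Cref{lem: slln}. The only point worth checking carefully is that the upper bound is genuinely $\theta$-free; this is precisely what the reverse triangle inequality delivers, because the term $\mu_k(P_\theta)$ plays the role of an arbitrary reference point and cancels. No continuity, compactness, or other structural hypotheses on $\Theta$ are needed, which is consistent with the stated setting of a mere topological space.
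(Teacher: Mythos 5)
Your proposal is correct and follows exactly the same route as the paper's proof: rewrite both discrepancies as distances in $\mathcal{H}(k)$ to the common point $\mu_k(P_\theta)$, apply the reverse triangle inequality to obtain the $\theta$-free bound $\|\mu_k(P_n) - \mu_k(P)\|_{\mathcal{H}(k)}$, and conclude via \Cref{lem: slln}. Nothing is missing.
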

\begin{proof}
Let $f(\theta) = D_k(P,P_\theta)$ and $f_n(\theta) = D_k(P_n,P_\theta)$.
From the reverse triangle inequality and \Cref{lem: slln}, we have that
\begin{align}
\sup_{\theta \in \Theta} |f_n(\theta) - f(\theta)| & = \sup_{\theta \in \Theta} \left| \|\mu_k(P_n) - \mu_k(P_\theta)\|_{\mathcal{H}(k)} - \|\mu_k(P) - \mu_k(P_\theta)\|_{\mathcal{H}(k)} \right| \nonumber \\
& \leq \|\mu_k(P_n) - \mu_k(P)\|_{\mathcal{H}(k)} \rightarrowas 0 ,
\end{align}
which establishes the result.
\end{proof}

\begin{lemma}[Discrepancy is minimised] \label{thm: basic theorem}
Assume \ref{ass: Hk separable}, \ref{ass: k is l1}.
Then 
$$
D_k(P,P_{\theta_n}) \rightarrowas \min_{\theta \in \Theta} D_k(P,P_\theta) .
$$
\end{lemma}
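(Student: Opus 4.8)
The plan is to derive the claim from \Cref{lem: uniform} by a standard sandwich argument, carried out pathwise on the probability-one event on which the uniform convergence \eqref{eq: uniform} holds. Abbreviate $f(\theta) := D_k(P,P_\theta)$ and $f_n(\theta) := D_k(P_n,P_\theta)$, and put $\varepsilon_n := \sup_{\theta \in \Theta}|f_n(\theta)-f(\theta)|$, so that $\varepsilon_n \rightarrowas 0$ by \Cref{lem: uniform}. The standing assumptions of \Cref{sec: theory} guarantee a minimiser $\theta_\star \in \argmin_{\theta \in \Theta} f(\theta)$; write $m := f(\theta_\star) = \min_{\theta \in \Theta} f(\theta)$, and note that $f_n(\theta_n) = \min_{\theta \in \Theta} f_n(\theta)$ for all $n$ large enough, since $\theta_n$ is assumed to exist eventually.

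First, the lower bound $f(\theta_n) \geq m$ holds deterministically, because $m$ is the infimum of $f$ over $\Theta$. For the upper bound I would chain two estimates, both valid on the event where \eqref{eq: uniform} holds: $f(\theta_n) \leq f_n(\theta_n) + \varepsilon_n$ by the definition of $\varepsilon_n$; and $f_n(\theta_n) \leq f_n(\theta_\star) \leq f(\theta_\star) + \varepsilon_n = m + \varepsilon_n$, where the first step uses that $\theta_n$ minimises $f_n$ and the second again uses the definition of $\varepsilon_n$. Combining gives $m \leq f(\theta_n) \leq m + 2\varepsilon_n$ on that event, and letting $n \to \infty$ yields $D_k(P,P_{\theta_n}) = f(\theta_n) \rightarrowas m$, as required.

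There is no genuine obstacle here; the argument is the classical route to consistency of minimum-distance estimators once uniform convergence is in hand. The only points deserving care are (i) that one must pass to the almost-sure event supplied by \Cref{lem: uniform} before making any of the pathwise inequalities, since the sequence $(\theta_n)$ is itself random; and (ii) that the comparison $f_n(\theta_n) \leq f_n(\theta_\star)$ relies on the existence of a population minimiser $\theta_\star$, which is precisely what the standing assumption at the start of \Cref{sec: theory} provides. Absent that assumption, replacing $\theta_\star$ by an arbitrarily good near-minimiser of $f$ and $m$ by $\inf_{\theta \in \Theta} f(\theta)$ recovers the same conclusion with an extra vanishing slack.
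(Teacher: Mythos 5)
Your argument is correct and is essentially identical to the paper's: the same sandwich chain $f(\theta_\star) \leq f(\theta_n) \leq f_n(\theta_n) + \text{slack} \leq f_n(\theta_\star) + \text{slack} \leq f(\theta_\star) + 2\,\text{slack}$, with your explicit $\varepsilon_n \rightarrowas 0$ playing the role of the paper's $\epsilon/2$ bookkeeping. Nothing further is needed.
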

\begin{proof}
Let $f$ and $f_n$ be as in the proof of \Cref{lem: uniform}.
Let $\epsilon > 0$.
Pick one element $\theta_\star \in \argmin_{\theta \in \Theta} f(\theta)$.
Then a.s. there exists $n_0$ such that for all $n > n_0$ we have that
$$
f(\theta_\star) \stackrel{\text{defn of }\theta_\star}{\leq} f(\theta_n) \stackrel{\text{\Cref{lem: uniform}}}{<} f_n(\theta_n) + \frac{\epsilon}{2} \stackrel{\text{defn of }\theta_n}{\leq} f_n(\theta_\star) + \frac{\epsilon}{2} \stackrel{\text{\Cref{lem: uniform}}}{<} f(\theta_\star) + \epsilon .
$$
Since $\epsilon > 0$ was arbitrary, it follows that $f(\theta_n) \rightarrowas \min_{\theta \in \Theta} f(\theta)$.
\end{proof}

Using an additional assumption that there exists a unique and sufficiently distinct global minimum at the parameter level, we obtain a basic strong consistency result at the parameter level.
The proof of the following result is immediate from \Cref{thm: basic theorem}:

\begin{theorem}[Strong consistency with unique minimum] \label{thm: strong several}
Assume \ref{ass: Hk separable}, \ref{ass: k is l1}.
If $\theta_\star \in \Theta$ is such that for every open neighbourhood $\theta_\star \in N(\theta_\star) \subset \Theta$ we have
\begin{align}
D_k(P,P_{\theta_\star}) < \inf_{\theta \in \Theta \setminus N(\theta_\star)} D_k(P,P_\theta),  \label{eq: van der vaart}
\end{align}
then $\theta_n \rightarrowas \theta_\star$.
\end{theorem}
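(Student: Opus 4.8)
The plan is to deduce the result directly from \Cref{thm: basic theorem} combined with the well-separation hypothesis \eqref{eq: van der vaart}, via the classical argmin-consistency argument. The starting point is that \Cref{thm: basic theorem} supplies a probability-one event $\Omega_0$ on which $D_k(P,P_{\theta_n}) \to \min_{\theta \in \Theta} D_k(P,P_\theta)$, and by \eqref{eq: van der vaart} the right-hand side equals $D_k(P,P_{\theta_\star})$ (since $\theta_\star$ is in particular a global minimiser). All the remaining work takes place deterministically on $\Omega_0$, and amounts to unwinding the definition of convergence to $\theta_\star$ in the topological space $\Theta$: namely, that every open neighbourhood $N(\theta_\star)$ of $\theta_\star$ eventually contains $\theta_n$.

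First I would fix an arbitrary open neighbourhood $\theta_\star \in N(\theta_\star) \subset \Theta$ and set
\[
\delta := \inf_{\theta \in \Theta \setminus N(\theta_\star)} D_k(P,P_\theta) - D_k(P,P_{\theta_\star}),
\]
which is strictly positive by hypothesis \eqref{eq: van der vaart}. Next, on $\Omega_0$, I would invoke the convergence $D_k(P,P_{\theta_n}) \to D_k(P,P_{\theta_\star})$ to obtain $n_0$ such that $D_k(P,P_{\theta_n}) < D_k(P,P_{\theta_\star}) + \delta = \inf_{\theta \in \Theta \setminus N(\theta_\star)} D_k(P,P_\theta)$ for all $n > n_0$. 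This inequality is incompatible with $\theta_n \in \Theta \setminus N(\theta_\star)$, so $\theta_n \in N(\theta_\star)$ for all $n > n_0$. Since $N(\theta_\star)$ was an arbitrary neighbourhood, this is precisely the statement that $\theta_n \to \theta_\star$ on $\Omega_0$, and hence $\theta_n \rightarrowas \theta_\star$.

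There is essentially no hard step here: the entire content is already packaged in \Cref{thm: basic theorem}, whose proof in turn rests on \Cref{lem: uniform} and \Cref{lem: slln}. The only point requiring a modicum of care is that $\Theta$ is merely a topological space, so one must phrase the conclusion in terms of neighbourhood filters rather than a metric; but the argument above is manifestly topological and never uses sequential criteria beyond the (real-valued) convergence of the discrepancies. I would also note in passing that uniqueness of $\theta_\star$ is not actually needed for the argument as written — \eqref{eq: van der vaart} already forces $\theta_\star$ to be the unique global minimiser — so no separate uniqueness bookkeeping is required.
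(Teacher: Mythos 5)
Your argument is correct and is exactly the route the paper intends: the paper gives no explicit proof, stating only that the result is ``immediate from'' \Cref{thm: basic theorem}, and your write-up is precisely the standard well-separation argument that makes this immediacy explicit. Nothing further is needed.
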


In practice the uniqueness of a minimum is often difficult or impossible to check, since in general the statistical model will be misspecified.
This motivates a more realistic analysis, which we state in two parts.
The \emph{closure} of a subset $S$ of a topological space $\Theta$ will be denoted $\text{cl}(S)$, and we recall that a subset $C$ of a topological space $\Theta$ is \emph{sequentially compact} if every sequence in $C$ has an accumulation point in $C$.

\begin{theorem}[Strong consistency with several minima, part I] \label{cor: multi 1}
Assume \ref{ass: Hk separable}, \ref{ass: k is l1}, and assume that $\theta \mapsto D_k(P,P_\theta)$ is a continuous function on $\Theta$.
Then any accumulation point of $(\theta_n)_{n \in \mathbb{N}}$ is a.s. an element of $\min_{\theta \in \Theta} D_k(P,P_\theta)$.
\end{theorem}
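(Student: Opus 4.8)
The plan is to combine the uniform convergence result of \Cref{lem: uniform} with the value-level convergence of \Cref{thm: basic theorem}, and exploit continuity of $\theta \mapsto D_k(P,P_\theta)$ to pass from ``the minimal value is attained in the limit'' to ``any accumulation point is a minimiser''. Throughout, I work on the probability-one event on which both $\sup_{\theta \in \Theta} |D_k(P_n,P_\theta) - D_k(P,P_\theta)| \to 0$ and $D_k(P,P_{\theta_n}) \to \min_{\theta \in \Theta} D_k(P,P_\theta)$ hold; such an event exists by taking the intersection of the two probability-one events supplied by those results, and this intersection is still probability one.

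First I would fix a realisation in this event and let $(\theta_{n_j})_{j \in \mathbb{N}}$ be a subsequence converging to some accumulation point $\theta_\infty \in \Theta$ (note the existence of an accumulation point in $\Theta$ is part of what it means to \emph{be} an accumulation point; no compactness of $\Theta$ is needed for this direction). Write $m := \min_{\theta \in \Theta} D_k(P,P_\theta)$, which exists by the standing assumption that $\theta_\star$ exists. By continuity of $\theta \mapsto D_k(P,P_\theta)$ at $\theta_\infty$, we have $D_k(P,P_{\theta_{n_j}}) \to D_k(P,P_{\theta_\infty})$ as $j \to \infty$. On the other hand, $D_k(P,P_{\theta_n}) \to m$ along the full sequence by \Cref{thm: basic theorem}, hence also along the subsequence. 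Uniqueness of limits in $\mathbb{R}$ then forces $D_k(P,P_{\theta_\infty}) = m$, i.e.\ $\theta_\infty \in \argmin_{\theta \in \Theta} D_k(P,P_\theta)$. Since the realisation and the accumulation point were arbitrary, this shows that a.s.\ every accumulation point of $(\theta_n)_{n \in \mathbb{N}}$ lies in $\argmin_{\theta \in \Theta} D_k(P,P_\theta)$, which is exactly \eqref{eq: result}.

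There is no serious obstacle here; the argument is essentially a bookkeeping exercise once \Cref{thm: basic theorem} is in hand. The only two points to be careful about are: (i) making sure the a.s.\ statements are handled on a common probability-one event before extracting subsequences, so that one is not juggling null sets indexed by uncountably many sequences --- this is fine because the two events in question do not depend on the particular sequence $(\theta_n)$; and (ii) invoking continuity at the \emph{limit point} $\theta_\infty$ rather than uniform continuity, which is all that is needed. (I note in passing that \Cref{lem: uniform} itself is not strictly needed for this particular statement, since \Cref{thm: basic theorem} already packages what we use; but it is the natural ingredient and keeps the proof parallel to the rest of the section.) The statement as written says accumulation points are ``an element of $\min_{\theta \in \Theta} D_k(P,P_\theta)$'', which I read as shorthand for the argmin set; the proof delivers precisely membership in $\argmin_{\theta \in \Theta} D_k(P,P_\theta)$.
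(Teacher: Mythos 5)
Your proposal is correct and is essentially the paper's own argument: both fix a realisation in the common probability-one event, extract a subsequence converging to the accumulation point, invoke continuity of $\theta \mapsto D_k(P,P_\theta)$ at that point, and combine with the value-level convergence $D_k(P,P_{\theta_n}) \rightarrowas \min_{\theta} D_k(P,P_\theta)$ from \Cref{thm: basic theorem} to conclude the limit point attains the minimum. The paper phrases the last step as an $\epsilon$-argument rather than uniqueness of limits, but this is a cosmetic difference only.
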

\begin{proof}
Let $f$ be as in the proof of \Cref{lem: uniform}.
Let $\epsilon > 0$ and let $\theta_\star$ be an accumulation point, meaning that there exists a subsequence with $\theta_{n_m} \rightarrow \theta_\star$ as $m \rightarrow \infty$.
Then a.s. there exists $m_0$ such that for all $m > m_0$ we have
$$
f(\theta_\star)  \stackrel{f\text{ cts at }\theta_\star}{<} f(\theta_{n_m}) + \epsilon .
$$
Taking the limit $m \rightarrow \infty$ on both sides and using \Cref{thm: basic theorem}, we have that a.s.
$$
f(\theta_\star) < \min_{\theta \in \Theta} f(\theta) + \epsilon .
$$
Since $\epsilon > 0$ was arbitrary, it follows that a.s. $\theta_\star \in \min_{\theta \in \Theta} f(\theta)$.
\end{proof}

\begin{theorem}[Strong consistency with several minima, part II] \label{cor: multi 2}
Assume \ref{ass: Hk separable}, \ref{ass: k is l1}, and assume there is an $\epsilon > 0$ for which 
$$
C := \text{\normalfont cl}\left\{ \theta \in \Theta : D_k(P,P_\theta) < \epsilon + \inf_{\vartheta \in \Theta} D_k(P,P_\vartheta) \right\} 
$$
is sequentially compact.
Then the sequence $(\theta_n)_{n \in \mathbb{N}}$ has at least one accumulation point.
\end{theorem}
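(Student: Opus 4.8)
The plan is to combine the convergence of the attained discrepancy established in \Cref{thm: basic theorem} with the sequential compactness hypothesis. Write $d_\star := \inf_{\vartheta \in \Theta} D_k(P,P_\vartheta)$, which coincides with $\min_{\theta \in \Theta} D_k(P,P_\theta)$ under the standing assumption that a minimiser $\theta_\star$ exists, and set $S := \{\theta \in \Theta : D_k(P,P_\theta) < \epsilon + d_\star\}$, so that $S \subseteq C$ by the definition of $C$.

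First I would invoke \Cref{thm: basic theorem} to obtain a probability-one event on which $D_k(P,P_{\theta_n}) \rightarrow d_\star$. On this event there is a (random) index $n_0$ with $D_k(P,P_{\theta_n}) < \epsilon + d_\star$, and hence $\theta_n \in S \subseteq C$, for every $n \geq n_0$.

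Next I would apply sequential compactness of $C$ to the tail $(\theta_n)_{n \geq n_0}$, which lies entirely in $C$: it admits a subsequence $(\theta_{n_m})_{m \in \mathbb{N}}$ converging to some $\theta_\infty \in C$. Since $(\theta_{n_m})_{m \in \mathbb{N}}$ is also a subsequence of the original sequence, $\theta_\infty$ is an accumulation point of $(\theta_n)_{n \in \mathbb{N}}$, and this holds on the probability-one event above, which is the assertion.

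The only points needing care are bookkeeping rather than substance: the index $n_0$ and the relevant event are random (which is harmless, since we only need an accumulation point to exist on a probability-one event), and one should record that an accumulation point of a tail of a sequence is an accumulation point of the whole sequence. I do not anticipate a genuine obstacle here; all the real work is in \Cref{thm: basic theorem}, whose proof rests on the uniform convergence in \Cref{lem: uniform} and ultimately on the strong law of large numbers for $\mu_k(P_n)$ from \Cref{lem: slln}.
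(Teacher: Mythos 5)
Your proposal is correct and follows essentially the same route as the paper: show that the sequence $(\theta_n)$ a.s.\ eventually lies in $C$ and then invoke sequential compactness. The only cosmetic difference is that you derive the containment from \Cref{thm: basic theorem} (convergence of the attained discrepancy), whereas the paper derives it directly from the uniform convergence in \Cref{lem: uniform} via an $\epsilon/2$ argument; these are interchangeable since \Cref{thm: basic theorem} is itself an immediate consequence of \Cref{lem: uniform}.
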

\begin{proof}
Let $f$ and $f_n$ be as in the proof of \Cref{lem: uniform}.
From \Cref{lem: uniform}, there a.s. exists $n_0$ such that for all $n > n_0$ we have $\sup_{\theta \in \Theta} |f_n(\theta) - f(\theta)| < \epsilon / 2$.
In particular, there a.s. exists $n_0$ such that $(\theta_n)_{n > n_0} \subset C$.
Since the sequence $(\theta_n)_{n \in \mathbb{N}}$ is a.s. eventually contained in $C$, and $C$ is sequentially compact, there a.s. exists an accumulation point.
\end{proof}

\noindent For $\Theta = \mathbb{R}^d$, the topological condition in \Cref{cor: multi 2} is satisfied when $C$ is bounded.

\subsection{Extension to $\theta$-Dependent Kernel}
\label{subsec: dependent case}

To accommodate also the $\theta$-dependent kernels $k(\cdot,\cdot;\theta)$ from \Cref{subsec: minimum KSD}, here we provide an alternative to \Cref{lem: uniform}.

\begin{lemma}[Uniform convergence in discrepancy; $\theta$-dependent kernel] \label{lem: uniform 2}
Assume \ref{ass: Hk separable} for each $\theta \in \Theta$.
Assume $\Theta \subset \mathbb{R}^p$ is open, convex and bounded, and that
\begin{assumptionsA}
\item $\int \sqrt{ k_\theta(x,x) } \; \mathrm{d}P(x) < \infty$, for all $\theta \in \Theta$  \label{ass: ktheta is l1}
\item $\int \sup_{\theta \in \Theta} \left\| \partial_\theta k_\theta(x,x) \right\| \; \mathrm{d}P(x) < \infty$  \label{ass: vstat1}
\item $\iint \sup_{\theta \in \Theta} \left\| \partial_\theta k_\theta(x,y) \right\| \; \mathrm{d}P(x) \mathrm{d}P(y) < \infty$ . \label{ass: vstat2}
\end{assumptionsA}
Then
\begin{align}
\sup_{\theta \in \Theta} |D_\theta(P_n,P_\theta) - D_\theta(P,P_\theta)| & \rightarrowas 0 . \label{eq: uniform}
\end{align}
\end{lemma}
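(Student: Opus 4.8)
The plan is to reduce the assertion to a uniform strong law of large numbers for squared discrepancies. By \Cref{prop: explicit MMD} (applied to the kernel $k(\cdot,\cdot;\theta)$, with $P_n$ or $P$ in the role of its first argument and $P_\theta$ in the role of $Q$),
\[
V_n(\theta) := D_\theta(P_n,P_\theta)^2 = \frac{1}{n^2}\sum_{i,j=1}^{n} k_\theta(x_i,x_j), \qquad V(\theta) := D_\theta(P,P_\theta)^2 = \iint k_\theta(x,y)\,\mathrm{d}P(x)\mathrm{d}P(y),
\]
and both quantities are non-negative. Since $|\sqrt{a}-\sqrt{b}| \le \sqrt{|a-b|}$ for $a,b\ge 0$, one has $\sup_{\theta\in\Theta}|D_\theta(P_n,P_\theta)-D_\theta(P,P_\theta)| \le \sqrt{\,\sup_{\theta\in\Theta}|V_n(\theta)-V(\theta)|\,}$, so it suffices to prove $\sup_{\theta\in\Theta}|V_n(\theta)-V(\theta)| \rightarrowas 0$.

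For the pointwise statement I would freeze $\theta$ and apply \Cref{lem: slln} to the (now $\theta$-independent) kernel $k_\theta$, which is legitimate by \ref{ass: Hk separable} for that $\theta$ and by \ref{ass: ktheta is l1}; this gives $\mu_{k_\theta}(P_n) \rightarrowas \mu_{k_\theta}(P)$ in $\mathcal{H}(k_\theta)$. Arguing exactly as in the proof of \Cref{lem: uniform} via the reverse triangle inequality, $|D_\theta(P_n,P_\theta)-D_\theta(P,P_\theta)| \le \|\mu_{k_\theta}(P_n)-\mu_{k_\theta}(P)\|_{\mathcal{H}(k_\theta)} \rightarrowas 0$, hence $V_n(\theta)\rightarrowas V(\theta)$ for each fixed $\theta$.

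To upgrade this to uniformity I would establish a stochastic equicontinuity bound. Since $\Theta$ is open and convex, the mean value theorem yields $|k_\theta(x,y)-k_{\theta'}(x,y)| \le B(x,y)\,\|\theta-\theta'\|$ for all $\theta,\theta'\in\Theta$, where $B(x,y) := \sup_{\vartheta\in\Theta}\|\partial_\vartheta k_\vartheta(x,y)\|$; consequently $|V_n(\theta)-V_n(\theta')| \le L_n\,\|\theta-\theta'\|$ with $L_n := \frac{1}{n^2}\sum_{i,j=1}^n B(x_i,x_j)$, and $|V(\theta)-V(\theta')| \le L\,\|\theta-\theta'\|$ with $L := \iint B\,\mathrm{d}P\mathrm{d}P$; in particular $V$ and each $V_n$ are continuous, so all suprema over $\Theta$ below equal suprema over a countable dense subset and are measurable. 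Writing $L_n$ as an off-diagonal $U$-statistic plus the diagonal term $\frac{1}{n^2}\sum_i B(x_i,x_i)$, the strong law of large numbers for $U$-statistics (using \ref{ass: vstat2}, which makes $B$ integrable against $P\otimes P$) together with the ordinary strong law (using \ref{ass: vstat1}, which makes $x\mapsto B(x,x)$ integrable) gives $L_n \rightarrowas L < \infty$. As $\Theta\subset\mathbb{R}^p$ is bounded it is totally bounded, so for each $\delta>0$ we may fix a finite $\delta$-net $\theta_1,\dots,\theta_{M_\delta}\in\Theta$; choosing, for arbitrary $\theta$, a net point within distance $\delta$ gives
\[
\sup_{\theta\in\Theta}|V_n(\theta)-V(\theta)| \le \max_{1\le m\le M_\delta}|V_n(\theta_m)-V(\theta_m)| + (L_n+L)\,\delta .
\]
Letting $n\to\infty$ (a finite maximum of almost surely convergent terms, together with $L_n\rightarrowas L$) and then $\delta\downarrow 0$ along a countable sequence yields $\sup_{\theta\in\Theta}|V_n(\theta)-V(\theta)| \rightarrowas 0$, which by the first paragraph proves the claim.

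The step I expect to be the main obstacle is exactly this passage from pointwise to uniform convergence: one must bound the modulus of continuity of $\theta\mapsto V_n(\theta)$ uniformly in $n$ by a random constant $L_n$ that converges almost surely, and hence is eventually bounded --- this is precisely the role of the integrability assumptions \ref{ass: vstat1} and \ref{ass: vstat2} --- absorb the non-Lipschitz square root through the elementary inequality $|\sqrt{a}-\sqrt{b}|\le\sqrt{|a-b|}$, and bookkeep the ``almost surely'' quantifiers so that only countably many null sets are discarded in total (one per point of the countable family of $\delta$-nets indexed by $\delta\in\{1,\tfrac12,\tfrac13,\dots\}$, together with the null events on which $L_n\not\to L$ or on which the pointwise law fails at some net point).
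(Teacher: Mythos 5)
Your proposal is correct and follows essentially the same route as the paper: reduce to the squared discrepancies $g_n(\theta)=D_\theta(P_n,P_\theta)^2$, obtain pointwise a.s.\ convergence from the Hilbert-space strong law (\Cref{lem: slln}), and upgrade to uniformity via a mean-value-theorem Lipschitz bound whose random constant is the V-statistic of $\sup_{\vartheta}\|\partial_\vartheta k_\vartheta\|$, controlled by \ref{ass: vstat1}--\ref{ass: vstat2} and \Cref{thm: pfister C4}. The only differences are cosmetic and in your favour: you prove the uniform law of large numbers inline with a $\delta$-net rather than citing \Cref{thm: ulln}, you apply the Hilbert-space SLLN directly to the kernel $k_\theta$ (sidestepping the paper's $\sqrt{a+b}\le\sqrt{a}+\sqrt{b}$/Jensen computation for $k(\cdot,\cdot;\theta)$), and you make explicit, via $|\sqrt{a}-\sqrt{b}|\le\sqrt{|a-b|}$, the passage from uniform convergence of the squares back to the stated unsquared claim, a step the paper's proof leaves implicit.
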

\begin{proof}
The aim is to establish the conditions for the uniform law of large numbers in \Cref{thm: ulln}.
Let $g(\theta) = D_\theta(P,P_\theta)^2$ and $g_n(\theta) = D_\theta(P_n,P_\theta)^2$.
From \eqref{eq: ktheta def} and the fact that $k(\cdot,\cdot;\theta)$ is positive semi-definite,
\begin{align*}
k(x,x;\theta) & \leq k_\theta(x,x) + 2 \int k(x,y; \theta) \mathrm{d}P_\theta(y) .
\end{align*}
Taking the square root and integrating,
\begin{align}
\int \sqrt{k(x,x;\theta)} \mathrm{d}P(x) & \leq \int \sqrt{ k_\theta(x,x) + 2 \int k(x,y; \theta) \mathrm{d}P_\theta(y) } \; \mathrm{d}P(x)  \nonumber \\
& \leq \int \sqrt{ k_\theta(x,x) } \mathrm{d}P(x) + \int \sqrt{ 2 \int k(x,y;\theta) \mathrm{d}P_\theta(y) } \; \mathrm{d}P(x) \label{eq: square root algebra}  \\
& \leq \int \sqrt{ k_\theta(x,x) } \mathrm{d}P(x) + \sqrt{ \int 2 \int k(x,y;\theta) \mathrm{d}P_\theta(y)  \; \mathrm{d}P(x) } \label{eq: use Jensen root} ,
\end{align}
where \eqref{eq: square root algebra} used the fact that $\sqrt{a+b} \leq \sqrt{a} + \sqrt{b}$, and \eqref{eq: use Jensen root} used Jensen's inequality.
The first integral in \eqref{eq: use Jensen root} exists from \ref{ass: ktheta is l1} and all other integrals exist as a consequence of the standing assumption that $P,P_\theta \in \mathcal{P}_{k(\cdot,\cdot;\theta)}(\mathcal{X})$.
Thus $\int \sqrt{k(x,x;\theta)} \; \mathrm{d}P(x) < \infty$.
From this fact and \ref{ass: Hk separable}, we can apply \Cref{lem: slln} to deduce that $g_n(\theta) \rightarrowas g(\theta)$ for each $\theta \in \Theta$.
Thus the first condition in \Cref{thm: ulln} is satisfied.

For the second condition in \Cref{thm: ulln}, we start by noting it is implicit in \ref{ass: vstat1}, \ref{ass: vstat2} that $\partial_\theta k_\theta$ exists in $\Theta$ and thus $g_n$ is differentiable in $\Theta$.
Let $\theta,\vartheta \in \Theta$.
From the mean value theorem, and the fact that $\Theta$ is open and convex, 
\begin{align*}
g_n(\theta) - g_n(\vartheta) & = \langle \theta - \vartheta , \partial_\phi g_n(\phi) \rangle
\end{align*}
for some $\phi \in \{t \theta + (1-t) \vartheta : 0 \leq t \leq 1\} \subset \Theta$.
In particular,
\begin{align*}
\left| g_n(\theta) - g_n(\vartheta) \right| & \leq \| \theta - \vartheta \| \; \sup_{\phi \in \Theta} \| \partial_\phi g_n(\phi) \| 
\leq \| \theta - \vartheta\| \; \frac{1}{n^2} \sum_{i=1}^n \sum_{j=1}^n \underbrace{ \sup_{\phi \in \Theta} \|\partial_\phi k_\phi(x_i,x_j) \|  }_{=: v(x_i,x_j)} .
\end{align*}
Now, since we have a V-statistic, we use \ref{ass: vstat1} and \ref{ass: vstat2} to verify that
\begin{align*}
\iint |v(x,y)| \mathrm{d}P(x) \mathrm{d}P(y) & = \iint \sup_{\theta \in \Theta} \|\partial_\theta k_\theta(x,y)\| \mathrm{d}P(x) \mathrm{d}P(y) < \infty \\
\int |v(x,x)| \mathrm{d}P(x) & = \int \sup_{\theta \in \Theta} \|\partial_\theta k_\theta(x,x)\| \mathrm{d}P(x) < \infty
\end{align*}
and thus from \Cref{thm: pfister C4} we have that 
\begin{align*}
B_n & := \frac{1}{n^2} \sum_{i=1}^n \sum_{j=1}^n \sup_{\theta \in \Theta} \|\partial_\theta k_\theta(x_i,x_j) \|  
\rightarrowas \iint  \sup_{\theta \in \Theta} \|\partial_\theta k_\theta(x,y) \| \mathrm{d}P(x) \mathrm{d}P(y) < \infty .
\end{align*}
Since in addition $\Theta$ is assumed to be bounded, the conditions of the uniform law of large numbers in \Cref{thm: ulln} have now been established.
Thus \Cref{thm: ulln} gives that $\sup_{\theta \in \Theta} |g_n(\theta) - g(\theta)| \rightarrowas 0$, as claimed.
\end{proof}

The arguments of \Cref{thm: basic theorem} and \Cref{thm: strong several,cor: multi 1,cor: multi 2} all go through with \ref{ass: ktheta is l1}, \ref{ass: vstat1} and \ref{ass: vstat2} in place of \ref{ass: k is l1}.

\subsection{Asymptotic Normality}
\label{subsec: asym norm}

The last asymptotic result we present is to show that fluctuations of the minimum kernel discrepancy estimator are asymptotically normal.
For a matrix $M \in \mathbb{R}^{p \times p}$, we write $M \succ 0$ to indicate that $M$ is positive definite; i.e. for all $0 \neq c \in \mathbb{R}^p$ we have $c^\top M c > 0$.

\begin{theorem}[Asymptotic normality] \label{thm: normal}
Suppose $\theta_n \rightarrowas \theta_\star$.
Let there exist an open set $S \subseteq \Theta \subseteq \mathbb{R}^p$ such that $\theta_\star \in S$ and the following hold:
\begin{assumptionsA}
\item $\iint \sup_{\theta \in S} \left\| \partial_\theta k_\theta(x,y) \right\| \mathrm{d}P(x) \mathrm{d}P(y) < \infty$ \label{ass: luib}
\item the functions $\{\theta \mapsto \partial_\theta k_\theta(x,y) : x,y \in \mathcal{X}\}$ are differentiable on $S$ \label{ass: differentiable}
\item the functions $\{ \theta \mapsto \partial_\theta^2 k_\theta(x,y) : x,y \in \mathcal{X} \}$ are uniformly continuous at $\theta_\star$ \label{ass: uniform continuity}
\item $\iint \|\partial_\theta k_\theta(x,y)\|^2 \mathrm{d}P(x) \mathrm{d}P(y) |_{\theta = \theta_\star} < \infty$  \label{ass: d moment}
\item $\iint \| \partial_\theta^2 k_\theta(x,x) \| \mathrm{d}P(x) |_{\theta = \theta_\star} < \infty$ \label{ass: d2 moment diag}
\item $\iint \| \partial_\theta^2 k_\theta(x,y) \| \mathrm{d}P(x) \mathrm{d}P(y) |_{\theta = \theta_\star} < \infty$ \label{ass: d2 moment}  
\item $\Gamma := \frac{1}{2} \left. \iint \partial_\theta^2 k_\theta(x,y) \mathrm{d}P(x) \mathrm{d}P(y) \right|_{\theta = \theta_\star} \succ 0$. \label{ass: Gamma pdef} 
\end{assumptionsA}
Then $\sqrt{n}(\theta_n - \theta_\star) \rightarrowd N(0, \Gamma^{-1} \Sigma \Gamma^{-\top} )$, where $\Sigma := \mathbb{C}_{X \sim P} \left[ \int \partial_\theta k_\theta(X,y) \mathrm{d}P(y) \right] |_{\theta = \theta_\star}$.
\end{theorem}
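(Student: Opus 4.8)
The plan is to run the classical $Z$-estimation argument for the objective
\begin{align*}
g_n(\theta) := D_\theta(P_n,P_\theta)^2 = \frac{1}{n^2}\sum_{i=1}^n\sum_{j=1}^n k_\theta(x_i,x_j),
\end{align*}
the second equality being \Cref{prop: explicit MMD} applied with $P_n$ in place of $P$ and $P_\theta$ in place of $Q$, so that $g_n$ is a V-statistic in the data. Since $\theta_n$ minimises $\theta \mapsto D_\theta(P_n,P_\theta) \ge 0$ it minimises $g_n$, and because $\theta_n\rightarrowas\theta_\star$ with $\theta_\star$ in the open set $S$, a.s. there is an $n_0$ with $\theta_n\in S$ for all $n>n_0$; as \ref{ass: differentiable} makes $g_n$ twice differentiable on $S$, the stationarity condition $\partial_\theta g_n(\theta_n)=0$ then holds for $n>n_0$. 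I also record that $\partial_\theta k_{\theta_\star}$ integrates to zero against $P\times P$: writing $g(\theta):=D_\theta(P,P_\theta)^2=\iint k_\theta\,\mathrm{d}P\,\mathrm{d}P$, assumptions \ref{ass: luib} and \ref{ass: differentiable} permit differentiation under the integral on $S$, and since $\theta_\star$ is an interior minimiser of $g$ this yields $\iint\partial_\theta k_{\theta_\star}(x,y)\,\mathrm{d}P(x)\,\mathrm{d}P(y)=0$.

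Next I would expand the stationarity condition about $\theta_\star$. Applying the mean value theorem componentwise to $\partial_\theta g_n$ along the segment from $\theta_\star$ to $\theta_n$, which lies in the convex set $S$, gives $0=\partial_\theta g_n(\theta_\star)+H_n(\theta_n-\theta_\star)$ for $n>n_0$, where the $\ell$-th row of $H_n$ is $\partial_\theta\partial_{\theta_\ell}g_n$ evaluated at some point $\bar\theta_n^{(\ell)}$ on that segment, so $\bar\theta_n^{(\ell)}\rightarrowas\theta_\star$. To identify $\lim H_n$, split $H_n-2\Gamma=(H_n-\partial_\theta^2 g_n(\theta_\star))+(\partial_\theta^2 g_n(\theta_\star)-2\Gamma)$. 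The second bracket is a V-statistic with kernel $\partial_\theta^2 k_{\theta_\star}$; by the strong law for V-statistics (\Cref{thm: pfister C4}) with the integrability in \ref{ass: d2 moment diag} and \ref{ass: d2 moment} it tends a.s. to $\iint\partial_\theta^2 k_{\theta_\star}\,\mathrm{d}P\,\mathrm{d}P-2\Gamma=0$, the last step being the definition of $\Gamma$ in \ref{ass: Gamma pdef}. For the first bracket, \ref{ass: uniform continuity} gives, for each $\varepsilon>0$, a $\delta>0$ with $\|\partial_\theta^2 k_\theta(x,y)-\partial_\theta^2 k_{\theta_\star}(x,y)\|<\varepsilon$ for all $x,y\in\mathcal{X}$ whenever $\|\theta-\theta_\star\|<\delta$; once all $\bar\theta_n^{(\ell)}$ are within $\delta$ of $\theta_\star$, which holds a.s. for $n$ large, the triangle inequality inside the average $n^{-2}\sum_{i,j}$ bounds $\|H_n-\partial_\theta^2 g_n(\theta_\star)\|$ by $\varepsilon$. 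Hence $H_n\rightarrowas 2\Gamma$, and since $\Gamma\succ 0$ by \ref{ass: Gamma pdef} this matrix is invertible, so a.s. $H_n$ is eventually invertible with $H_n^{-1}\rightarrowas(2\Gamma)^{-1}$, and the expansion rearranges to $\sqrt n(\theta_n-\theta_\star)=-H_n^{-1}\sqrt n\,\partial_\theta g_n(\theta_\star)$.

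It then remains to show $\sqrt n\,\partial_\theta g_n(\theta_\star)\rightarrowd N(0,4\Sigma)$. Here $\partial_\theta g_n(\theta_\star)=n^{-2}\sum_{i,j}h(x_i,x_j)$ is a V-statistic whose symmetric kernel $h:=\partial_\theta k_{\theta_\star}$ satisfies $\iint h\,\mathrm{d}P\,\mathrm{d}P=0$ by the first paragraph. Its Hoeffding decomposition has H\'ajek projection $\tfrac{2}{n}\sum_{i=1}^n h_1(x_i)$, where $h_1(x):=\int\partial_\theta k_{\theta_\star}(x,y)\,\mathrm{d}P(y)$ has mean zero and, by Jensen's inequality together with \ref{ass: d moment}, finite second moment with covariance $\Sigma=\mathbb{C}_{X\sim P}[h_1(X)]$; the diagonal contribution and the degenerate second-order term are both $o_p(n^{-1/2})$ under the square-integrability in \ref{ass: d moment}. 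The multivariate central limit theorem then gives $\sqrt n\,\partial_\theta g_n(\theta_\star)=\tfrac{2}{\sqrt n}\sum_i h_1(x_i)+o_p(1)\rightarrowd N(0,4\Sigma)$. Combining with the previous paragraph via Slutsky's theorem,
\begin{align*}
\sqrt n(\theta_n-\theta_\star)=-H_n^{-1}\sqrt n\,\partial_\theta g_n(\theta_\star)\rightarrowd -(2\Gamma)^{-1}N(0,4\Sigma)=N(0,\Gamma^{-1}\Sigma\Gamma^{-\top}),
\end{align*}
since $(2\Gamma)^{-1}(4\Sigma)(2\Gamma)^{-\top}=\Gamma^{-1}\Sigma\Gamma^{-\top}$, which is the assertion.

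The main obstacle is the Hessian step: the V-statistic law of large numbers only gives convergence at the fixed parameter $\theta_\star$, whereas $H_n$ is evaluated at the data-dependent intermediate points $\bar\theta_n^{(\ell)}$, which merely converge to $\theta_\star$. Closing this gap is precisely the role of the \emph{uniform} (over $x,y\in\mathcal{X}$) continuity in \ref{ass: uniform continuity}, and one must check that it interacts correctly with the $n^{-2}\sum_{i,j}$ average over the kernel evaluations. The remaining delicate bookkeeping is in the V-statistic central limit theorem for the gradient, namely verifying that the diagonal and degenerate components are genuinely $o_p(n^{-1/2})$; this is routine, but is where the square-integrability in \ref{ass: d moment} is essential.
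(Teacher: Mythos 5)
Your proposal is correct and follows essentially the same route as the paper's proof: the stationarity-plus-mean-value expansion of $\partial_\theta g_n$, the split of the Hessian into the fixed-$\theta_\star$ V-statistic (handled by \Cref{thm: pfister C4}) plus a remainder controlled by \ref{ass: uniform continuity}, and a V-statistic central limit theorem for $\sqrt{n}\,\partial_\theta g_n(\theta_\star)$ combined via Slutsky. The only differences are cosmetic: you apply the mean value theorem componentwise (which is in fact the more careful way to handle the vector-valued gradient) and argue the CLT via the Hoeffding/H\'ajek projection directly rather than via the Cram\'er--Wold device and the scalar V-statistic CLT (\Cref{thm: pfister C7}) as the paper does.
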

\begin{proof}
For a function $\theta \mapsto h(\theta)$, let $h'(\theta) = \partial_\theta h(\theta)$, and let $h''(\theta) = \partial_\theta^2 h(\theta)$.
Let $g$ and $g_n$ be as in the proof of \Cref{lem: uniform 2}.
From \ref{ass: differentiable} there is a convex open set $N(\theta_\star) \subset \Theta$ with $\theta_\star \in N(\theta_\star)$ on which the functions $\theta \mapsto \partial_\theta k_\theta(x,y)$ are differentiable, and thus the map $\theta \mapsto \partial_\theta g_n(\theta)$ is differentiable on $N(\theta_\star)$.
Since we are assuming $\theta_n \rightarrowas \theta_\star$, there a.s. exists $n_0 \in \mathbb{N}$ such that for all $n \geq n_0$ we have $\theta_n \in N(\theta_\star)$.
Thus for each $n \geq n_0$ we can apply the mean value theorem on the convex open set $N(\theta_\star)$ to establish that there is a $\tilde{\theta}_n = t_n \theta_n + (1-t_n) \theta_\star$, $t_n \in [0,1]$, for which 
\begin{align*}
0 = g_n'(\theta_n) = g_n'(\theta_\star) + g_n''(\tilde{\theta}_n) \times (\theta_n - \theta_\star)  .
\end{align*}
If $g_n''(\tilde{\theta}_n)$ is non-singular (as will be established below by showing convergence to a matrix $2\Gamma$, assumed to be positive definite due to \ref{ass: Gamma pdef}), then we can re-express this as
\begin{align}
\sqrt{n}(\theta_n - \theta_\star) = - [g_n''(\tilde{\theta}_n)]^{-1} [\sqrt{n} g_n'(\theta_\star)] \label{eq: Taylor}
\end{align}
The stated result follows from Slutsky's lemma if we can show both that $g_n''(\tilde{\theta}_n) \rightarrowp 2 \Gamma$ and $\sqrt{n} g_n'(\theta_\star) \rightarrowd N(0, 4 \Sigma )$.
Both of these will now be established.

\paragraph{Establishing $g_n''(\tilde{\theta}_n) {\normalfont\rightarrowp} 2\Gamma$.}

First we make the trivial algebraic observation that
\begin{align}
g_n''(\tilde{\theta}_n) & = g_n''(\theta_\star) + [ g_n''(\tilde{\theta}_n) - g_n''(\theta_\star) ]  . \label{eq: add subtract}
\end{align}
From \ref{ass: d2 moment diag} and \ref{ass: d2 moment}, for $r,s \in \{1,\dots,p\}$ we have that
\begin{align*}
\left. \int | \partial_{\theta_r} \partial_{\theta_s} k_\theta(x,x) | \mathrm{d}P(x) \right|_{\theta = \theta_\star} , \quad
\left. \iint | \partial_{\theta_r} \partial_{\theta_s} k_\theta(x,y) | \mathrm{d}P(x) \mathrm{d}P(y) \right|_{\theta = \theta_\star} < \infty
\end{align*}
and thus \Cref{thm: pfister C4} on the consistency of V-statistics can be applied, establishing that
\begin{align*}
[g_n''(\theta_\star)]_{r,s} & = \left. \frac{1}{n^2} \sum_{i=1}^n \sum_{j=1}^n \partial_{\theta_r} \partial_{\theta_s} k_\theta(x_i,x_j) \right|_{\theta = \theta_\star} 
\rightarrowas \left. \iint \partial_{\theta_r} \partial_{\theta_s} k_\theta(x,y) \mathrm{d}P(x) \mathrm{d}P(y) \right|_{\theta = \theta_\star} = 2 \Gamma_{r,s} ,
\end{align*}
and in particular $g_n''(\theta_\star) \rightarrowp 2 \Gamma$.
Next consider the second term in \eqref{eq: add subtract}.
For any $\epsilon > 0$, from \ref{ass: uniform continuity} there exists a bounded open set $N_\epsilon(\theta_\star) \subset \Theta$ such that $\theta_\star \in N_\epsilon(\theta_\star)$ and $\|\partial_\theta^2 k_\theta(x,y) - \partial_\theta^2 k_\theta(x,y)|_{\theta = \theta_\star} \| < \epsilon$ for all $\theta \in N_\epsilon(\theta_\star)$ and all $x,y \in \mathcal{X}$.
Since we are assuming $\theta_n \rightarrowas \theta_\star$, this means $\tilde{\theta}_n \rightarrowas \theta_\star$, and thus there a.s. exists $n_\epsilon \in \mathbb{N}$ such that for all $n \geq n_\epsilon$ we have $\tilde{\theta}_n \in N_\epsilon(\theta_\star)$.
Thus, for all $n \geq n_\epsilon$,
\begin{align*}
\left\| g_n''(\tilde{\theta}_n) - g_n''(\theta_\star) \right\| & = \left\| \frac{1}{n^2} \sum_{i=1}^n \sum_{j=1}^n \partial_\theta^2 k_\theta(x_i,x_j) |_{\theta = \tilde{\theta}_n} - \partial_\theta^2 k_\theta(x_i,x_j) |_{\theta = \theta_\star} \right\| \\
& \leq \frac{1}{n^2} \sum_{i=1}^n \sum_{j=1}^n \left\| \partial_\theta^2 k_\theta(x_i,x_j) |_{\theta = \tilde{\theta}_n} - \partial_\theta^2 k_\theta(x_i,x_j) |_{\theta = \theta_\star} \right\| 
 < \epsilon
\end{align*}
which demonstrates that the second term in \eqref{eq: add subtract} a.s. vanishes.
%Thus we have shown that $g_n''(\tilde{\theta}_n) \stackrel{p}{\rightarrow} \Gamma$.

\paragraph{Establishing $\sqrt{n} g_n'(\theta_\star) {\normalfont\rightarrowd} N(0, 4 \Sigma )$.}

Let $0 \neq c \in \mathbb{R}^p$.
From \ref{ass: luib}, the function $h : S \times (\mathcal{X} \times \mathcal{X}) \rightarrow \mathbb{R}^p$ defined as $h(\theta,(x,y)) = k_\theta(x,y)$ has $\partial_\theta h$ locally uniformly integrably bounded in the sense of \Cref{def: LUIB}, and thus the conditions of \Cref{lem: interchange} hold in a neighbourhood of $\theta_\star$, permitting us to interchange $\partial_\theta$ and the integral with respect to $P \times P$.
This, together with the fact that $\theta_\star$ is a minimiser of $g(\theta)$, shows that
\begin{align}
\left. \iint c^\top \partial_\theta k_\theta(x,y) \mathrm{d}P(x) \mathrm{d}P(y) \right|_{\theta = \theta_\star}
& = c^\top \partial_\theta \left. \iint k_\theta(x,y) \mathrm{d}P(x) \mathrm{d}P(y) \right|_{\theta = \theta_\star}
= c^\top g'(\theta_\star) = 0  . \label{eq: gprins is 0}
\end{align}
Further, from \ref{ass: d moment} and Cauchy--Schwarz,
\begin{align*}
\left. \iint (c^\top \partial_\theta k_\theta(x,y) )^2 \mathrm{d}P(x) \mathrm{d}P(y) \right|_{\theta = \theta_\star} \leq \|c\|^2 \left. \iint \| \partial_\theta k_\theta(x,y) \|^2 \mathrm{d}P(x) \mathrm{d}P(y) \right|_{\theta = \theta_\star} < \infty  .
\end{align*}
Thus we can appeal to asymptotic normality of V-statistics, using \Cref{thm: pfister C7}, to obtain that
\begin{align*}
 c^\top \sqrt{n} g_n'(\theta_\star)  
= \sqrt{n} \left( \frac{1}{n^2} \sum_{i=1}^n \sum_{j=1}^n c^\top \partial_\theta k_\theta(x_i,x_j) |_{\theta = \theta_\star} - \underbrace{0}_{\text{from \eqref{eq: gprins is 0}}} \right)
\rightarrowd N(0, 4 \sigma_c^2) 
\end{align*}
where 
\begin{align*}
\sigma_c^2 & = \mathbb{V}_{X \sim P} \left[ \left. \int c^\top \partial_\theta k_\theta(X,y) \mathrm{d}P(y) \right|_{\theta = \theta_\star} \right] 
= c^\top \Sigma c  .
\end{align*}
Since $c \neq 0$ was arbitrary, it follows that $\sqrt{n} g_n'(\theta_\star) = \sqrt{n}[g_n'(\theta_\star) - g'(\theta_\star)] \rightarrowd N(0, 4 \Sigma )$, as required.
\end{proof}

The assumptions of \Cref{thm: normal} will be discussed in \Cref{subsec: discuss assumptions}, but we briefly remark that \ref{ass: vstat2} implies \ref{ass: luib} and that, due to the crude way it appears in the proof, \ref{ass: uniform continuity} can likely be weakened.
The conclusion of \Cref{thm: normal} agrees with our discussion of the genralised method of moments in \Cref{subsec: summary matching}; in this case $\partial_\theta k_\theta(x,y) = - \phi(x) - \phi(y) + 2 \theta$ and $\partial_\theta^2 k_\theta(x,y) = 2 I$, so $\Gamma = I$ and $\Sigma = \mathbb{C}_{X \sim P}[ \phi(X) ]$, and \Cref{thm: normal} gives $\sqrt{n}(\theta_n - \theta_\star) \rightarrowd N(0,\mathbb{C}_{X \sim P}[\phi(X)])$, in agreement.
Finally, we note that \Cref{thm: normal} gives rise to a consistent \emph{sandwich} estimator $\Gamma_n^{-1} \Sigma_n \Gamma_n^{-1}$ for the asymptotic covariance $\Gamma^{-1} \Sigma \Gamma^{-1}$ of the minimum kernel discrepancy estimator, where $\Gamma_n$ and $\Sigma_n$ are obtained by substituting $\theta_n$ \emph{in lieu} of $\theta_\star$ into the definitions of $\Gamma$ and $\Sigma$; see \citet{freedman2006so}.
The resulting $100\gamma\%$ confidence sets for $\theta_\star$ are illustrated in \Cref{fig: gan}.

\subsection{Related Work}
\label{subsec: related}

The theoretical properties of minimum kernel discrepancy estimators have been studied by a number of authors since 2015, but our analysis is arguably the most general to have appeared.
Indeed, as we will discuss, previous work focused on either the case where $k$ does not depend on $\theta$, or on the case where $k$ has a specific form of dependence on $\theta$.

In the case where $k$ does not depend on $\theta$, the original work of \citet{dziugaite2015training} focused on obtaining non-asymptotic concentration inequalities at the level of the kernel discrepancy.
Although not discussed in that work, these guarantees can be extended to guarantees at the level of the parameter under further assumptions similar to those in \Cref{thm: strong several}.
These non-asymptotic results were extended to the regression context, where data $x_1,\dots,x_n$ are non-independent, in \citet{alquier2020universal,cherief2022finite}.
Both a non-asymptotic analysis and an asymptotic analysis were presented in \citet{briol2019statistical}, where strong consistency was established under assumptions that include (1) a unique $\theta_\star$, and (2) for all $n$, there exists $\epsilon_n > 0$ such that $\{\theta \in \Theta : D_k(P_n,P_\theta) < D_k(P,P_{\theta_\star}) + \epsilon_n \}$ is a.s. bounded.
The latter paper is comprehensive and includes also a discussion of the important issues of estimator robustness, the geometry induced by kernel discrepancy, and computational aspects of minimum kernel discrepancy estimation (a point we return to in \Cref{sec: summary}).
A Bayesian interpretation of minimum discrepancy estimation was proposed in \cite{cherief2020mmd}.

The case where $k$ depends on $\theta$ was first studied in \citet{barp2019minimum}.
There the authors focus on asymptotic analysis and a specific form of $k$ derived from Stein's method, the canonical form of which we discussed in \Cref{subsec: minimum KSD}.
Strong consistency of the minimum kernel Stein discrepancy estimator was established under assumptions that include (1) the statistical model is well-specified (i.e. $P = P_{\theta_\star}$), (2) $\theta_\star$ is unique, (3) $\theta \mapsto P_\theta$ is injective, and either (4a) $\Theta$ is compact, or (4b) the maps $\theta \mapsto D_k(P_n,P_\theta)$ are a.s. convex.
Sufficient conditions for asymptotic normality are also provided, and the important issues of estimator robustness, the geometry induced by kernel discrepancy, and computational aspects of minimum kernel Stein discrepancy estimation were discussed.
Closest in spirit to our analysis is \citet{matsubara2022robust}, who endowed the minimum kernel Stein discrepancy estimator with a Bayesian interpetation.
The present analysis aims to be more concise, and also more general in the senses that:
(1) uniqueness of $\theta_\star$ is not assumed, and (2) existence of third derivatives $\partial_\theta^3 k_\theta$ is not assumed.
The aforementioned authors also discussed estimator robustness and computational aspects of minimum kernel Stein discrepancy estimation in detail.

Naturally, many alternatives to kernel discrepancy exist and have been explored in the parameter inference context; to limit scope these have not been discussed.
However, we note that kernel discrepancies form a large class, some of which can be viewed as the limit of entropy-regularised optimal transport \citep{genevay2018learning}, some of which offer control over smoothed Wasserstein distances \citep{nietert2021smooth}, and some of which metrise weak convergence on compact domains \citep{simon2020metrizing}.

\section{Open Research Directions}
\label{sec: summary}

This paper was intended to serve both as a self-contained exposition of minimum kernel discrepancy estimation and as an invitation to contribute to this nascent research field.
Accordingly, in this section we present a non-exhaustive list of open research questions that could be addressed:
\begin{enumerate}
\item \textbf{Computing the kernel mean element.} 
An important issue which we have not yet discussed is how to compute the kernel mean element $\mu_k(P_\theta)$, given that for general choices of $k$ and $P_\theta$ there will not be a closed-form solution to this integral.
In the machine learning community, a Monte Carlo approximation of $P_\theta$ is typically used, often in tandem with stochastic gradient-based optimisation over $\theta \in \Theta$.
However, in some situations one might expect to do better:
In the context of generative models \eqref{eq: gen model}, one has $\mu_k(P_\theta) = \int k(G^\theta(x),\cdot) \mathrm{d}\mathbb{P}(x)$ where $\mathbb{P}$ is a user-specified reference measure, typically the uniform masure on $[0,1]^d$.
This raises the possibility of using quasi Monte Carlo methods to approximate the kernel mean element.
A preliminary investigation in \citet{Niu2021} established the $O(n^{-1 + \epsilon})$ rate, but it remains an open problem to obtain explicit conditions on $G^\theta$ and $k$ that ensure higher-order quasi Monte Carlo convergence rates hold.
Related, although the kernel mean element $\mu_k(P_n)$ is explicit, it is sometimes prohibitive to work with a very large dataset.
In these circumstances, methods have been proposed to select a representative subset of data such that the kernel mean element associated to this subset is close to that of the full datset \citep{cortes2016sparse,teymur2021optimal}.
However, the interaction between computational approximations and the performance of the minimum kernel discrepancy estimator have yet to be studied, beyond the Monte Carlo case in \citet{briol2019statistical}.

\item \textbf{Selecting a kernel.} Beyond the basic requirement that the estimator is consistent, further guidence on the application-specific selection of the kernel $k$ is needed.
The choice of $k$ engenders an efficiency-robustness trade-off, so one must speculate about the extent to which a statistical model is likely to be misspecified.
Further, the efficiency of minimum kernel discrepancy estimation in high-dimensional domains $\mathcal{X}$ is yet to be explored.
It is typical to encounter $P$ whose mass is concentrated around a subset of $\mathcal{X}$, often a low-dimensional linear subspace or a sub-manifold.
In such circumstances it would be desirable to select a kernel $k$ that is appropriately adapted to this low-dimensional set, perhaps as described in \Cref{subsec: kernels in QMC}, to improve the dimension dependence of the estimator.
Insight may be gained by considering the case where $\mathcal{X}$ is infinite-dimensional; kernel discrepancies are well-defined in this context \citep{wynne2022kernel,wynne2022spectral} but minimum kernel discrepancy estimation is yet to be explored.
On the other hand, the case where $\mathcal{X}$ is discrete raises a different set of challenges, since selection of a natural kernel in the discrete context can be difficult.
Both settings are further complicated by the fact that minimum kernel discrepancy estimation is not invariant to how the data are represented, i.e. the choice of $\mathcal{X}$ itself, and both theoretical and practical guidence is needed.

\item \textbf{Insight into machine learning.}  Progress in machine learning is rapid, and in 2021 it was demonstrated that GANs are out-performed by so-called \emph{diffusion models} on a range of image-generation tasks \citep{dhariwal2021diffusion}.
Such applications are characterised by large training datasets, so $P_n$ is approximately equal to $P$, and high-dimensional $\mathcal{X}$, so there is a loose analogy to the setting where quasi Monte Carlo is studied.
One might speculate on whether insights from quasi Monte Carlo might help to explain the limited performance of GANs on such high-dimensional imaging tasks, or even suggest ways in which GANs might be improved.

\item \textbf{Probability in Hilbert spaces.}  The strong consistency of the minimum kernel discrepancy estimator was established, for a $\theta$-independent kernel $k$, as a consequence of the strong law of large numbers in $\mathcal{H}(k)$.
For a $\theta$-independent kernel it may also be possible also to establish asymptotic normality from the Hilbert space version of the central limit theorem.
This approach would arguably be more elegant than the one we presented, avoiding the need to deal with V-statistics.
However, it is not clear to this author whether such a strategy can be applied in the case of a $\theta$-dependent kernel.

\item \textbf{Confidence sets.}  An estimate $\theta_n$ for a parameter of interest $\theta_\star$ is only useful if the precision of $\theta_n$ can be estimated.
\Cref{thm: normal} suggests the asymptotically valid $100 \gamma \%$ confidence set
$$
C_n^\gamma := \{ \theta \in \Theta : \| \Sigma_n^{-1/2} \Gamma_n (\theta - \theta_n) \|^2 \leq F_{\chi_p^2}(\gamma)  \} ,
$$
where for simplicity here we assume that $\Theta = \mathbb{R}^p$ and $\Sigma \succ 0$, and we use $F_\cdot$ notation to denote the cumulative probability distribution, here of a chi-squared distribution with $p$ degrees of freedom, $\chi_p^2$.
However, at finite sample sizes $n$, the coverage of these confidence sets could be far from the notional $100\gamma\%$ due to the variance of the plug-in estimators $\Sigma_n$ and $\Gamma_n$ for $\Sigma$ and $\Gamma$.
This problem is exacerbated when $\theta$ is high-dimensional, since these matrices are of dimension $p \times p$ where $p = \text{dim}(\Theta)$.
A variety of strategies for regularised estimation exist and could be explored in this context \citep{lam2020high}.
\end{enumerate}

\paragraph{Acknowledgements}

The author wishes to thank Alessandro Barp, Fran\c{c}ois-Xavier Briol, Andrew Duncan, Jeremias Knoblauch, Lester Mackey, Takuo Matsubara, and an anonymous reviewer for comments on an earlier version of this manuscript, and financial support from EP/W019590/1 and EP/N510129/1.

\appendix 

\section*{Appendices}

\Cref{app: kernel embed} discusses sufficient conditions for $P \in \mathcal{P}_k(\mathcal{X})$.
\Cref{app: aux} contains auxiliary results from probability and calculus that we used.
\Cref{subsec: discuss assumptions,app: assumptions derive} unpack and verify theoretical assumptions for the exponential family model.

\section{Kernel Mean Embedding} \label{app: kernel embed}

In this section, $P$ is a distribution on a measurable space $\mathcal{X}$, and $k$ is a measurable and symmetric positive definite kernel with reproducing kernel Hilbert space denoted $\mathcal{H}(k)$.
The central issue considered here is when $\mathrm{P}(h) = \int h \; \mathrm{d}P$ is a bounded linear functional on $\mathcal{H}(k)$.
Our presentation here follows \citet[][Appendix C]{barp2022targeted}.

\begin{definition}[Scalarly integrable]
A map $\Phi : \mathcal{X} \rightarrow \mathcal{H}(k)$ is said to be \emph{scalarly} $P$-\emph{integrable} if $\{ x \mapsto \langle h , \Phi(x) \rangle_{\mathcal{H}(k)} : h \in \mathcal{H}(k) \} \subset L^1(P)$.
\end{definition}

\noindent The argument used in the following proof can be traced back to \citet{dunford1937integration}, with our account based on \citet[][Lemma 2.1.1]{schwabik2005topics}.

\begin{proposition} \label{prop: Dunford}
If $\Phi : \mathcal{X} \rightarrow \mathcal{H}(k)$ is scalarly $P$-integrable, then 
\begin{eqnarray*}
\mathrm{P}_\Phi : \mathcal{H}(k) & \rightarrow & \mathbb{R} \\
h & \mapsto & \int \langle h , \Phi(x) \rangle_{\mathcal{H}(k)} \; \mathrm{d}P(x) 
\end{eqnarray*}
is a bounded linear functional.
\end{proposition}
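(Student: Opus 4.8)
The plan is to reduce everything to boundedness, since linearity of $\mathrm{P}_\Phi$ is immediate from linearity of the inner product and of the integral. The strategy is to factor $\mathrm{P}_\Phi$ through $L^1(P)$ and invoke the closed graph theorem.

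First I would introduce the linear map $T : \mathcal{H}(k) \to L^1(P)$ defined by $(Th)(x) = \langle h , \Phi(x) \rangle_{\mathcal{H}(k)}$. Scalar $P$-integrability of $\Phi$ is precisely the statement that $Th \in L^1(P)$ for every $h \in \mathcal{H}(k)$, so $T$ is well-defined, and it is visibly linear. Moreover $\mathrm{P}_\Phi = \iota \circ T$, where $\iota : L^1(P) \to \mathbb{R}$, $\iota(g) = \int g \, \mathrm{d}P$, is a bounded linear functional with $\|\iota\| = 1$. Hence it is enough to show that $T$ is bounded.

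The key step is to check that the graph of $T$ is closed. Suppose $h_n \to h$ in $\mathcal{H}(k)$ and $Th_n \to g$ in $L^1(P)$. Continuity of the inner product gives, for every fixed $x \in \mathcal{X}$, that $(Th_n)(x) = \langle h_n, \Phi(x)\rangle_{\mathcal{H}(k)} \to \langle h, \Phi(x)\rangle_{\mathcal{H}(k)} = (Th)(x)$, i.e.\ $Th_n \to Th$ pointwise on $\mathcal{X}$. On the other hand, $L^1(P)$-convergence of $Th_n$ to $g$ furnishes a subsequence with $Th_{n_j} \to g$ $P$-almost everywhere. Comparing the two limits along this subsequence forces $g = Th$ in $L^1(P)$, so the graph is closed. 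Since $\mathcal{H}(k)$ and $L^1(P)$ are Banach spaces, the closed graph theorem yields that $T$ is bounded, whence $|\mathrm{P}_\Phi(h)| = \left| \int (Th)\,\mathrm{d}P \right| \le \|Th\|_{L^1(P)} \le \|T\|\,\|h\|_{\mathcal{H}(k)}$, which is the assertion.

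The only delicate point — and the closest thing to an obstacle — is the measurability and integrability bookkeeping that makes $T$ into a genuine map into $L^1(P)$: that $x \mapsto \langle h, \Phi(x)\rangle_{\mathcal{H}(k)}$ is measurable for each $h$ is built into the definition of scalar integrability, and passing to equivalence classes in $L^1(P)$ must be reconciled with the everywhere-pointwise convergence used above (handled by taking the a.e.\ convergent subsequence, as done). One could alternatively run the argument through the uniform boundedness principle, applied to the family of functionals $h \mapsto \int_E \langle h, \Phi\rangle_{\mathcal{H}(k)} \, \mathrm{d}P$ indexed by measurable $E \subseteq \mathcal{X}$, but the closed graph route is shorter and is the one I would present.
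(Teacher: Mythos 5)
Your proposal is correct and follows essentially the same route as the paper's proof: both define the linear map $T : \mathcal{H}(k) \rightarrow L^1(P)$, verify its graph is closed by comparing the pointwise limit (from continuity of the inner product) with an almost-everywhere convergent subsequence extracted from the $L^1(P)$ limit, and then invoke the closed graph theorem to conclude boundedness of $T$ and hence of $\mathrm{P}_\Phi$.
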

\begin{proof}
First we claim that the graph of the linear map
\begin{eqnarray*}
T : \mathcal{H}(k) & \rightarrow & L^1(P) \\
h & \mapsto & ( x \mapsto \langle h , \Phi(x) \rangle_{\mathcal{H}(k)} ) .
\end{eqnarray*}
is closed.
To see this, let $h_n \rightarrow h$ in $\mathcal{H}(k)$ and suppose that $T(h_n) \rightarrow g$ in $L^1(P)$.
The claim is that $g$ and $T(h)$ are equal in $L^1(P)$.
Since every sequence converging in $L^1(P)$ has an almost surely converging subsequence, there is a subsequence $(h_{n_i})_{i \in \mathbb{N}}$ such that
$$
\langle h_{n_i} , \Phi(x) \rangle_{\mathcal{H}(k)} \rightarrow g(x)
$$
for $P$-almost all $x \in \mathcal{X}$.
Since $\langle h_n , \Phi(x) \rangle_{\mathcal{H}(k)} \rightarrow \langle h , \Phi(x) \rangle_{\mathcal{H}(k)}$ for all $x \in \mathcal{X}$, it follows that $g(x) = \langle h , \Phi(x) \rangle_{\mathcal{H}(k)} = T(h)(x)$ for $P$-almost all $x \in \mathcal{X}$.
Thus $g$ and $T(h)$ are equal in $L^1(P)$ and the graph of $T$ is indeed closed.
The conditions of the closed graph theorem, which states that a linear map $T$ between Banach spaces is bounded if and only if its graph is closed, have now been verified.
Thus $T$ is bounded and we denote
$$
\|T\| := \sup_{\|h\|_{\mathcal{H}(k)} \leq 1} \int |T(h)| \; \mathrm{d}P < \infty .
$$
This allows us to conclude that $\mathrm{P}_\Phi$ is a bounded linear functional, since
\begin{align*}
|\mathrm{P}_\Phi(h)|
= \left| \int \langle h , \Phi(x) \rangle_{\mathcal{H}(k)} \; \mathrm{d}P(x) \right| 
\leq \int | \langle h , \Phi(x) \rangle_{\mathcal{H}(k)} | \; \mathrm{d}P(x) 
= \int |T(h)| \; \mathrm{d}P 
\leq \|T\| \|h\|_{\mathcal{H}(k)} ,
\end{align*}
as required.
\end{proof}

\begin{corollary}[Characterisation of $\mathcal{P}_k(\mathcal{X})$]
$P \in \mathcal{P}_k(\mathcal{X})$ if and only if $\mathcal{H}(k) \subset L^1(P)$.
\end{corollary}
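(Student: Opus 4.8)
The plan is to obtain both implications by specialising \Cref{prop: Dunford} to the canonical feature map $\Phi : \mathcal{X} \rightarrow \mathcal{H}(k)$, $\Phi(x) = k(\cdot,x)$. The key observation is that, by the reproducing property, $\langle h , \Phi(x) \rangle_{\mathcal{H}(k)} = h(x)$ for every $h \in \mathcal{H}(k)$ and every $x \in \mathcal{X}$. Hence the family $\{ x \mapsto \langle h , \Phi(x) \rangle_{\mathcal{H}(k)} : h \in \mathcal{H}(k) \}$ appearing in the definition of scalar $P$-integrability is precisely $\mathcal{H}(k)$ itself, viewed as a set of (measurable) functions on $\mathcal{X}$. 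Consequently, $\Phi$ is scalarly $P$-integrable if and only if $\mathcal{H}(k) \subset L^1(P)$.

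For the forward implication, suppose $P \in \mathcal{P}_k(\mathcal{X})$. Then, by definition, $\mathrm{P}(h) = \int h \; \mathrm{d}P$ is a well-defined real number for every $h \in \mathcal{H}(k)$; in particular $\int |h| \; \mathrm{d}P < \infty$, i.e. $h \in L^1(P)$, for every $h \in \mathcal{H}(k)$, which is exactly the statement $\mathcal{H}(k) \subset L^1(P)$. (Boundedness of $\mathrm{P}$ is not even used for this direction.)

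For the converse, assume $\mathcal{H}(k) \subset L^1(P)$. By the observation above, $\Phi$ is scalarly $P$-integrable, so \Cref{prop: Dunford} applies and $\mathrm{P}_\Phi : h \mapsto \int \langle h , \Phi(x) \rangle_{\mathcal{H}(k)} \; \mathrm{d}P(x)$ is a bounded linear functional on $\mathcal{H}(k)$. Invoking the reproducing property once more, $\mathrm{P}_\Phi(h) = \int h(x) \; \mathrm{d}P(x) = \mathrm{P}(h)$ for all $h \in \mathcal{H}(k)$, so $\mathrm{P}$ is a bounded linear functional and $P \in \mathcal{P}_k(\mathcal{X})$.

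There is no serious obstacle; the only point requiring care is the identification $\langle h , k(\cdot,x) \rangle_{\mathcal{H}(k)} = h(x)$, which converts the abstract scalar-integrability hypothesis of \Cref{prop: Dunford} into the concrete condition $\mathcal{H}(k) \subset L^1(P)$, together with the (immediate) verification that the functional $\mathrm{P}_\Phi$ produced by that proposition coincides with $\mathrm{P}$.
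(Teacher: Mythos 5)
Your proof is correct and takes essentially the same route as the paper's: both specialise \Cref{prop: Dunford} to the canonical feature map $\Phi(x) = k(\cdot,x)$ and use the reproducing property to identify scalar $P$-integrability of $\Phi$ with the condition $\mathcal{H}(k) \subset L^1(P)$. The only cosmetic difference is that the paper states the forward implication contrapositively ($\mathcal{H}(k) \not\subset L^1(P)$ implies $\mathrm{P}$ is not a bounded, or even well-defined, functional), whereas you argue it directly from the well-definedness of $\mathrm{P}(h)$.
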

\begin{proof}
Take $\Phi(x) = k(\cdot,x)$ to be the \emph{canonical feature map}, so that $k$ being measurable implies the scalar functions $x \mapsto \langle h , \Phi(x) \rangle = h(x)$ are measurable and $\mathrm{P} = \mathrm{P}_\Phi$.
If $\mathcal{H}(k) \subset L^1(P)$ then $x \mapsto k(\cdot,x)$ is scalarly $P$-integrable, and \Cref{prop: Dunford} shows that $\mathrm{P}$ is a bounded linear functional.
Conversely, if $\mathcal{H}(k) \not\subset L^1(P)$ then it is clear that $\mathrm{P}$ is not bounded and thus $P \notin \mathcal{P}_k(\mathcal{X})$.
\end{proof}

\noindent The question then reduces to when $\mathcal{H}(k) \subset L^1(P)$.

\begin{proposition}
If $\int \sqrt{k(x,x)} \mathrm{d}P(x) < \infty$ then $\mathcal{H}(k) \subset L^1(P)$.
\end{proposition}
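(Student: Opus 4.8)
The plan is to bound $|h(x)|$ pointwise using the reproducing property and Cauchy--Schwarz, and then integrate. First, recall that for any $h \in \mathcal{H}(k)$ and $x \in \mathcal{X}$ the reproducing property gives $h(x) = \langle h , k(\cdot,x) \rangle_{\mathcal{H}(k)}$, and that $\|k(\cdot,x)\|_{\mathcal{H}(k)}^2 = \langle k(\cdot,x) , k(\cdot,x) \rangle_{\mathcal{H}(k)} = k(x,x)$. Applying the Cauchy--Schwarz inequality in $\mathcal{H}(k)$ then yields
\begin{align*}
|h(x)| = \left| \langle h , k(\cdot,x) \rangle_{\mathcal{H}(k)} \right| \leq \|h\|_{\mathcal{H}(k)} \, \|k(\cdot,x)\|_{\mathcal{H}(k)} = \|h\|_{\mathcal{H}(k)} \sqrt{k(x,x)}
\end{align*}
for every $x \in \mathcal{X}$.

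Next I would integrate this inequality against $P$. Since the elements of $\mathcal{H}(k)$ are measurable functions on $\mathcal{X}$ (as recalled in \Cref{subsec: setup}, citing Chapter~4 of \citet{steinwart2008support}), the map $x \mapsto |h(x)|$ is measurable, so the integral $\int |h| \, \mathrm{d}P$ is well-defined in $[0,\infty]$. Using the pointwise bound and the hypothesis $\int \sqrt{k(x,x)} \, \mathrm{d}P(x) < \infty$,
\begin{align*}
\int |h(x)| \, \mathrm{d}P(x) \leq \|h\|_{\mathcal{H}(k)} \int \sqrt{k(x,x)} \, \mathrm{d}P(x) < \infty ,
\end{align*}
so $h \in L^1(P)$. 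Since $h \in \mathcal{H}(k)$ was arbitrary, this shows $\mathcal{H}(k) \subset L^1(P)$, as claimed.

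There is no real obstacle here: the argument is a one-line consequence of the reproducing property together with Cauchy--Schwarz, and the only point requiring care is the measurability of each $h \in \mathcal{H}(k)$, which has already been established in \Cref{subsec: setup}. Combined with the preceding Corollary (characterising $\mathcal{P}_k(\mathcal{X})$ as those $P$ with $\mathcal{H}(k) \subset L^1(P)$), this proposition recovers the sufficient condition $\int \sqrt{k(x,x)} \, \mathrm{d}P(x) < \infty$ for $P \in \mathcal{P}_k(\mathcal{X})$ that is used throughout the paper, in particular in assumption \ref{ass: k is l1}.
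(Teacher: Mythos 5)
Your proposal is correct and follows essentially the same argument as the paper: the reproducing property plus Cauchy--Schwarz gives the pointwise bound $|h(x)| \leq \|h\|_{\mathcal{H}(k)} \sqrt{k(x,x)}$, which is then integrated against $P$. The additional remark on measurability of elements of $\mathcal{H}(k)$ is a sensible, if minor, elaboration that the paper leaves implicit.
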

\begin{proof}
For $h \in \mathcal{H}(k)$, from the reproducing property and Cauchy--Schwarz,
\begin{align*}
\int |h(x)| \; \mathrm{d}P(x) = \int |\langle h , k(\cdot,x) \rangle | \; \mathrm{d}P(x) \leq \|h\|_{\mathcal{H}(k)} \int \|k(\cdot,x)\|_{\mathcal{H}(k)} \; \mathrm{d}P(x) ,
\end{align*}
where the reproducing property again yields $\|k(\cdot,x)\|_{\mathcal{H}(k)} = \sqrt{k(x,x)}$, as required.
\end{proof}

\noindent One can weaken the above integrability condition under mild assumptions on $k$ and $\mathcal{X}$: 

\begin{proposition} \label{prop: double integral condition}
If $k$ is continuous, $\mathcal{X}$ is separable, and $\iint |k(x,y)| \; \mathrm{d}P(x) \mathrm{d}P(y) < \infty$, then $\mathcal{H}(k) \subset L^1(P)$.
\end{proposition}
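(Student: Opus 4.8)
The plan is to fix an arbitrary $h \in \mathcal{H}(k)$ and show directly that $\int |h| \, \mathrm{d}P < \infty$. The naive estimate $|h(x)| = |\langle h, k(\cdot,x)\rangle_{\mathcal{H}(k)}| \le \|h\|_{\mathcal{H}(k)}\sqrt{k(x,x)}$ is not enough here, since the hypothesis does not force $x \mapsto \sqrt{k(x,x)}$ to be $P$-integrable; indeed, for a positive semi-definite kernel $|k(x,y)| \le \sqrt{k(x,x)k(y,y)}$, so $\iint |k(x,y)|\,\mathrm{d}P(x)\,\mathrm{d}P(y) \le \left(\int \sqrt{k(x,x)}\,\mathrm{d}P(x)\right)^2$ and the double-integral condition is genuinely weaker than the one in the previous proposition. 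Instead I would exhaust $\mathcal{X}$ by the sets $A_n := \{x \in \mathcal{X} : k(x,x) \le n\}$, which are measurable (continuity of $k$ makes $x \mapsto k(x,x)$ continuous) and increase to $\mathcal{X}$ (as $k$ is real-valued), and run a sign-weighted Bochner-integral argument on each $A_n$.

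Concretely, set $c_n := \operatorname{sgn}(h)\,\mathbf{1}_{A_n}$, a bounded measurable function with $|c_n| \le 1$ that vanishes off $A_n$ and satisfies $c_n(x) h(x) = |h(x)|\,\mathbf{1}_{A_n}(x)$. Continuity of $k$ gives $\|k(\cdot,x) - k(\cdot,x')\|_{\mathcal{H}(k)}^2 = k(x,x) - 2k(x,x') + k(x',x')$, so $x \mapsto k(\cdot,x)$ is continuous from $\mathcal{X}$ into $\mathcal{H}(k)$; since $\mathcal{X}$ is separable its range is separable, and being weakly measurable (as $\langle h, k(\cdot,x)\rangle_{\mathcal{H}(k)} = h(x)$ is measurable for every $h$) it is strongly measurable. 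Hence $x \mapsto c_n(x) k(\cdot,x)$ is strongly measurable, and on $A_n$ it is bounded in $\mathcal{H}(k)$-norm by $\sqrt{n}$, so $\int \|c_n(x)k(\cdot,x)\|_{\mathcal{H}(k)}\,\mathrm{d}P(x) \le \sqrt{n} < \infty$ and the Bochner integral $\psi_n := \int c_n(x)\,k(\cdot,x)\,\mathrm{d}P(x)$ exists in $\mathcal{H}(k)$. Since Bochner integrals commute with bounded linear functionals, the reproducing property gives $\langle g, \psi_n\rangle_{\mathcal{H}(k)} = \int c_n(x) g(x)\,\mathrm{d}P(x)$ for every $g \in \mathcal{H}(k)$; choosing $g = h$ yields $\langle h, \psi_n\rangle_{\mathcal{H}(k)} = \int_{A_n} |h|\,\mathrm{d}P$, and choosing $g = k(\cdot,x)$ yields $\psi_n(x) = \int c_n(y) k(x,y)\,\mathrm{d}P(y)$, whence by Tonelli's theorem (licit since $|c_n(x)c_n(y)k(x,y)| \le |k(x,y)| \in L^1(P\times P)$)
\begin{align*}
\|\psi_n\|_{\mathcal{H}(k)}^2 = \iint c_n(x)c_n(y)\,k(x,y)\,\mathrm{d}P(x)\,\mathrm{d}P(y) \le \iint |k(x,y)|\,\mathrm{d}P(x)\,\mathrm{d}P(y) =: M < \infty .
\end{align*}
Cauchy--Schwarz in $\mathcal{H}(k)$ then gives $\int_{A_n} |h|\,\mathrm{d}P = \langle h, \psi_n\rangle_{\mathcal{H}(k)} \le \|h\|_{\mathcal{H}(k)}\sqrt{M}$ for every $n$, and since $A_n \uparrow \mathcal{X}$ the monotone convergence theorem yields $\int |h|\,\mathrm{d}P \le \|h\|_{\mathcal{H}(k)}\sqrt{M} < \infty$. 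As $h \in \mathcal{H}(k)$ was arbitrary, $\mathcal{H}(k) \subset L^1(P)$, as required.

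The main obstacle is that the object one would like to use, $\int k(\cdot,x)\,\mathrm{d}P(x)$, need not exist as a Bochner integral in $\mathcal{H}(k)$ precisely in the regime this proposition is designed to cover, so one cannot simply invoke the kernel mean element. The truncation by $A_n$ restores integrability at each finite stage, while the double-integral hypothesis supplies the bound on $\|\psi_n\|_{\mathcal{H}(k)}$ that is \emph{uniform in $n$} and hence survives passage to the limit. The remaining points needing care are measurability-theoretic: joint $(P\times P)$-measurability of $k$, used for Tonelli, follows from continuity of $k$ (separability of $\mathcal{X}$ ensuring the Borel $\sigma$-algebra of $\mathcal{X}\times\mathcal{X}$ agrees with the product $\sigma$-algebra), and strong measurability of the integrands follows, as above, from continuity of the feature map together with separability of $\mathcal{X}$.
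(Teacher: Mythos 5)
Your proof is correct, and it takes a genuinely different route from the paper's. The paper disposes of this proposition by citing the machinery of Carmeli, De Vito and Toigo: measurability plus the double-integral condition make $k$ ``$\infty$-bounded'', continuity of $k$ and separability of $\mathcal{X}$ make $\mathcal{H}(k)$ separable, and for separable $\mathcal{H}(k)$ the $\infty$-boundedness of $k$ is equivalent to $\mathcal{H}(k) \subset L^1(P)$. You instead give a self-contained argument: truncate to $A_n = \{k(x,x) \le n\}$ so that the sign-weighted Bochner integral $\psi_n$ exists at each finite stage, observe via the reproducing property and Fubini that $\|\psi_n\|_{\mathcal{H}(k)}^2$ is bounded by $\iint |k|\,\mathrm{d}P\,\mathrm{d}P$ \emph{uniformly in} $n$, and pass to the limit by monotone convergence. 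The steps all check out: the Pettis-measurability argument for the feature map, the domination $|c_n(x)c_n(y)k(x,y)| \le |k(x,y)|$ licensing Fubini, and the identification of the Borel and product $\sigma$-algebras on $\mathcal{X}\times\mathcal{X}$ via second countability are each handled correctly. The paper's approach buys brevity and plugs the result into an established framework (where the equivalence is in fact an if-and-only-if); yours buys transparency --- it makes visible exactly where the double-integral hypothesis enters (as the $n$-uniform bound on $\|\psi_n\|$) and why it can replace the stronger trace condition $\int\sqrt{k(x,x)}\,\mathrm{d}P < \infty$, at the cost of a longer argument and slightly more measure-theoretic bookkeeping.
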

\begin{proof}
Since $k$ is measurable and real-valued, $k$ is \emph{strongly measurable} in the sense of \citet[][Section 3.1]{carmeli2006vector}.
Further, since $k$ is strongly measurable and $\iint |k(x,y)| \; \mathrm{d}P(x) \mathrm{d}P(y) < \infty$, then $k$ is \emph{$\infty$-bounded} in the sense of \citet[][Definition 4.1]{carmeli2006vector}; see \citet[][Corollary 4.3]{carmeli2006vector}.
Since $k$ is continuous and $\mathcal{X}$ is separable, it follows that $\mathcal{H}(k)$ is separable \citep[][Corollary 5.2]{carmeli2006vector}.
Since $\mathcal{H}(k)$ is separable, $k$ being $\infty$-bounded is equivalent to $\mathcal{H}(k) \subset L^1(P)$ \citep[][Proposition 4.4]{carmeli2006vector}.
\end{proof}

\section{Auxiliary Results}
\label{app: aux}

The following auxiliary results were used for the proofs in the main text:

\begin{theorem}[Strong law of large numbers in a Banach space] \label{thm: SLLN}
Let $\mathcal{B}$ be a separable Banach space and let $(\zeta_n)_{n \in \mathbb{N}}$ be a sequence of independent and identically distributed random variables taking values in $\mathcal{B}$ such that $\mathbb{E}[\|\zeta_n\|_{\mathcal{B}}] < \infty$ and $\mathbb{E}[\zeta_n] = 0$.
Then
$$
\frac{1}{n} \sum_{i=1}^n \zeta_i \rightarrowas 0 .
$$
\end{theorem}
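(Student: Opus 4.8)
The plan is to reduce the Banach-space statement to the classical real-valued (Kolmogorov) strong law by splitting each $\zeta_i$ into a part living in a fixed finite-dimensional subspace and a small remainder, using separability to make such a splitting possible, and then handling the two parts by the scalar strong law.

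First I would record that, since $\mathcal{B}$ is separable, the random variable $\zeta_1$ is strongly measurable, so the hypothesis $\mathbb{E}\|\zeta_1\|_{\mathcal{B}}<\infty$ makes $\zeta_1$ Bochner integrable. Hence, given $\epsilon>0$, there is a $\zeta_1$-measurable simple random variable $g=\sum_{k=1}^m b_k\ind_{A_k}$ with $\mathbb{E}\|\zeta_1-g\|_{\mathcal{B}}<\epsilon$; let $F=\mathrm{span}\{b_1,\dots,b_m\}$, a finite-dimensional subspace containing all values of $g$, and set $r=\zeta_1-g$. Applying the same measurable maps to each $\zeta_i$ yields i.i.d. decompositions $\zeta_i=g_i+r_i$ with $(g_i)_{i\in\mathbb{N}}$ i.i.d. and $F$-valued, $(r_i)_{i\in\mathbb{N}}$ i.i.d. with $\mathbb{E}\|r_i\|_{\mathcal{B}}<\epsilon$, and, because $\mathbb{E}[\zeta_1]=0$, the vector $a:=\mathbb{E}[g_1]=-\mathbb{E}[r_1]$ satisfies $\|a\|_{\mathcal{B}}\le\mathbb{E}\|r_1\|_{\mathcal{B}}<\epsilon$.

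Next I would treat the two pieces separately. For the finite-dimensional part, fixing a basis of $F$ and invoking the classical strong law coordinatewise gives $\frac1n\sum_{i=1}^n g_i\rightarrowas a$ (the coordinate variables are integrable since $\mathbb{E}\|g_1\|_{\mathcal{B}}\le\mathbb{E}\|\zeta_1\|_{\mathcal{B}}+\epsilon<\infty$). For the remainder, the real-valued strong law applied to the i.i.d. integrable variables $\|r_i\|_{\mathcal{B}}$ gives $\frac1n\sum_{i=1}^n\|r_i\|_{\mathcal{B}}\rightarrowas\mathbb{E}\|r_1\|_{\mathcal{B}}$, so by the triangle inequality $\limsup_n\|\frac1n\sum_{i=1}^n r_i\|_{\mathcal{B}}\le\mathbb{E}\|r_1\|_{\mathcal{B}}$ almost surely. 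Combining, on an event of probability one,
$$
\limsup_{n\to\infty}\left\|\frac1n\sum_{i=1}^n\zeta_i\right\|_{\mathcal{B}}\le\lim_{n\to\infty}\left\|\frac1n\sum_{i=1}^n g_i\right\|_{\mathcal{B}}+\limsup_{n\to\infty}\left\|\frac1n\sum_{i=1}^n r_i\right\|_{\mathcal{B}}\le\|a\|_{\mathcal{B}}+\mathbb{E}\|r_1\|_{\mathcal{B}}<2\epsilon .
$$
Running this through $\epsilon=1/j$, $j\in\mathbb{N}$, and intersecting the resulting countably many probability-one events then forces $\limsup_n\|\frac1n\sum_{i=1}^n\zeta_i\|_{\mathcal{B}}=0$ almost surely, which is the assertion.

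The only step I expect to require care is the finite-dimensional approximation at the outset: this is where separability is genuinely used (to get strong measurability, hence Bochner integrability of $\zeta_1$, hence $L^1$-approximation by simple functions), and one must ensure the approximating object is a \emph{measurable function of} $\zeta_1$, so that applying it to each $\zeta_i$ preserves both the i.i.d.\ structure and the independence needed for the coordinatewise strong law. Everything after that is a routine reduction to the scalar case.
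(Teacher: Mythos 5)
Your argument is correct, but it is worth noting that the paper does not actually prove this statement at all: it simply cites \citet[Corollary~7.10]{ledoux1991probability}, whose proof in that reference is routed through more general machinery for sums of independent Banach-valued variables. What you have written is instead the classical self-contained proof of Mourier's strong law: approximate $\zeta_1$ in $L^1$ by a finitely-valued Borel function $\phi(\zeta_1)$ of itself (this is exactly where separability enters, via Pettis measurability and the simple-function approximation of Bochner integrable maps, and the Doob--Dynkin step guarantees the same $\phi$ can be applied to every $\zeta_i$, preserving the i.i.d.\ structure of both $g_i=\phi(\zeta_i)$ and $r_i=\zeta_i-g_i$); handle the finite-dimensional part coordinatewise by Kolmogorov's scalar SLLN; dominate the remainder by the scalar SLLN for $\|r_i\|_{\mathcal{B}}$; and let $\epsilon=1/j$ run over a countable set. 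All the estimates you use -- $\|\mathbb{E}[r_1]\|_{\mathcal{B}}\le\mathbb{E}\|r_1\|_{\mathcal{B}}$ for the Bochner integral, the triangle inequality for the Ces\`{a}ro averages, and the intersection of countably many full-measure events -- are sound. The trade-off is the usual one: the citation keeps the paper short and leans on a standard reference, while your argument is elementary, uses nothing beyond the scalar strong law, and makes transparent precisely where separability and the first-moment hypothesis are needed; it would serve perfectly well as a self-contained replacement for the paper's one-line proof.
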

\begin{proof}
See \citet[][Corollary 7.10]{ledoux1991probability}.
\end{proof}

\begin{theorem}[Strong law of larger numbers for V-statistics] \label{thm: pfister C4}
Let $P$ be a distribution on a measurable space $\mathcal{X}$, and let $(x_n)_{n \in \mathbb{N}}$ be a sequence of independent draws from $P$.
Let $v : \mathcal{X} \times \mathcal{X} \rightarrow \mathbb{R}$ be a measurable symmetric function with $\iint |v(x,y)| \mathrm{d}P(x) \mathrm{d}P(y) < \infty$ and $\int |v(x,x)| \mathrm{d}P(x) < \infty$.
Then 
$$
\frac{1}{n^2} \sum_{i=1}^n \sum_{j=1}^n v(x_i,x_j) \rightarrowas \iint v(x,y) \mathrm{d}P(x) \mathrm{d}P(y) .
$$
\end{theorem}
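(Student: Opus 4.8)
The plan is to split the double sum into its off-diagonal and diagonal parts, which have genuinely different asymptotic behaviour, and to handle each in turn. Using the symmetry of $v$, write
\begin{align*}
\frac{1}{n^2}\sum_{i=1}^{n}\sum_{j=1}^{n} v(x_i,x_j)
= \underbrace{\frac{1}{n^2}\sum_{\substack{i,j=1\\ i\neq j}}^{n} v(x_i,x_j)}_{=\;\frac{n-1}{n}\,U_n}
\;+\; \underbrace{\frac{1}{n^2}\sum_{i=1}^{n} v(x_i,x_i)}_{=\;\frac{1}{n}\,W_n},
\end{align*}
where $U_n := \binom{n}{2}^{-1}\sum_{1\le i<j\le n} v(x_i,x_j)$ is the degree-two U-statistic with symmetric kernel $v$, and $W_n := \frac{1}{n}\sum_{i=1}^n v(x_i,x_i)$. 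The diagonal term is the easy one: the map $x \mapsto v(x,x)$ is measurable, being the composition of the measurable diagonal embedding $x\mapsto(x,x)$ with $v$, so $(v(x_i,x_i))_{i}$ is an i.i.d.\ sequence with $\mathbb{E}|v(x_1,x_1)| = \int |v(x,x)|\,\mathrm{d}P(x) < \infty$ by hypothesis; Kolmogorov's strong law of large numbers gives $W_n \rightarrowas \int v(x,x)\,\mathrm{d}P(x)$, and multiplying by $1/n \to 0$ shows that the diagonal contribution $\frac{1}{n}W_n \rightarrowas 0$.

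For the off-diagonal term I would invoke the strong law of large numbers for U-statistics, which requires only the first-moment condition $\mathbb{E}|v(x_1,x_2)| = \iint |v(x,y)|\,\mathrm{d}P(x)\mathrm{d}P(y) < \infty$ — exactly the remaining hypothesis. Either one cites Hoeffding's theorem directly, or one supplies the short reverse-martingale argument: letting $\mathcal{G}_n$ denote the $\sigma$-algebra of events invariant under permutations of $(x_1,\dots,x_n)$, one has the nested family $\mathcal{G}_1 \supseteq \mathcal{G}_2 \supseteq \cdots$ together with the identity $U_n = \mathbb{E}[\,v(x_1,x_2)\mid \mathcal{G}_n]$ (by exchangeability, $\mathbb{E}[v(x_i,x_j)\mid\mathcal{G}_n]$ is the same for every pair $i\neq j\le n$, hence equals the average $U_n$). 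Thus $(U_n)_{n\ge 2}$ is a reverse martingale, and Doob's reverse-martingale convergence theorem yields $U_n \rightarrowas \mathbb{E}[v(x_1,x_2)\mid \mathcal{G}_\infty]$ with $\mathcal{G}_\infty = \bigcap_n \mathcal{G}_n$ the exchangeable $\sigma$-field; since $(x_n)$ is i.i.d., the Hewitt--Savage zero--one law makes $\mathcal{G}_\infty$ trivial, so the limit is the constant $\mathbb{E}[v(x_1,x_2)] = \iint v(x,y)\,\mathrm{d}P(x)\mathrm{d}P(y)$. As $(n-1)/n \to 1$, the off-diagonal term converges a.s.\ to this same constant. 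Intersecting the two probability-one events then gives the claim.

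The main obstacle is the U-statistic strong law itself: a naive second-moment/Borel--Cantelli route would need $v \in L^2(P\times P)$, which is not assumed, so one genuinely relies on the reverse-martingale structure (the conditional-expectation identity for $U_n$ plus the Hewitt--Savage law) to get away with only integrability. Everything else — the algebraic splitting, the vanishing of the normalised diagonal, and combining the two limits — is routine bookkeeping.
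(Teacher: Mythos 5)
Your proof is correct and follows essentially the same route as the paper: decompose the V-statistic into the off-diagonal U-statistic part (handled by Hoeffding's strong law for U-statistics under the first-moment condition) plus the diagonal part (an i.i.d.\ average killed by the extra factor of $1/n$ via the ordinary strong law), then combine. The only addition is your sketch of the reverse-martingale/Hewitt--Savage proof of the U-statistic strong law, which the paper simply cites.
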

\begin{proof}
Let
\begin{align*}
V_n := \frac{1}{n^2} \sum_{i=1}^n \sum_{j=1}^n v(x_i,x_j), \qquad U_n := \frac{1}{n(n-1)} \sum_{i=1}^n \sum_{j \neq i} v(x_i,x_j), \qquad R_n := \sum_{i=1}^n v(x_i , x_i) .
\end{align*}
From the usual strong law of large numbers, we have that $n^{-1} R_n \rightarrowas \int v(x,x) \; \mathrm{d}P(x) < \infty$ and thus $n^{-2} R_n \rightarrowas 0$.
From the strong law of large numbers for U-statistics, due to \citet{hoeffding1961strong}, we have that $U_n \rightarrowas \iint v(x,y) \mathrm{d}P(x) \mathrm{d}P(y)$.
Thus
$$
V_n = \frac{n-1}{n} U_n + \frac{1}{n^2} R_n \rightarrowas 1 \times \iint v(x,y) \mathrm{d}P(x) \mathrm{d}P(y) + 0 ,
$$
which completes the argument.
\end{proof}

\begin{theorem}[Asymptotic normality of V-statistics] \label{thm: pfister C7}
Let $P$ be a distribution on a measurable space $\mathcal{X}$, and let $(x_n)_{n \in \mathbb{N}}$ be a sequence of independent draws from $P$.
Let $v : \mathcal{X} \times \mathcal{X} \rightarrow \mathbb{R}$ be a measurable symmetric function with $\iint v(x,y)^2 \mathrm{d}P(x) \mathrm{d}P(y) < \infty$.
Let $\sigma^2 = \mathbb{V}_{X \sim P}[ \int v(X,y) \mathrm{d}P(y) ]$.
Then
$$
\sqrt{n}\left( \frac{1}{n^2} \sum_{i=1}^n \sum_{j=1}^n v(x_i,x_j) - \iint v(x,y) \mathrm{d}P(x) \mathrm{d}P(y) \right) \rightarrowd \mathcal{N}(0, 4 \sigma^2) ,
$$
where we interpret $\mathcal{N}(0,0)$ as a point mass $\delta_0$ at $0 \in \mathcal{X}$.
\end{theorem}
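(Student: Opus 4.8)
The plan is to use the Hoeffding (Hájek) projection of the V-statistic onto a sum of i.i.d.\ terms. Write $\theta := \iint v(x,y)\,\mathrm{d}P(x)\,\mathrm{d}P(y)$, which is finite by Cauchy--Schwarz, and set $v_1(x) := \int v(x,y)\,\mathrm{d}P(y) - \theta$ and $v_2(x,y) := v(x,y) - \theta - v_1(x) - v_1(y)$. Jensen's inequality together with $\iint v^2\,\mathrm{d}P\,\mathrm{d}P < \infty$ gives $\mathbb{E}[v_1(X)] = 0$ and $\mathbb{E}[v_1(X)^2] = \sigma^2 < \infty$, while a direct computation shows the \emph{degeneracy} $\int v_2(x,y)\,\mathrm{d}P(y) = 0$ for $P$-a.e.\ $x$ and, by the triangle inequality in $L^2(P\times P)$, that $\iint v_2^2\,\mathrm{d}P\,\mathrm{d}P < \infty$. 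Summing the decomposition over $1 \le i,j \le n$ yields
\begin{align*}
\frac{1}{n^2}\sum_{i=1}^n\sum_{j=1}^n v(x_i,x_j) = \theta + \frac{2}{n}\sum_{i=1}^n v_1(x_i) + \frac{1}{n^2}\sum_{i=1}^n\sum_{j=1}^n v_2(x_i,x_j).
\end{align*}

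Next I would argue that the linear term drives the limit. By the classical central limit theorem $n^{-1/2}\sum_{i=1}^n v_1(x_i) \rightarrowd \mathcal{N}(0,\sigma^2)$, so $2n^{-1/2}\sum_{i=1}^n v_1(x_i) \rightarrowd \mathcal{N}(0, 4\sigma^2)$ (a point mass $\delta_0$ when $\sigma^2 = 0$). By Slutsky's lemma it then suffices to prove that the remainder satisfies $\sqrt{n}\, n^{-2}\sum_{i,j} v_2(x_i,x_j) \rightarrowp 0$.

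To control the remainder I would split it into off-diagonal and diagonal parts. For $D_n := n^{-2}\sum_{i\ne j} v_2(x_i,x_j)$, expanding $\mathbb{E}[D_n^2]$ and using degeneracy of $v_2$: every quadruple in which some index occurs exactly once contributes zero, so only the $O(n^2)$ quadruples with $\{i,j\}=\{k,l\}$ survive, each bounded by $\iint v_2^2\,\mathrm{d}P\,\mathrm{d}P$; hence $\mathbb{E}[D_n^2]=O(n^{-2})$ and Markov's inequality gives $\sqrt{n}\,D_n\rightarrowp 0$. The diagonal part $n^{-2}\sum_{i=1}^n v_2(x_i,x_i)$ is $n^{-1}$ times an average of i.i.d.\ variables; it is $O_p(n^{-1})$, so $\sqrt n$ times it is $o_p(1)$. (An alternative route is to compare $V_n$ with the corresponding U-statistic $U_n = \frac{1}{n(n-1)}\sum_{i\ne j}v(x_i,x_j)$, whose asymptotic normality is classical, via $V_n = \tfrac{n-1}{n}U_n + \tfrac{1}{n^2}\sum_i v(x_i,x_i)$.)

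The step I expect to be the main obstacle is the diagonal bookkeeping: making $n^{-3/2}\sum_i v_2(x_i,x_i) \rightarrowp 0$ rigorous needs integrability of $v$ along the diagonal (e.g.\ $v(X,X) \in L^1(P)$, as appears explicitly in the companion result \Cref{thm: pfister C4}), which one either reads off the square-integrability hypothesis in the concrete cases of interest or adds as a mild extra assumption; everything else is routine once the Hoeffding decomposition is in hand.
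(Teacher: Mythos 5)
Your argument is correct, and it is a genuinely more self-contained route than the paper's. The paper proves this result by two citations: Hoeffding's 1948 CLT for the U-statistic $U_n = \frac{1}{n(n-1)}\sum_{i\neq j} v(x_i,x_j)$ under $\iint v^2\,\mathrm{d}P\,\mathrm{d}P < \infty$, plus a result of Bonner and Kirschner giving $\sqrt{n}(V_n - U_n) \rightarrowp 0$, combined via Slutsky --- i.e.\ exactly the ``alternative route'' you relegate to a parenthetical. What your version buys is a from-scratch proof of the U-statistic CLT itself (the H\'ajek projection, the second-moment bound on the degenerate part using the $O(n^2)$ surviving index quadruples), at the cost of more bookkeeping; what the paper's version buys is brevity. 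Your closing observation about the diagonal is a genuine and worthwhile point rather than a defect of your proof: the theorem as stated assumes only $\iint v^2\,\mathrm{d}P\,\mathrm{d}P < \infty$, which says nothing about $v(x,x)$, yet any proof must control $n^{-3/2}\sum_i v(x_i,x_i)$ (equivalently $n^{-3/2}R_n$ in the paper's notation), and this genuinely requires some integrability of $v$ along the diagonal (e.g.\ $\mathbb{E}|v(X,X)|^{2/3} < \infty$ via Marcinkiewicz--Zygmund, or the $L^1$ condition already present in \Cref{thm: pfister C4}). The paper glosses this by attributing $\sqrt{n}(V_n-U_n)\rightarrowp 0$ solely to $\iint v\,\mathrm{d}P\,\mathrm{d}P < \infty$, so you have in effect identified a hypothesis that is implicitly used but not stated; in the applications in \Cref{thm: normal} it is supplied by \ref{ass: d moment} together with the separate diagonal moment conditions, so nothing downstream breaks.
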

\begin{proof}
This proof uses the same notation as the proof of \Cref{thm: pfister C4}.
From \citet{hoeffding1992class}, since $\iint v(x,y)^2 \mathrm{d}P(x) \mathrm{d}P(y) < \infty$, we have asymptotic normality of the U-statistic
$$
\sqrt{n}\left( U_n - \iint v(x,y) \mathrm{d}P(x) \mathrm{d}P(y) \right) \rightarrowd  \mathcal{N}(0, 4 \sigma^2) 
$$
(see also Section 5.5.1 of \citet{serfling2009approximation}).
From Theorem 1 of \citet{bonner1977note}, since $\iint v(x,y) \mathrm{d}P(x) \mathrm{d}P(y) < \infty$, we have that $\sqrt{n}(V_n - U_n) \rightarrowp 0$ (see also Section 5.7.3 of \citet{serfling2009approximation}).
From Slutsky's theorem, we can combine these facts to arrive at the required result.
\end{proof}

\begin{theorem}[Uniform law of large numbers] \label{thm: ulln}
Let $\Theta \subset \mathbb{R}^p$ be bounded.
Let $f : \Theta \rightarrow \mathbb{R}$ be fixed and consider a sequence $(f_n)_{n \in \mathbb{N}}$ of stochastic functions $f_n : \Theta \rightarrow \mathbb{R}$.
Suppose that 
\begin{itemize}
\item $f_n(\theta) \rightarrowas f(\theta)$ for all $\theta \in \Theta$
\item $|f_n(\theta) - f_n(\vartheta)| \leq B_n \| \theta - \vartheta\|$ for all $\theta, \vartheta \in \Theta$, where $B_n$ does not depend on $\theta, \vartheta$ and a.s. $\limsup B_n < \infty$.
\end{itemize}
Then $\sup_{\theta \in \Theta} |f_n(\theta) - f(\theta)| \rightarrowas 0$.
\end{theorem}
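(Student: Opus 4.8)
The plan is to run the classical finite-net argument for uniform convergence of equi-Lipschitz functions, with some care over the almost-sure bookkeeping. First I would fix a countable dense subset $D \subset \Theta$, which exists since $\Theta \subseteq \mathbb{R}^p$ is separable, and restrict attention to the probability-one event $\Omega_0$ on which (i) $\limsup_n B_n < \infty$, (ii) the bounds $|f_n(\theta) - f_n(\vartheta)| \le B_n \|\theta - \vartheta\|$ hold for all $n$ and all $\theta,\vartheta \in \Theta$, and (iii) $f_n(\theta) \to f(\theta)$ for every $\theta \in D$. Since $D$ is countable, $\Omega_0$ is a countable intersection of probability-one events, so $P(\Omega_0) = 1$.

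Second, I would record that the deterministic limit $f$ is Lipschitz with a \emph{deterministic} constant. For fixed $\theta,\vartheta \in \Theta$ one has $f_n(\theta) \to f(\theta)$ and $f_n(\vartheta) \to f(\vartheta)$ a.s.; intersecting these two probability-one events with $\{\limsup_n B_n \le C\}$, which has positive probability once $C$ is chosen large (as $\limsup_n B_n < \infty$ a.s.), yields $|f(\theta) - f(\vartheta)| = \lim_n |f_n(\theta) - f_n(\vartheta)| \le C \|\theta - \vartheta\|$; since both sides are deterministic and the event is nonempty, the inequality holds, so $f$ is $C$-Lipschitz on $\Theta$.

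Third, the main estimate is the standard three-term split. Fix $\omega \in \Omega_0$ and $\epsilon > 0$. Since $\limsup_n B_n < \infty$ at $\omega$, pick $L < \infty$ and $N_1$ with $B_n \le L$ for all $n \ge N_1$, and set $\delta = \epsilon/(3(L+C))$. As $\Theta$ is bounded, hence totally bounded, and $D$ is dense in $\Theta$, I can choose finitely many $\theta_1,\dots,\theta_M \in D$ whose $\delta$-balls $\{\theta : \|\theta - \theta_j\| < \delta\}$ cover $\Theta$. By (iii) there is $N_2$ with $|f_n(\theta_j) - f(\theta_j)| < \epsilon/3$ for all $j \in \{1,\dots,M\}$ and all $n \ge N_2$. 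Then for any $\theta \in \Theta$, picking $j$ with $\|\theta - \theta_j\| < \delta$ and any $n \ge \max(N_1,N_2)$,
\begin{align*}
|f_n(\theta) - f(\theta)|
&\le |f_n(\theta) - f_n(\theta_j)| + |f_n(\theta_j) - f(\theta_j)| + |f(\theta_j) - f(\theta)| \\
&\le B_n \delta + \tfrac{\epsilon}{3} + C\delta \le (L+C)\delta + \tfrac{\epsilon}{3} = \tfrac{2\epsilon}{3} < \epsilon ,
\end{align*}
uniformly in $\theta$. Hence $\sup_{\theta \in \Theta}|f_n(\theta) - f(\theta)| \to 0$ on $\Omega_0$, which gives the claim.

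I expect the only delicate point to be the almost-sure bookkeeping: the hypothesis supplies pointwise convergence at each fixed $\theta$ separately, so the probability-one event cannot be taken to contain simultaneous convergence at all $\theta \in \Theta$; the remedy is exactly the reduction to the countable dense set $D$ together with the uniform Lipschitz control, which upgrades pointwise convergence on $D$ to uniform convergence on $\Theta$ (and forces $f$ itself to be Lipschitz). Everything else is routine real analysis.
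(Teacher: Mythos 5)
Your proof is correct. For comparison: the paper does not actually prove this result --- it simply cites Theorems 21.8 and 21.10 of Davidson (1994), which together say that pointwise a.s.\ convergence on a totally bounded parameter space plus strong stochastic equicontinuity (which the Lipschitz condition with $\limsup_n B_n < \infty$ a.s.\ delivers) yields uniform a.s.\ convergence. Your finite-net argument is exactly the elementary content packaged inside those cited theorems: reduction to a countable dense set, a $\delta$-net from total boundedness of the bounded set $\Theta \subset \mathbb{R}^p$, and the three-term split. The one place where you supply a genuinely non-routine step that the citation hides is the control of the third term $|f(\theta_j) - f(\theta)|$: since $f$ is only given as a pointwise a.s.\ limit, its regularity must be derived, and your device of intersecting the two pointwise-convergence events with the positive-probability event $\{\limsup_n B_n \le C\}$ to extract a \emph{deterministic} Lipschitz constant for $f$ is a clean and correct way to do this. (A purist might also note that measurability of $\sup_{\theta \in \Theta}|f_n(\theta) - f(\theta)|$ follows from the same Lipschitz control, since the supremum over $\Theta$ agrees with the supremum over $D$; neither you nor the paper dwells on this, and it is harmless.) In short: your argument is a complete, self-contained proof of a statement the paper only references, and it buys the reader independence from Davidson's stochastic-equicontinuity framework at the cost of a page of routine analysis.
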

\begin{proof}
This follows as a combination of Theorem 21.8 and Theorem 21.10 from \citet{davidson1994stochastic}.
\end{proof}

\begin{definition} \label{def: LUIB}
Let $P$ be a distribution on a measurable space $\mathcal{X}$.
A function $f : \Theta \times \mathcal{X} \rightarrow \mathbb{R}^p$ is \emph{locally uniformly integrably bounded} with respect to $P$ if, for every $\theta \in \Theta$, there is a non-negative function $b_\theta : \mathcal{X} \rightarrow \mathbb{R}$ that is integrable with respect to $P$, and an open neighbourhood $U_\theta$ of $\theta$, such that $\|f(\vartheta,x)\| \leq b_\theta(x)$ for all $\vartheta \in U_\theta$.
\end{definition}

\begin{lemma}[Differentiate under the Lebesgue integral] \label{lem: interchange}
Let $P$ be a distribution on a measurable space $\mathcal{X}$.
Let $\Theta \subseteq \mathbb{R}^p$ be an open set.
Let $h : \Theta \times \mathcal{X} \rightarrow \mathbb{R}$ be such that $\partial_\theta h$ is locally uniformly integrably bounded with respect to $P$.
Then 
$$
\partial_\theta \int h(\theta,x) \mathrm{d}P(x) = \int \partial_\theta h(\theta,x) \mathrm{d}P(x) .
$$
\end{lemma}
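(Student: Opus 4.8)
The plan is to reduce the statement to the classical one-variable differentiation-under-the-integral theorem, applied one coordinate at a time, with the local uniform integrable bound of \Cref{def: LUIB} furnishing the dominating function. Write $F(\theta) := \int h(\theta,x)\,\mathrm{d}P(x)$; this is well-defined under the (implicit) requirement that $h(\theta,\cdot) \in L^1(P)$ for each $\theta$, needed for the left-hand side to make sense, and which in any case propagates from a single point by the mean value estimate below. Note that, since we apply \Cref{def: LUIB} to $f = \partial_\theta h$, this already presupposes that $\partial_\theta h$ exists on all of $\Theta \times \mathcal{X}$. Fixing $\theta_0 \in \Theta$ and a coordinate direction $j \in \{1,\dots,p\}$, \Cref{def: LUIB} provides a $P$-integrable $b_{\theta_0} \geq 0$ and an open neighbourhood $U_{\theta_0} \ni \theta_0$, which we may shrink to an open ball, hence convex, such that $\|\partial_\theta h(\vartheta,x)\| \leq b_{\theta_0}(x)$ for all $\vartheta \in U_{\theta_0}$ and all $x \in \mathcal{X}$.

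First I would fix an arbitrary real sequence $t_n \to 0$ with $t_n \neq 0$ and $\theta_0 + t_n e_j \in U_{\theta_0}$ for all $n$, where $e_j$ is the $j$-th coordinate vector. For each $n$ and each $x$, applying the mean value theorem to $s \mapsto h(\theta_0 + s e_j, x)$ on the segment between $0$ and $t_n$ (which lies in $U_{\theta_0}$ by convexity) yields a point $\xi_n(x)$ on that segment with
\[
\frac{h(\theta_0 + t_n e_j, x) - h(\theta_0, x)}{t_n} = \partial_{\theta_j} h(\xi_n(x), x),
\]
so this difference quotient is bounded in absolute value by $\|\partial_\theta h(\xi_n(x),x)\| \leq b_{\theta_0}(x)$, a fixed integrable function independent of $n$. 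Since $h(\cdot,x)$ is differentiable at $\theta_0$, the difference quotients converge pointwise in $x$ to $\partial_{\theta_j} h(\theta_0,x)$. Writing the difference quotient of $F$ as the $P$-integral of the difference quotients of $h$ by linearity, and applying the dominated convergence theorem with dominating function $b_{\theta_0}$, gives
\[
\lim_{n \to \infty} \frac{F(\theta_0 + t_n e_j) - F(\theta_0)}{t_n} = \int \partial_{\theta_j} h(\theta_0, x)\,\mathrm{d}P(x).
\]
As the sequence $t_n$ was arbitrary, $\partial_{\theta_j} F(\theta_0)$ exists and equals $\int \partial_{\theta_j} h(\theta_0,x)\,\mathrm{d}P(x)$; letting $\theta_0$ and $j$ range over $\Theta$ and $\{1,\dots,p\}$ gives $\partial_\theta F = \int \partial_\theta h(\cdot,x)\,\mathrm{d}P(x)$ on $\Theta$, which is the claim. (Measurability in $x$ of the difference quotients, and of their pointwise limit, is inherited from that of $h(\theta,\cdot)$, so the integrals above are legitimate.)

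The only step that genuinely uses the hypothesis, rather than soft manipulation, is the construction of a single $P$-integrable dominating function valid simultaneously for every perturbation $\theta_0 + t_n e_j$ — this is exactly the \emph{local uniformity} clause of \Cref{def: LUIB}, and without it the interchange can fail. I therefore do not anticipate a substantive obstacle: the work is confined to the mean value theorem bookkeeping and, if one prefers not to assume $h(\theta,\cdot) \in L^1(P)$ outright, the brief remark that the same mean value estimate gives $\int |h(\vartheta,x)|\,\mathrm{d}P(x) \leq \int |h(\theta_0,x)|\,\mathrm{d}P(x) + \|\vartheta - \theta_0\| \int b_{\theta_0}(x)\,\mathrm{d}P(x) < \infty$ for $\vartheta \in U_{\theta_0}$ once finiteness is known at $\theta_0$.
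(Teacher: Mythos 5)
Your proof is correct; the paper itself does not spell out an argument but simply cites Aliprantis--Burkinshaw (Theorem 24.5) and Billingsley (Theorem 16.8), and your mean-value-theorem-plus-dominated-convergence argument, with the local uniform integrable bound supplying the dominating function on a convex neighbourhood, is precisely the standard proof given in those references. Your side remarks about the implicit hypotheses (existence of $\partial_\theta h$ on all of $\Theta \times \mathcal{X}$, integrability of $h(\theta,\cdot)$, measurability of the difference quotients) are accurate and, if anything, slightly more careful than the paper.
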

\begin{proof}
See, for example, \citet[][Theorem 24.5, p.193]{aliprantis1998principles}, \citet[][Theorem 16.8, pp.181-182]{billingsley1979probability}.
\end{proof}

\section{Assumptions for the Exponential Family Model}
\label{subsec: discuss assumptions}

To better understand \ref{ass: Hk separable}-\ref{ass: Gamma pdef}, we consider the consequences of these assumptions in the setting of canonical exponential family statistical model 
$$
p_\theta(x) = \exp(\langle \theta , t(x) \rangle - a(\theta) +b(x))
$$
where $x \in \mathcal{X} = \mathbb{R}^d$, whose parameter $\theta \in \Theta \subset \mathbb{R}^p$ is estimated using either the generalised method of moments (\Cref{subsub: assum gmm}) or using a kernel Stein discrepancy (\Cref{subsub: assum ksd}).

\subsection{Estimation via Generalised Method of Moments}
\label{subsub: assum gmm}

Consider the generalised method of moments from \Cref{subsec: summary matching}, which uses a finite rank kernel of the form $k(x,y) = \langle \phi(x),\phi(y) \rangle$.
To simplify the following discussion we focus on the well-specified case where there is a unique $\theta_\star \in \Theta = \mathbb{R}^p$ with $P = P_{\theta_\star}$ and seek conditions under which the generalised moment matching estimator is consistent, with fluctuations that are asymptotically normal.
In this context there are essentially two important conditions that must be satisfied:
\begin{assumptionsB}
\item $\phi$ is uniformly continuous and bounded \label{ass: k uniform conts}
\item $M_{i,j} := \left. \int \phi_j(x) [t_i(x) - \partial_{\theta_i} a(\theta)] \mathrm{d}P(x) \right|_{\theta = \theta_\star}$ has full row rank.  \label{ass: M nonsingular}
\end{assumptionsB}
The first part, \ref{ass: k uniform conts}, is satisfied by (for example) the \emph{Fourier features} popularised in \citet{Rahimi2007}, while \ref{ass: M nonsingular} ensures that the set of features is rich enough to identify the correct model.
In addition to these two main conditions, we require sufficient regularity that, in a bounded open neighbourhood $S$ of $\theta_\star$:
\begin{assumptionsB}
\item $\theta \mapsto \int \phi \; \mathrm{d}P_\theta$ is continuous at each $\theta \in S$ \label{assm: cts features}
\item $\theta \mapsto \int t \; \mathrm{d}P_\theta$ and $\int t t^\top \; \mathrm{d}P_\theta$ are continuous at $\theta_\star$ \label{assum: t and tt cts}
\item $\sup_{\theta \in S} \int \|t\|^2 \; \mathrm{d}P_\theta < \infty$ \label{ass t2 integrable}
\item for each $\theta \in S$, there exists $\gamma_\theta > 0$ such that $\int (1 + \|t\|^2) \exp(\gamma_\theta \|t\|) \mathrm{d}P_\theta < \infty$  \label{ass: exponential moment}
\item $\partial_\theta^2 a(\theta)$ exists for each $\theta \in S$ and is continuous at $\theta_\star$, \label{ass: atheta2 exists}
\end{assumptionsB}
which are all reasonably mild.
Collectively \ref{ass: k uniform conts}-\ref{ass: atheta2 exists} imply \ref{ass: Hk separable}-\ref{ass: k is l1} and \ref{ass: luib}-\ref{ass: Gamma pdef}, so that the generalised moment matching estimator is both consistent and asymptotically normal.
Full derivations are reserved for \Cref{sec: verify}.

\subsection{Estimation via Kernel Stein Discrepancy}
\label{subsub: assum ksd}

Consider the minimum kernel Stein discrepancy estimator from \Cref{subsec: minimum KSD}.
Here we suppose that $\Theta$ is an open, convex and bounded subset of $\mathbb{R}^p$ but, in contrast to the last section, we do not assume here the statistical model is well-specified.
For $f : \mathbb{R}^d \rightarrow \mathbb{R}^d$, the convention $[\nabla f]_{i,j} = \partial_{x_i} f_j(x)$ will be used.
In this application, there are three main conditions to be satisfied:
\begin{assumptionsC}
\item $c(x,y), \nabla_x c(x,y), \nabla_y c(x,y)$ are bounded over $x,y \in \mathbb{R}^d$ \label{assumC: bd c}
\item $\nabla E, \nabla t, \nabla b$ and $x \mapsto \nabla_x \cdot \nabla_y c(x,y)$, $y \in \mathbb{R}^d$, are in $L^1(P_\theta)$ for each $\theta \in \Theta$  \label{assumC:  integrability}
\item $\iint c(x,y) [\nabla t(x)]^\top [\nabla t(y)]  \; \mathrm{d}P(x) \mathrm{d}P(y) \succ 0$  \label{assum: ksd pdef}
\end{assumptionsC}
The requirements \ref{assumC: bd c}-\ref{assumC:  integrability} ensure the integrability conditions of \Cref{lem: Stein kernel} are satisfied, while \ref{assum: ksd pdef} is analogous to \ref{ass: M nonsingular}, ensuring that the optimal statistical model can be identified.
In addition, the following regularity is required:
\begin{assumptionsC}
\item $(x,y) \mapsto c(x,y), \nabla_x c(x,y), \nabla_y c(x,y), \nabla_x \cdot \nabla_y c(x,y)$ are continuous \label{assum: nnk cts}
\item $\nabla t$ and $\nabla b$ are continuous \label{assum: nb, nt}
\item $\int \|\nabla t\|^4 \; \mathrm{d}P$, $\int \|\nabla t\|^3 \|\nabla b\| \; \mathrm{d}P$, $\int \|\nabla t\|^2 \|\nabla b\|^2 \; \mathrm{d}P$, $\int \|\nabla b\| \; \mathrm{d}P < \infty$, \label{assum: t2 bd}
\end{assumptionsC}
which are again reasonably mild.
Collectively \ref{assumC: bd c}-\ref{assum: t2 bd} imply \ref{ass: Hk separable} and \ref{ass: ktheta is l1}-\ref{ass: Gamma pdef}, so that the minimum kernel Stein discrepancy estimator is both consistent and asymptotically normal.
Full derivations are reserved for \Cref{sec: verify 2}

\section{Proofs for the Exponential Family Model} 
\label{app: assumptions derive}

This appendix contains full details on how our theoretical assumptions were verified for for the canonical exponential family statistical model in \Cref{subsec: discuss assumptions}.

\subsection{Estimation via Generalised Method of Moments}
\label{sec: verify}

In this section we verify that collectively \ref{ass: k uniform conts}-\ref{ass: atheta2 exists} imply \ref{ass: Hk separable}-\ref{ass: k is l1} and \ref{ass: luib}-\ref{ass: Gamma pdef}, so that the generalised moment matching estimator is both consistent and asymptotically normal.
The matrix norm $\|M\|^2 = \sum M_{i,j}^2$ will be used.
From \ref{ass: k uniform conts}, we have $|k(x,y)| \leq k_{\max}$ for all $x,y \in \mathbb{R}^p$ where $k_{\max}$ is a positive constant.

\bigskip
\noindent
\textbf{Verify \ref{ass: Hk separable}:}
The space $\mathcal{H}(k)$ is separable whenever $k$ is continuous and $\mathcal{X}$ is separable (which is the case for $\mathcal{X} = \mathbb{R}^d$), and thus \ref{ass: k uniform conts} implies \ref{ass: Hk separable}.

\bigskip
\noindent
\textbf{Verify \ref{ass: k is l1}:}
\ref{ass: k uniform conts} immediately implies \ref{ass: k is l1}.

\bigskip
\noindent
\textbf{Verify \ref{ass: luib} and \ref{ass: d moment}:}
At each $\theta \in S$, we claim the functions $x \mapsto k(x,y) \partial_\theta p_\theta(x)$ and $x \mapsto \int k(x,y) \partial_\theta p_\theta(x)  \mathrm{d}P_\vartheta(y)$ are locally uniformly integrably bounded with respect to the Lebesgue measure on $\mathbb{R}^p$.
Indeed, for the first of these functions we have a bound
\begin{align*}
\int \sup_{\vartheta \in S_{\gamma,\theta}} \| k(x,y) \partial_\vartheta p_\vartheta(x) \| \; \mathrm{d}x & \leq k_{\max} \int \sup_{\vartheta \in S_{\gamma,\theta}} \| t(x) - \partial_\vartheta a(\vartheta) \| p_\vartheta(x) \; \mathrm{d}x \\
& = k_{\max} \int \sup_{\vartheta \in S_{\gamma,\theta}} \| t(x) - \partial_\vartheta a(\vartheta) \| \frac{ p_\vartheta(x) }{ p_\theta(x) }  \; \mathrm{d}P_\theta(x) ,
\end{align*}
where $S_{\gamma,\theta} = S \cap \{\vartheta \in \mathbb{R}^p : \|\vartheta - \theta\| < \gamma_\theta\}$.
An identical bound holds also for the second function considered.
Now,
\begin{align*}
\sup_{\vartheta \in S_{\gamma,\theta}} \| t(x) - \partial_\vartheta a(\vartheta) \| & \leq \|t(x)\| + C_1 \\
\frac{ p_\vartheta(x) }{ p_\theta(x) } & = \exp\left( \langle \vartheta - \theta , t(x) \rangle - [a(\vartheta) - a(\theta)] \right) 
\leq \exp\left( C_2 \|t(x)\| + C_3 \right)
\end{align*}
where $C_1 = \sup_{\vartheta \in S_{\gamma,\theta}} \|\partial_\vartheta a(\vartheta)\|$, $C_2 = \sup_{\vartheta \in S_{\gamma,\theta}} \|\vartheta - \theta\|$ and $C_3 = \sup_{\vartheta \in S_{\gamma,\theta}} |a(\vartheta) - a(\theta)|$.
From \ref{ass: atheta2 exists}, we have $C_1, C_3 < \infty$, and $C_2 < \infty$ holds since $S$ is bounded.
Thus
\begin{align*}
\int \sup_{\vartheta \in S_{\gamma,\theta}} \| k(x,y) \partial_\vartheta p_\vartheta(x) \| \; \mathrm{d}x & \leq k_{\max} \int \left[ \| t(x) \| + C_1 \right] \exp\left( C_2 \|t(x)\| + C_3 \right)  \; \mathrm{d}P_\theta(x) < \infty ,
\end{align*}
where the final integral is finite due to \ref{ass: exponential moment}.
Thus we have established the conditions of \Cref{lem: interchange} and we may interchange $\partial_\theta$ with integration with respect to $\mathrm{d}x$ and $\mathrm{d}y$:
\begin{align*}
\partial_\theta k_\theta(x,y) & = - \int k(x,y) \partial_\theta p_\theta(x) \mathrm{d}x - \int k(x,y) \partial_\theta p_\theta(y) \mathrm{d}y + \iint k(x,y) \partial_\theta[p_\theta(x) p_\theta(y)] \mathrm{d}x \mathrm{d}y \\
& = - \int k(x,y) [t(x)-\partial_\theta a(\theta)] \mathrm{d}P_\theta(x) - \int k(x,y) [t(y) - \partial_\theta a(\theta)] \mathrm{d}P_\theta(y) \\
& \qquad + \iint k(x,y) [t(x) + t(y) - 2 \partial_\theta a(\theta)] \mathrm{d}P_\theta(x) \mathrm{d}P_\theta(y)  
\end{align*}
so, using the triangle inequality and Jensen's inequality,
\begin{align*}
\frac{\|\partial_\theta k_\theta(x,y)\|}{k_{\max}} & \leq 4 \int \|t(x)\| \mathrm{d}P_\theta(x) + 4 \|\partial_\theta a(\theta)\|  .
\end{align*}
From \ref{ass t2 integrable} we have that $\sup_{\theta \in S} \int \|t\| \mathrm{d}P_\theta < \infty$ and, since $S$ is bounded, \ref{ass: atheta2 exists} implies that $\sup_{\theta \in S} \| \partial_\theta a(\theta) \| < \infty$.
Thus $\sup_{x,y \in \mathcal{X}} \sup_{\theta \in S} \|\partial_\theta k_\theta(x,y)\| < \infty$ and, in particular, \ref{ass: luib} and \ref{ass: d moment} hold.

\bigskip
\noindent
\textbf{Verify \ref{ass: differentiable}:}
An argument analogous to the previous argument shows that \ref{ass: exponential moment} is sufficient to allow a second interchange of $\partial_\theta$ with integration with respect to $\mathrm{d}x$ and $\mathrm{d}y$:
\begin{align}
\partial_\theta^2 k_\theta(x,y) 
& = [\partial_\theta^2 a(\theta)] \iint k(x,y) \mathrm{d}P_\theta(x) - \int k(x,y) [t(x) - \partial_\theta a(\theta)] [t(x) - \partial_\theta a(\theta)]^\top \mathrm{d}P_\theta(x) \nonumber \\
& \qquad + [\partial_\theta^2 a(\theta)] \iint k(x,y) \mathrm{d}P_\theta(y) - \int k(x,y) [t(y) - \partial_\theta a(\theta)] [t(y) - \partial_\theta a(\theta)]^\top \mathrm{d}P_\theta(y) \nonumber \\
& \qquad - 2 [\partial_\theta^2 a(\theta)] \iint k(x,y) \mathrm{d}P_\theta(x) \mathrm{d}P_\theta(y)  \label{eq: dtheta2 big equal} \\
& \qquad + \iint k(x,y) [t(x) + t(y) - 2 \partial_\theta a(\theta)] [t(x) + t(y) - 2 \partial_\theta a(\theta)]^\top \mathrm{d}P_\theta(x) \mathrm{d}P_\theta(y) \nonumber
\end{align}
where the quantities exist on $\theta \in S$ due to \ref{ass t2 integrable} and \ref{ass: atheta2 exists}, showing that \ref{ass: differentiable} is satisfied.

\bigskip
\noindent
\textbf{Verify \ref{ass: uniform continuity}:}
Since \eqref{eq: dtheta2 big equal} takes the form
\begin{align*}
\partial_\theta^2 k_\theta(x,y) & = \int k(x,u) f(u,\theta) \; \mathrm{d}u + \int k(y,u) f(u,\theta) \; \mathrm{d}u - \iint k(u,v) g(u,v,\theta) \; \mathrm{d}u \mathrm{d}v \\
f(u,\theta) & := \left\{ [\partial_\theta^2 a(\theta)] - [t(u) - \partial_\theta a(\theta)] [t(u) - \partial_\theta a(\theta)]^\top \right\} p_\theta(u) \\
g(u,v,\theta) & = \left\{ 2 [\partial_\theta^2 a(\theta)] - [t(u) + t(v) - 2 \partial_\theta a(\theta)] [t(u) + t(v) - 2 \partial_\theta a(\theta)]^\top  \right\} p_\theta(u) p_\theta(v) ,
\end{align*}
we have that
\begin{align*}
\sup_{x,y \in \mathcal{X}} \frac{ \| \partial_\theta^2 k_\theta(x,y) - [ \partial_\theta^2 k_\theta(x,y) |_{\theta=\theta_\star} ] \|}{k_{\max}} & \leq 2 \int | f(u,\theta) - f(u,\theta_\star) | \; \mathrm{d}u \\
& \qquad + \iint |g(u,v,\theta) - g(u,v,\theta_\star)| \; \mathrm{d}u \mathrm{d}v
\end{align*}
Since $\partial_\theta^2 a(\theta)$ is continuous at $\theta_\star$ from \ref{ass: atheta2 exists} and $\int t \; \mathrm{d}P_\theta$ and $\int t t^\top \; \mathrm{d}P_\theta$ are continuous at $\theta_\star$ from \ref{assum: t and tt cts}, it follows that both integrals on the right hand side vanish as $\theta \rightarrow \theta_\star$, meaning \ref{ass: uniform continuity} is established.

\bigskip
\noindent
\textbf{Verify \ref{ass: d2 moment diag} and \ref{ass: d2 moment}:}
Application of the triangle inequality to \eqref{eq: dtheta2 big equal}, and collecting together terms, yields
\begin{align*}
\frac{\|\partial_\theta^2 k_\theta(x,y)\|}{k_{\max}} & \leq 4 \int \|t(x)\|^2 \mathrm{d}P_\theta(x) + 12 \|\partial_\theta a(\theta) \| \int \|t(x)\| \mathrm{d}P_\theta(x) \\
& \qquad + 4 \|\partial_\theta^2 a(\theta) \| + 6 \|\partial_\theta a(\theta) \|^2  + 2 \left( \int \|t(x)\| \mathrm{d}P_\theta(x) \right)^2  .
\end{align*}
Thus \ref{ass t2 integrable} and \ref{ass: atheta2 exists} imply $\sup_{x,y \in \mathcal{X}} \left[ \left. \|\partial_\theta^2 k_\theta(x,y)\| \right|_{\theta = \theta_\star} \right] < \infty$, so \ref{ass: d2 moment diag} and \ref{ass: d2 moment} hold.

\bigskip
\noindent
\textbf{Verify \ref{ass: Gamma pdef}:}
Integrating both arguments of \eqref{eq: dtheta2 big equal} with respect to $P = P_{\theta_\star}$, we obtain
\begin{align*}
\Gamma & = \frac{1}{2} \left. \iint \partial_{\theta}^2 k_{\theta}(x,y)  \mathrm{d}P(x) \mathrm{d}P(y) \right|_{\theta = \theta_\star} \\
& = \iint k(x,y) [t(x) - \partial_\theta a(\theta)] [t(y) - \partial_\theta a(\theta)]^\top \; \mathrm{d}P(x) \mathrm{d}P(y) 
= M M^\top \succ 0 ,
\end{align*}
where the final equality uses the definition of $M$ and the fact that $k(x,y) = \langle \phi(x) , \phi(y) \rangle$, and the positive definiteness follows from \ref{ass: M nonsingular}.

\bigskip
\noindent
\textbf{Verify $\theta_n {\normalfont \rightarrowas} \theta_\star$:}
Through the above, we have proven that if $\theta_n \rightarrowas \theta_\star$, then the fluctuations of $\theta_n$ are asymptotically normal.
However, we have not yet commented on whether $\theta_n \rightarrowas \theta_\star$.
Since we are in the well-specified regime, we have from \eqref{eq: dtheta2 big equal}
\begin{align*}
\left. \partial_\theta^2 D_k(P,P_\theta) \right|_{\theta = \theta^*} = \left. \iint \partial_{\theta}^2 k_{\theta}(x,y)  \mathrm{d}P(x) \mathrm{d}P(y) \right|_{\theta = \theta_\star} = 2 \Gamma \succ 0 ,
\end{align*}
showing that $\theta_\star$ is a local minimum of the kernel discrepancy.
Since we assumed $\theta_\star$ is unique, and from \ref{assm: cts features} the map $\theta \mapsto D_k(P,P_\theta)$ is continuous, the conditions of \Cref{thm: strong several} are satisfied, establishing that $\theta_n \rightarrowas \theta_\star$.

\subsection{Estimation via Kernel Stein Discrepancy}
\label{sec: verify 2}

In this section we verify that collectively \ref{assumC: bd c}-\ref{assum: t2 bd} imply \ref{ass: Hk separable} and \ref{ass: ktheta is l1}-\ref{ass: Gamma pdef}, so that the minimum kernel Stein discrepancy estimator is both consistent and asymptotically normal.
From \ref{assumC: bd c}, we have $|c(x,y)|, \|\nabla_x c(x,y)\|, \|\nabla_y c(x,y)\| \leq c_{\max}$ for all $x,y \in \mathbb{R}^d$ where $c_{\max}$ is a positive constant.
From \ref{assumC:  integrability} the integrability conditions of \Cref{lem: Stein kernel} are satisfied, so that $k(\cdot,\cdot;\theta) = k_\theta(\cdot,\cdot)$ for each $\theta \in \Theta$.

\bigskip
\noindent
\textbf{Verify \ref{ass: Hk separable}:}
From \ref{assum: nnk cts} and \ref{assum: nb, nt}, $(x,y) \mapsto k(x,y;\theta) = k_\theta(x,y)$ is continuous, since
\begin{align*}
k_\theta(x,y) & = \nabla_x \cdot \nabla_y c(x,y) + \langle \nabla_x c(x,y) , \nabla t(y) \theta + \nabla b(y) \rangle \\
& \qquad + \langle \nabla_y c(x,y) , \nabla t(x) \theta + \nabla b(x) \rangle + c(x,y) \langle \nabla t(x) \theta + \nabla b(x) , \nabla t(y) \theta + \nabla b(y) \rangle .
\end{align*}
The space $\mathcal{H}(k)$ is separable whenever $k$ is continuous and $\mathcal{X}$ is separable (which is the case for $\mathcal{X} = \mathbb{R}^d$), and thus $\mathcal{H}(k)$ is separable and we have \ref{ass: Hk separable}.

\bigskip
\noindent
\textbf{Verify \ref{ass: differentiable}:}
First we compute
\begin{align}
\partial_{\theta_i} k_\theta(x,y) & = \langle \nabla_x c(x,y) , [\nabla t(y)]_{\cdot,i} \rangle + \langle \nabla_y c(x,y) , [\nabla t(x)]_{\cdot,i} \rangle \label{eq: ktheta express for ksd} \\
& \qquad + c(x,y) \langle [\nabla t(x)]_{\cdot,i} , \nabla t(y) \theta + \nabla b(y) \rangle + c(x,y) \langle \nabla t(x) \theta + \nabla b(x) , [\nabla t(y)]_{\cdot, i} \rangle  , \nonumber
\end{align}
which is a linear function of $\theta$, so that we trivially have \ref{ass: differentiable}.

\bigskip
\noindent
\textbf{Verify \ref{ass: vstat1}, \ref{ass: vstat2}, \ref{ass: luib}, \ref{ass: d moment}:}
The triangle and Cauchy--Schwarz inequalities applied to \eqref{eq: ktheta express for ksd} yield
\begin{align*}
\frac{\|\partial_{\theta_i} k_\theta(x,y)\|}{c_{\max}} & \leq \|\nabla t(x) \| + \| \nabla t(y) \| + \|\nabla t(x) \| \left( \| \nabla t(y) \| \sup_{\theta \in \Theta} \|\theta\| + \|\nabla b(y)\| \right) \\
& \qquad + \|\nabla t(y) \| \left( \| \nabla t(x) \| \sup_{\theta \in \Theta} \|\theta\| + \|\nabla b(x)\| \right) ,
\end{align*}
where $\sup_{\theta \in \Theta} \|\theta\| < \infty$ since we assumed that $\Theta$ was bounded.
Thus \ref{ass: vstat1} holds from \ref{assum: t2 bd}, and both \ref{ass: vstat2} and \ref{ass: luib} hold from \ref{assum: t2 bd}.
Squaring, we see also that \ref{ass: d moment} holds from \ref{assum: t2 bd}. 

\bigskip
\noindent
\textbf{Verify \ref{ass: uniform continuity}, \ref{ass: d2 moment diag}, \ref{ass: d2 moment}:}
Differentiating \eqref{eq: ktheta express for ksd} again,
\begin{align*}
\partial_{\theta_i} \partial_{\theta_j} k_\theta(x,y) & = 2 c(x,y) \langle [\nabla t(x)]_{\cdot,i} , [\nabla t(y)]_{\cdot,j} \rangle
\end{align*}
so \ref{ass: uniform continuity} holds trivially, since these functions are constant in $\theta$.
Further,
\begin{align*}
\frac{\|\partial_{\theta_i} \partial_{\theta_j} k_\theta(x,y) \|}{c_{\max}} & \leq 2 \| \nabla t(x) \| \| \nabla t(y) \|
\end{align*}
so \ref{ass: d2 moment diag} and \ref{ass: d2 moment} hold from \ref{assum: t2 bd}.

\bigskip
\noindent
\textbf{Verify \ref{ass: Gamma pdef}:}
Finally,
\begin{align*}
\Gamma = \frac{1}{2} \left. \iint \partial_\theta^2 k_\theta(x,y) \; \mathrm{d}P(x) \mathrm{d}P(y) \right|_{\theta = \theta_\star} 
= \iint c(x,y) [\nabla t(x)]^\top [\nabla t(y)]  \; \mathrm{d}P(x) \mathrm{d}P(y) \succ 0
\end{align*}
from \ref{assum: ksd pdef}, establishing \ref{ass: Gamma pdef}.

\bigskip
\noindent
\textbf{Verify $\theta_n {\normalfont \rightarrowas} \theta_\star$:}
Since we have verified \ref{ass: ktheta is l1}, \ref{ass: vstat1} and \ref{ass: vstat2}, we have from \Cref{lem: uniform 2} and an essentially identical argument to \Cref{thm: basic theorem} that $D_k(P,P_{\theta_n}) \rightarrowas D_k(P,P_{\theta_\star})$.
Since the squared kernel Stein discrepancy
$$
\theta \mapsto D_k(P,P_\theta)^2 = \iint k_\theta(x,y) \; \mathrm{d}P(x) \mathrm{d}P(y)
$$ 
is a quadratic with minimum $\theta_\star$ and Hessian at $\theta_\star$ equal to $2 \Gamma \succ 0$, it is necessarily the case that $\theta_n \rightarrowas \theta_\star$ in the $n \rightarrow \infty$ limit.

%\bibliographystyle{abbrvnat}
%\bibliography{bibliography}

\end{document}